\documentclass{article}

\usepackage{fullpage}
\usepackage{mymacros}

\usepackage{authblk}
\author[1]{Sijing Tu\footnote{\url{sijing@kth.se}}}
\author[1]{Stefan Neumann\footnote{\url{neum@kth.se}}}
\affil[1]{KTH Royal Institute of Technology, Stockholm, Sweden}

\date{}

\begin{document}

\title{A Viral Marketing-Based Model For Opinion Dynamics in Online Social Networks}

\maketitle 

\begin{abstract}
\noindent
Online social networks provide a medium for citizens to form opinions on different societal issues,
and a forum for public discussion.
They also expose users to viral content, such as breaking news articles.  
In this paper, we study the interplay between these two aspects: 
\emph{opinion formation} and \emph{information cascades} in online social networks.
We present a new model that allows us to quantify how users change their opinion
as they are exposed to viral content.
Our model is a combination of the popular Friedkin--Johnsen model for opinion dynamics and 
the independent cascade model for information propagation. 
We present algorithms for simulating our model, and
we provide approximation algorithms for optimizing certain network indices, 
such as the sum of user	opinions or the disagreement--controversy index; 
our approach can be used to obtain insights into how much viral content 
can increase these indices in online	social networks. 
Finally, we evaluate our model on real-world datasets. We show experimentally
that marketing campaigns and polarizing contents have vastly different effects on
the network: while the former have only limited effect on the polarization in
the network, the latter can increase the polarization up to 59\% even when only
0.5\% of the users start sharing a polarizing content. We believe that this finding sheds
some light into the growing segregation in today's online media.
\end{abstract}

\section{Introduction}

Online social networks are a ubiquitous part of modern societies. 
In addition to connecting users with their friends, many people also use them as
content aggregators, by following media outlets or reading articles shared by their peers.  
Clearly, engaging in social networks may impact one's opinions with respect to societal issues: 
users might adjust their opinions during a discussion based on arguments by their peers; 
or they might adapt their opinions based on new facts revealed in a news article they read.

Due to this strong connection between opinion formation and information spread,
online social networks have become the target of viral disinformation campaigns.
Popular examples include groups like QAnon who spread conspiracy theories and
fake news about topics, such as vaccination, or state actors who try to
influence election results in opposing countries.  While it is well-researched
how viral content spreads through social networks, such models do not consider
how user opinions are impacted by the viral content.  Therefore, understanding
how new information influences user opinions and being able to quantify the
impact of such disinformation campaigns is highly desirable.

A prominent model to quantify
opinion dynamics in social networks 
is the \FJfull~(FJ) model~\cite{friedkin1990social}. 
The FJ model stipulates that each user has an \emph{expressed opinion} 
that the user reveals publicly and is network-dependent,  
and an \emph{innate opinion} that is fixed and network-independent.  
However, it does not take into account how users
change their opinions based on new information (e.g., viral content)
that is disseminated in the network.

Furthermore, researchers have studied problems related to optimizing certain
opinion-based network indices, for instance, maximizing the average
opinion~\cite{gionis2013opinion} or
polarization~\cite{chen2020network,gaitonde2020adversarial}, 
or minimizing polarization and disagreement~\cite{chen2018quantifying,matakos2017measuring,musco2018minimizing}. 
In this line of work, optimization occurs by nudging the expressed or innate opinions 
of a set of seed users towards a certain direction. 
However, existing works do not specify how such nudging takes place, 
nor do they consider the interplay of opinion nudging within a more realistic setting of information~cascades.

Therefore, the current research 
\emph{either} allows us to quantify user opinions and optimize opinion-based
network indices
without taking into account viral content 
\emph{or} it allows to assess the spread of viral content without reasoning about user opinions.
This limitation leads us to the following questions:
\begin{enumerate}
\item \emph{Can we quantify how viral content influences user opinions in online social networks?}
\item \emph{Can we study the interplay between information cascades and opinion dynamics?}
\item \emph{Can we optimize opinion-based network indices by taking into account the spread of viral content?}
\end{enumerate}
\para{Our contributions.}
We answer the above questions affirmatively by proposing a new model
that combines the \FJfull model~\cite{friedkin1990social} and 
the influence-maximization framework of Kempe et al.~\cite{kempe2015maximizing}.  
To the best of our knowledge, 
our model is the first that allows to quantify the impact of viral content on user opinions.

Contrasted with the vanilla FJ model in which the innate user opinions are ``fixed'' 
but the expressed opinions are changing over time based on user interactions, 
our model considers the viral content that is shared in the network, 
and it assumes that for users who are exposed to this  content,
\emph{there is a probability that their innate opinion changes}.
This could be the case, for example, when a user reads an article that makes them reconsider
their stance on a certain topic.  

When a subset of users change their innate opinions, 
their expressed opinions will also be modified, 
which in turn will have an impact on 
the whole network via the FJ opinion dynamics.
Thus, \emph{the change of the innate opinions of few users may have an
impact on the whole network}: 
even when a user's innate opinion does not change by the viral content 
(because they ignore it or the content never reaches them), 
they still might change their expressed opinion due to ``peer-pressure.''

Our model connects these two phenomena: 
it allows us to understand how viral content can impact individual users,
while it also enables us to study how individual behavior ripples through the network and
affects the overall discussion.

We consider two different types of content: \emph{non-polarizing} and \emph{polarizing}. 
For non-polarizing content, such as marketing campaigns, the innate opinions of the users can only increase.
For polarizing content, we take into account the \emph{backfire~effect} \cite{nyhan2010corrections}: 
interaction with opposing content may lead to a decrease in a user's innate opinion.
This could be the case, e.g., in political campaigns when a party runs an ad
that makes their supporters react positively but their opponents react negatively.

From an algorithmic view point, we present methods for simulating our model.
Additionally, we consider the problem of optimizing certain opinion-based
network indices.
We present a greedy $(1-1/e-\varepsilon)$-approximation algorithm for maximizing 
the sum of user opinions for non-polarizing content. 
We also present algorithms for maximizing the controversy and 
the disagreement--controversy indices~\cite{musco2018minimizing} for non-polarizing content; 
our algorithms have data-dependent approximation ratios. 
Finally, we provide heuristics
for maximizing other indices, such as polarization and disagreement, 
for non-polarizing and polarizing~content.

To obtain our optimization algorithms, we build upon the reverse-reachable sets
framework~\cite{borgs2014maximizing,tang2014influence,tang2015influence}.  
One challenge is that, in our setting, the arising optimization problems are based
on quadratic forms and, therefore,  we have to extend the reverse reachable set
framework to this more general setting.  

We evaluate our methods on real-world data. Our
experiments reveal a striking difference between non-polarizing and polarizing content. 
On the one hand, non-polarizing content can significantly increase
the sum of user opinions, %
but it has limited impact on the polarization %
and sometimes even \emph{decreases} it.  
The situation for polarizing content is the opposite: 
it barely increases the sum of user opinions but it can
increase the polarization significantly. 
We see that even when only 0.5\% of the users start sharing a polarizing content, 
the network polarization increases by more than 20\% on average and can rise up to 59\%.
We believe that this finding provides an explanation for the growing polarization that
can be witnessed in modern day's online media.  

We present the proofs of our claims in the appendix.

\section{Related work}
\label{sec:related}

Our aim %
is to quantify how viral content impacts user opinions in social networks.
Naturally, our approach builds on existing models for opinion dynamics and information~cascades.

Opinion dynamics have been studied in different research areas, including
psychology, social sciences, and
economics~\cite{castellano2009statistical,jackson2008social}.  Here we build on
the popular Friedkin--Johnsen (FJ) model~\cite{friedkin1990social}, which is an
extension of a classic model by DeGroot~\cite{degroot1974reaching}.  Many
extensions of the FJ model have been proposed. For example, Amelkin et
al.~\cite{amelkin17polar} assume that the innate user opinions change over
time based on the expressed opinions. We refer to the discussion
in~\cite{amelkin17polar} for other related models. However, these works do not
take into account the changes of innate opinions based on exposition to viral
contents.

Recent work used these models for understanding properties 
of opinion dynamics and formulating natural optimization problems.
Bindel et al.~\cite{bindel2015bad} analyze the ``price of anarchy'' in the FJ model
by considering as cost the internal conflict of the individuals in the network
and comparing the cost at equilibrium and the social optimum. 
Gionis et al.~\cite{gionis2013opinion} maximize the sum of opinions
of the network users. 
Other works study the problem of measuring and reducing polarization of opinions, 
or other disagreement indices, in the FJ model~\cite{chitra2019understanding,matakos2017measuring, musco2018minimizing,zhu2021minimizing}, 
while adversarial settings have also been considered, aiming to quantify the power of an adversary 
seeking to induce discord in the
model~\cite{chen2018quantifying,chen2020network,gaitonde2020adversarial}.

To model information cascades, we build on the popular in\-de\-pend\-ent-cascade
model of Kempe et al.~\cite{kempe2015maximizing}.  Many extensions and variants
of this model have been proposed. For example, Sathanur et
al.~\cite{sathanur18exploring} incorporate intrinsic user activations based on
external sources. 
Another popular extension is the topic-aware cascade model
by Barbieri et al.~\cite{barbieri2013topic}, and has various applications
including social advertising~\cite{aslay2015viral,aslay2017revenue}.
While such models allow to model information spread, they do
not allow to quantify how these change the user opinions.

The backfire effect, 
the tendency of individuals to hold firmly on their beliefs when faced with
factual corrections, has been observed in political sciences~\cite{nyhan2010corrections}, 
but has not been studied extensively in computational social sciences.
Exceptions are the works of Chen et al.~\cite{chen19opinion}, 
who incorporate backfire in an opinion-dynamics model for biased assimilation~\cite{dandekar2013biased}, 
and Hirakura et al.~\cite{hirakura2020model}, 
who propose a model of polarization that incorporates empathy and repulsion.

Our optimization algorithms rely on reverse reachable sets, introduced by Borgs
et al.~\cite{borgs2014maximizing}, and improved by subsequent
techniques~\cite{tang2014influence,tang2015influence}.  We extend these ideas to
our setting, to obtain algorithms for objectives that include quadratic terms.
We note that the activity-maximization task defined by Wang et al.~\cite{wang2017activity} 
is a special case of our setting. 
We apply the ``sandwich'' framework~\cite{lu2015competition} to obtain 
data-dependent approximation guarantees for some of our objectives. 
For the efficient computation of our objective functions, we use the methods by
Xu et al.~\cite{xu2021fast} based on Laplacian solvers.

To our knowledge, this is the first work that studies how user opinions change
due to viral information in online social networks.
For fixed user opinions, Monti et al.~\cite{monti2021learning} studied how
cascades spread through the network, based on the user opinions and the topics
of the contents.

\section{Preliminaries}
\label{sec:preliminaries}

Let $\graph = (V, E, w)$ be an un\-directed weighted graph, with $n=\abs{V}$
nodes and edge weights $w \colon E \rightarrow \Real_{>0}$.  
We let $N(u)$ denote the set of neighbors of $u\in V$.
The Laplacian of $\graph$ is $\laplacian = \m+D - \m+W$,
where $\m+D$ is the $n\times n$ diagonal
matrix with $\m+D_{u,u} = \sum_{v\in N(u)}w(u,v)$ for all $u\in V$
and $\m+W$ is the $n\times n$ matrix with
$\m+W_{u,v} = w_{u,v}$ for all $u,v\in V$. 

\spara{\FJfull (FJ) model.}
In the FJ model, we are given a weighted un\-directed graph $G=(V,E,w)$ with $n$ nodes. 
Each node~$u$ %
corresponds to a user of a social network. 
Each user $u$ has an \emph{expressed} opinion $z_u\in[0,1]$, which depends on the network, 
and a fixed \emph{innate} opinion $s_u\in[0,1]$. %
We write $\begop\in[0,1]^n$ and $\finop\in[0,1]^n$ to denote the vectors of innate and expressed opinions.

The expressed opinions are updated in rounds.
More concretely, let $\begop$ be the vector of innate opinions, and
$\finop^{(t)}$ be the vector of expressed opinions at time~$t$. 
The update rule is given by 
\begin{equation}
\label{eq:opinion-round}
\finop^{(t+1)} = (\m+D + \ID)^{-1}(\m+W\finop^{(t)} + \begop).
\end{equation}
Taking the limit $t \rightarrow \infty$, 
the expressed opinions converge to
\begin{equation}
\label{eq:opinion-equilibrium}
  \finop^* = (\ID + \laplacian)^{-1}\begop. 
\end{equation}

\begin{table}[t!]
\begin{center}
\begin{small}
\caption{Matrices of the different indices.}
\label{table:index}
 \begin{tabular}{ lcc } 
 \toprule
 \textbf{Index} & \textbf{Notation} & \textbf{Matrix} \\ 
 \midrule
 Polarization & \MasIdx{\PolIdx{}} & $(\ID + \laplacian)^{-1} (\ID - \frac{\ind \ind^\intercal}{n}) (\ID + \laplacian)^{-1}$ \\ 
 Disagreement & \MasIdx{\DisIdx{}} & $(\laplacian + \ID)^{-1} \laplacian (\laplacian + \ID)^{-1}$ \\ 
 Internal conflict & \MasIdx{\IntIdx{}} & $(\laplacian + \ID)^{-1} \laplacian^2 (\laplacian + \ID)^{-1}$ \\ 
 Controversy & \MasIdx{\ConIdx{}} & $(\laplacian + \ID)^{-2}$ \\ 
 Disagreement--controversy & \MasIdx{\DisConIdx{}} & $(\laplacian + \ID)^{-1}$ \\ 
 \bottomrule
\end{tabular}
\end{small}
\end{center}   
\end{table}

We study the following popular network indices in our
model, where the matrices of the quadratic forms are as defined in
Table~\ref{table:index}:
\begin{itemize}
\item
\emph{sum of user opinions}, which is given by $\SumIdx{\begop} = \ind^{\intercal} \begop$,  
and it is well-known that $\SumIdx{\begop} = \ind^{\intercal} \finop$;
\item 
\emph{polarization~\cite{musco2018minimizing}}  
$\PolIdx{G, \begop} = \sum_{u\in V} (\efinop{u}^* - \bar{\finop})^2 =
\begop^\intercal \MasIdx{\PolIdx{}} \begop$, where
$\bar{\finop} = \frac{1}{n} \sum_{u\in V} \efinop{u}^*$ is the average user opinion;
\item
\emph{disagreement~\cite{musco2018minimizing}}
$\DisIdx{G, \begop} = \sum_{(u,v)\in E} w_{u,v} (\efinop{u}^*-\efinop{v}^*)^2 
= \begop^\intercal \MasIdx{\DisIdx{}} \begop$;
\item
\emph{internal conflict~\cite{chen2018quantifying}}  
$\IntIdx{G, \begop} = \sum_{u\in V} (\ebegop{u} - \efinop{u}^*)^2 = 
\begop^\intercal \MasIdx{\IntIdx{}} \begop$;
\item
\emph{controversy~\cite{chen2018quantifying, matakos2017measuring}} 
$\ConIdx{G, \begop} = \sum_{u\in V} (\efinop{u}^*)^2 = \begop^\intercal \MasIdx{\ConIdx{}} \begop$; 
and
\item
\emph{disagreement-controversy~\cite{xu2021fast, musco2018minimizing}}\! 
$\DisConIdx{G, \begop}\!=\!\begop^\intercal \MasIdx{\DisConIdx{}} \begop\!=\!\ConIdx{G, \begop}\!+\!\DisIdx{G, \begop}$.
\end{itemize}

\section{Modelling the Influence of Viral Content on User Opinions}
\label{sec:model}

We formally introduce our model
in Sec.~\ref{sec:spread} and show how it can be
simulated in Sec.~\ref{sec:equivalence}.

\subsection{The \spread}
\label{sec:spread}

Following the independent cascade model~\cite{kempe2015maximizing},
we assume that %
a value $p_{u,v}\in[0,1]$ encodes the
probability that user~$v$ reacts to content received from user~$u$;
we allow that $p_{u,v}\neq p_{v,u}$.
Furthermore, we introduce parameters $\varepsilon,\delta > 0$, as explained below.

As per the FJ model, each user $u$ has an expressed opinion $z_u$ and an innate opinion $s_u$.
Additionally, each user has a \emph{state}, 
which is either \stinactive, \stignore, \stacknowledge or \stspread. 
We order the states by 
``\stinactive $<$ \stignore $<$ \stacknowledge $<$ \stspread'' 
and we follow the convention that when a user changes their state, 
they can only pick one that is higher with respect to this ordering.
An illustration of the model with respect to state transitioning and 
actions performed for a single node $v$ is provided in Figure~\ref{figure:sam-illustration}.

Our model proceeds in rounds.  
Initially, in round~$0$, there are $k$~users whose state is \stspread and all other users are \stinactive; 
in later rounds, it is possible that users change their state. 
We will refer to the users whose initial state is \stspread as \emph{seed nodes}.   
Each round $t>0$ has two \emph{phases}:\footnote{We note that for our model and our analysis it is not necessary to consider two phases, we only make this assumption for the sake of better exposition.
We could as well assume that both phases are interleaved and happen	simultaneously.}
In the first phase, the users update their expressed opinions. 
In the second phase, the viral content is spread through the network and 
users may change their state and their innate opinion. 
We describe both phases below.

\begin{figure}[t]
\begin{center}
\includegraphics[width=0.7\columnwidth]{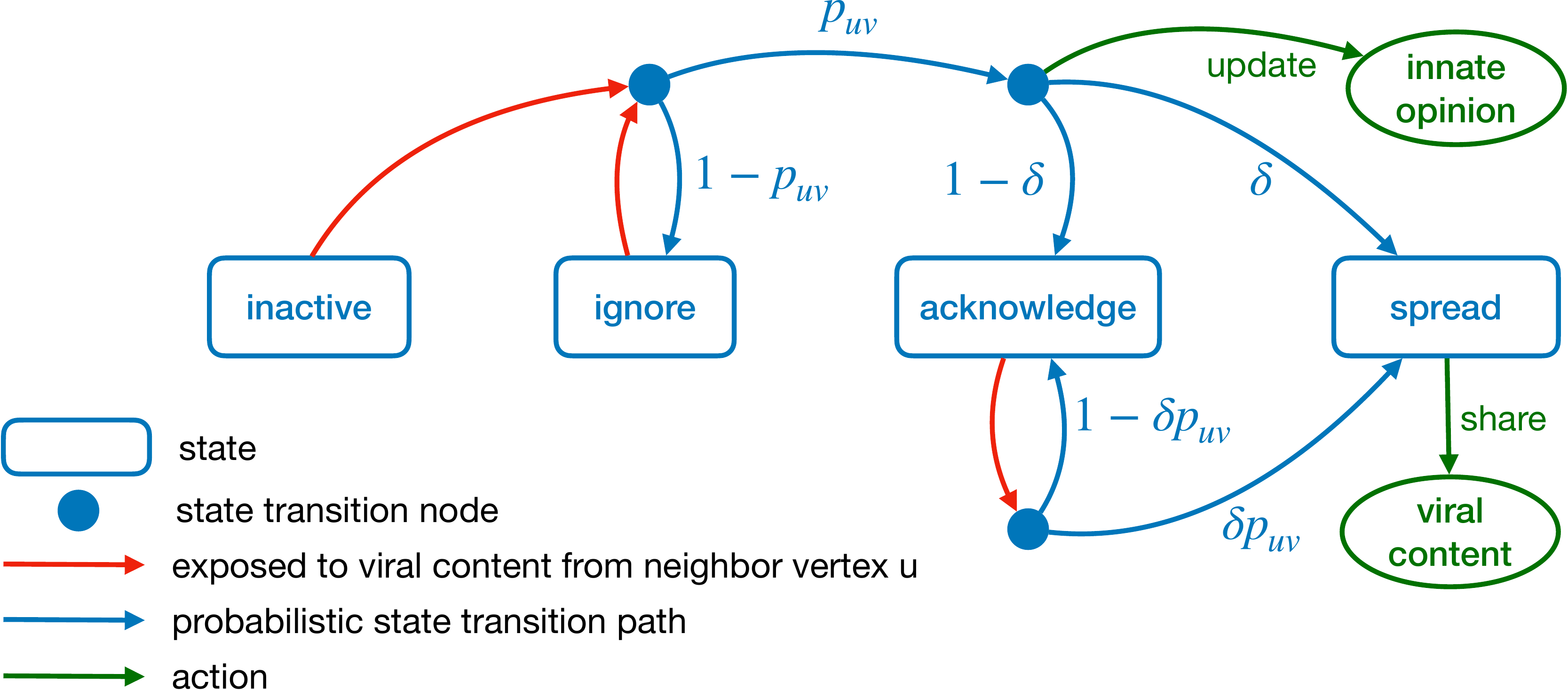}
\caption{\label{figure:sam-illustration}An illustration of the \spread with respect to 
state transitioning and actions performed for a single node $v$.
In the initial round, $k$ seed nodes are in state \stspread, 
while the rest of nodes are in state \stinactive.
}
\end{center}
\end{figure}

\spara{Phase~I: Updating user opinions.} 
The users update their expressed opinion as in the FJ model,
i.e., according Eq.~\eqref{eq:opinion-round}. 

\spara{Phase~II: Information spreading.} 
Consider round $t>0$.
Let $U$ denote the set of users who have changed their state to \stspread in round~$t-1$.
If $U=\emptyset$, we consider Phase~II finished. 
Otherwise ($U\neq\emptyset$), each user $u\in U$ shares the viral content with all of its neighbors.
When a neighbor $v$ of $u$ is exposed to the viral content, 
it switches to a new state and possibly adjusts its innate opinion. 
If $v$ is in state \stinactive or \stignore, then this is done as follows:
\begin{itemize}
	\item 
    With probability $\delta p_{uv}$, user $v$ switches to state \emph{spread};
		$v$~adjusts its innate opinion (described below) and shares the content
		in the next round with its neighbors.
	\item 
    With probability $(1-\delta) p_{uv}$, user $v$ switches to state \emph{acknowledge};
		$v$~adjusts its innate opinion (described below) but does \emph{not}
		share the content with its neighbors in the next round.
	\item 
		With probability $1-p_{uv}$, user $v$ switches to state \emph{ignore};
		$v$~performs no action (i.e., $v$ does not try to share the content and
		$v$ also does not adjust its innate opinion).
\end{itemize}
If $v$ is in state \stacknowledge then it switches to state \stspread with
probability $\delta p_{uv}$ and remains in state \stacknowledge with probability
$1-\delta p_{uv}$. 
Finally, if $v$ is in state \stspread then $v$ always stays
in this state. In both of these cases, $v$ does not adjust its opinion again.

The above process ensures that the state ordering defined before
is obeyed during state switching and that each user adjusts its innate opinion
at most once.  Finally, note that our model is a generalization of the
independent cascade model if $\delta=1$.

\spara{Adjusting innate opinions.}
Now we describe how users change their innate opinions when they are exposed to viral content.

Consider user~$u$ at state \stinactive or \stignore whose new state becomes \stacknowledge or \stspread.  
Then the innate opinion $\begop_u$ changes to a new value $\s_u$. 
We consider two scenarios:
\begin{itemize}
	\item \emph{Marketing campaign:}
		The user's opinion becomes more positive after seeing the content, i.e.,
	   	$\s_u = \min\{\begop_u + \varepsilon, 1\}$ for the parameter
		$\varepsilon > 0$ from above. Here, we use the $\min$-operation to ensure
		that the new opinion $\s_u$ is in the interval $[0,1]$.
	\item \emph{Polarizing campaign with backfire:}
		In a polarizing campaign we assume that while some users embrace the
		content, others will find it repelling.
    More concretely, we assume that there is a threshold $\tau\in[0,1]$ such that:
		(1)~If $\begop_u \geq \tau$ then $u$ embraces the content and adjusts
		its opinion to $\s_u = \min\{\begop_u + \varepsilon, 1\}$;
		(2)~If $\begop_u < \tau$, then $u$ finds the content repelling and
		adjusts $\s_u = \max\{0, \begop_u - \varepsilon\}$. 
\end{itemize}

Note that in our model with polarizing campaigns, users can still share a
content they dislike.  While this might seem non-intuitive at first, we believe
that it is a realistic behavior: users who oppose a certain content often share
it together with a counter-argument.  We remark that our model can be
modified to avoid this.

Finally, observe that $\s$ is a random vector that depends on the outcome of
the information spread. However, once we fix the
randomness of the information spread, $\s$ becomes
deterministic. This will be a useful property in the following.

\spara{Possible model extensions.} We note that our model is quite general and
can be extended in various ways. 
First, we modelled information cascades via the independent cascade
model~\cite{kempe2015maximizing}. However, our model and our result from
Lemma~\ref{lem:equivalence} also hold if we used the linear threshold
model~\cite{kempe2015maximizing}, topic-aware versions of the
independent cascade and linear threshold models~\cite{barbieri2013topic}, as well
as intrinsic user activations~\cite{sathanur18exploring}.
In particular, using the linear threshold model could lead to insights on
contents that spread via complex contagion~\cite{centola2007complex,guilbeault2018complex}.
Second, above we considered the two relatively simple settings for adjusting the
innate user opinions $\s_u$. However, we note that Lemma~\ref{lem:equivalence}
below generalizes to the setting when $\s_u$ is any user-defined function of
$\begop_u$.

\subsection{Equivalence with the two-stage model}
\label{sec:equivalence}

While the \spread is easy to explain and motivate by real-world scenarios, 
it is not clear how to simulate it efficiently. 
If we implemented the model as described above, we would have to update the expressed opinions 
in each round, which can be costly. 
To avoid this, we now introduce a new model that can be
simulated more efficiently, and we show that it produces an identical distribution over
the innate and expressed opinions.

\spara{The two-stage model.}
Our simplified model also proceeds in rounds, but it performs the information
spreading and the updating of the user opinions in two sequential \emph{stages}. More
concretely, in each round of the \emph{first stage}, we perform the information
spreading process that is described in Phase~II above 
(and we do \emph{not} perform the updating of the expressed opinions as per Phase~I). 
In this process, some of the users' innate opinions and their states might change. 
When after a round no new users have changed their state to \stspread, we start
the second stage. 
In each round of the \emph{second stage}, we perform the same update of the
expressed opinions as described in Phase~I above (and we do \emph{not} run Phase~II).

\spara{Simulating the two-stage model.}
Next, we discuss why the two-stage model is well-suited for efficient
simulations.
First, observe that the first stage stops when no node changed their state to
\stspread in the previous round, i.e., when $U=\emptyset$.  Therefore, the first
stage can have at most $O(n)$~rounds (since each of
the $n$ users can take at most four different states and since we assumed that
users only increase their state with respect to the ordering of the states).
Additionally, in each round of the first stage, we can update the states of the
nodes by iterating over all nodes $v$ that are neighbors of a node
$u\in U$ and then updating the state of $v$ with the probability described in
Phase~II. Since each user can become a spreader only once, the time for
executing \emph{all} rounds of the first stage is $O(m)$, where $m$ is the
number of edges in the graph.

Second, recall that the adjusted innate opinions $\s_u$ only depend on the
randomness from the information spreading process. %
Therefore, after the first stage %
finished, the innate opinions $\s_u$ are 
fixed. Thus, %
we can assume that the vector $\s_u$ is known and the
expressed equilibrium opinions are given by $\z^* = (I+L)^{-1} \s$.
The time complexity for the second stage is the time required to solve for $\z^*$.

\spara{Equivalence of the opinion distributions.}
It remains to show that both models induce the same
distribution over the innate and expressed opinions. 
To show this equivalence, we assume that both models are run with the same input graphs, 
the same seed nodes, and the same (non-adjusted) innate opinions $\begop$. 
Now let us denote the adjusted innate opinions generated by the \spread by $\s_u$ and
those by the two-stage model by $\ss_u$. 
Recall that both $\s_u$ and $\ss_u$ are random vectors that depend 
only on the outcome of the information-spreading process. 
The following lemma asserts 
the equivalence of the two models.
The proof is presented in Appendix~\ref{sec:proof-equivalence}.
\begin{lemma}
\label{lem:equivalence}
	For all $a\in[0,1]^n$, $\Pr[\s = a] = \Pr[\ss = a]$.
	Furthermore, let $\z^*=(I+L)^{-1} \s$ and $\zz^*=(I+L)^{-1} \ss$ be the
	equilibrium opinions. 
	Then $\Pr[\z^* = b] = \Pr[\zz^* = b]$ for all $b\in[0,1]^n$.
\end{lemma}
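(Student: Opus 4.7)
The plan is to exhibit an explicit coupling between the two models, show that under it the adjusted innate opinions coincide pointwise, and then deduce both distributional equalities. The key structural observation is that Phase~II never reads the expressed opinions: a user's state transition in a given round depends only on its current state, on the set $U$ of neighbors who have just become \stspread{}, and on independent coin flips with parameters $p_{uv}$ and $\delta$. Likewise, when a user $u$ first leaves \stinactive{} or \stignore{}, the adjustment of $\begop_u$ to $\s_u$ depends only on the original value $\begop_u$, the parameter $\varepsilon$, and (in the polarizing case) the threshold $\tau$; and since each user adjusts its innate opinion at most once, no earlier adjustment by another user ever feeds back into this computation.

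I would therefore couple the two processes by attaching, once and for all, a single Bernoulli$(p_{uv})$ variable to each directed edge $(u,v)$ (for the event ``$v$ reacts to a given share from $u$'') and, independently, a Bernoulli$(\delta)$ variable (for ``conditional on reacting, $v$ becomes a spreader rather than an acknowledger''), and using the \emph{same} collection of variables in both models. Under this coupling I would prove by induction on the round number $t$ that after $t$ rounds of Phase~II of the \spread{}, and after $t$ rounds of the first stage of the two-stage model, every user's state and every user's (possibly adjusted) innate opinion agree across the two processes. The base case is immediate from the shared initial configuration, and the inductive step follows from the structural observation above: the round-$(t{+}1)$ updates in both models are the \emph{same} measurable functions of objects that coincide by the inductive hypothesis, together with the shared coins. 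Because at most $O(n)$ rounds can elapse before $U = \emptyset$, taking $t$ large enough yields $\s = \ss$ almost surely under the coupling, and hence $\Pr[\s = a] = \Pr[\ss = a]$ for every $a \in [0,1]^n$.

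For the second statement, the two-stage model sets $\zz^* = (I+L)^{-1}\ss$ by construction. In the \spread{}, once the first round with $U = \emptyset$ has been reached the innate opinion vector is frozen at $\s$, so all subsequent Phase~I updates are ordinary FJ iterations with this fixed innate vector, and by Eq.~\eqref{eq:opinion-equilibrium} they converge to $\z^* = (I+L)^{-1}\s$. Combining $\s=\ss$ under the coupling with these two identities gives $\z^* = \zz^*$ almost surely, and hence $\Pr[\z^* = b] = \Pr[\zz^* = b]$ for every $b \in [0,1]^n$. The one genuinely fiddly point, and what I expect to be the main obstacle rather than any substantive probabilistic argument, is the bookkeeping around the fact that Phase~I keeps firing in every round of the \spread{}, including ``idle'' rounds in which $U = \emptyset$, whereas the two-stage model has no Phase~I at all during its first stage; this is benign because idle Phase~II rounds change neither states nor innate opinions and Phase~I updates never feed into Phase~II, but making the synchronization of rounds between the two models fully rigorous is the step that requires the most care.
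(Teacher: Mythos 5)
Your proposal is correct and follows essentially the same route as the paper's proof: Phase~I never feeds into Phase~II, so the two models induce the same distribution over states and hence over adjusted innate opinions (your explicit coupling is just a sharper phrasing of the paper's round-by-round induction), and the equilibrium expressed opinions are a deterministic function of the frozen innate vector alone. The one place the paper does more work than you is in justifying that the post-freeze FJ iterations converge to $(I+L)^{-1}\s$ from an \emph{arbitrary} intermediate expressed-opinion vector: it rewrites the update as $\finop^{(t+1)} = \Lambda \Stoc \finop^{(t)} + (\ID - \Lambda)\begop$ and uses that the spectral radius of $\Lambda\Stoc$ is strictly below one to show the initial condition is forgotten, whereas you cite Eq.~\eqref{eq:opinion-equilibrium}, which as stated in the preliminaries does not explicitly assert independence of the starting point.
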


\section{Algorithms}
\label{sec:algorithms}

We now present algorithms for maximizing the indices defined in Sec.~\ref{sec:preliminaries}.
We start with algorithms for approximating the indices
(Sec.~\ref{sec:estimating}).  Then we present our algorithms for maximizing the
sum of user opinions (Sec.~\ref{sec:sum-index}) and for maximizing the
controversy and the disagreement--controversy indices (Sec.~\ref{sec:dis-con}).
We present our proofs in Appendix~\ref{sec:omitted}.

\subsection{Estimating indices}
\label{sec:estimating}

Let $\MasIdx{\mathcal{M}}$ be one of the matrices from Table~\ref{table:index},
which induces the quadratic form for each of the indices that we wish to study. 
Recall that $\begop$ is the non-adjusted vector of innate opinions
and $\s$ is the random vector of adjusted innate opinions.
In the following, our goal is to compute $\Exp[\s^\intercal \MasIdx{\mathcal{M}} \s]$.

Let $\diffvec=\s-\begop$ be the random vector that denotes how the users
changed their opinions. Then observe that
\begin{align*}
	\Exp[\s^\intercal \MasIdx{\mathcal{M}} \s]
	= 
	\begop^\intercal \MasIdx{\mathcal{M}} \begop
	+\Exp[2\begop^\intercal \MasIdx{\mathcal{M}} \diffvec + \diffvec^\intercal \MasIdx{\mathcal{M}} \diffvec].
\end{align*}
Since the first term in the sum is deterministic, we drop it and 
focus on $\Exp[\obj]$, where
$\obj := 2\begop^\intercal \MasIdx{\mathcal{M}} \diffvec + \diffvec^\intercal \MasIdx{\mathcal{M}} \diffvec$.
We show that computing $\Exp[\obj]$ is \SPhard since
our model generalizes the independent cascade model.
\begin{lemma}
\label{lemma:hard-computing}
	Given seed nodes $S$, computing $\Exp[\obj]$ is \SPhard.
\end{lemma}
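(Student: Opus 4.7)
The plan is to reduce from the well-known $\#P$-hard problem of computing the expected influence spread in the independent-cascade (IC) model. The first observation is that our \spread contains IC as a special case: setting $\delta = 1$ kills every transition into \stacknowledge, so each exposed user jumps directly to \stspread with probability $p_{uv}$ or to \stignore with probability $1 - p_{uv}$. Consequently, the random set $A$ of users who ever reach state \stspread coincides in distribution with the IC-active set seeded at $S$, independent of the FJ edge weights $w$ that govern the opinion-dynamics side.

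Given an IC instance $(G_{IC}, p, S)$, I would build a \spread instance on the same graph with the same $p$, set $\begop = \mathbf{0}$, apply a marketing campaign with $\varepsilon = 1$, and leave the FJ weights $w$ as a free parameter. Under these choices $\s_u = 1$ iff $u$ is an activated non-seed and $\s_u = 0$ otherwise, so $\diffvec$ is the indicator vector of $A \setminus S$, and the linear cross-term $2 \begop^\intercal \MasIdx{\mathcal{M}} \diffvec$ vanishes. The objective collapses to
\[
  \Exp[\obj] \;=\; \sum_{u, v \notin S} \MasIdx{\mathcal{M}}_{u, v}\, \Pr[u \in A \wedge v \in A],
\]
a polynomially-computable linear combination of the joint IC activation probabilities.

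To extract an $\#P$-hard quantity from the right-hand side I would perform a polynomial-interpolation Turing reduction in the FJ weights. Rescaling $w$ to $\lambda w_0$ for polynomially many positive rational $\lambda$ does not touch $p$ or the distribution of $A$, but each entry of $\MasIdx{\mathcal{M}}$ becomes a rational function of $\lambda$ whose denominator divides a fixed polynomial of degree $O(n)$ (e.g.\ $\det(\ID + \lambda \laplacian_0)^{O(1)}$). Clearing denominators, $\Exp[\obj]$ seen as a function of $\lambda$ is itself a polynomial of polynomial degree, recoverable from $\mathrm{poly}(n)$ oracle queries via Lagrange interpolation. Evaluating the recovered polynomial at $\lambda = 0$, where for instance the disagreement--controversy matrix $\MasIdx{\DisConIdx{}} = (\ID + \laplacian)^{-1}$ reduces to $\ID$, yields $\Exp[|A \setminus S|] = \Exp[|A|] - |S|$, which is $\#P$-hard.

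The main obstacle is precisely this last step. Because $\MasIdx{\mathcal{M}}$ is a dense, graph-dependent matrix rather than the identity, $\Exp[\obj]$ mixes every joint activation probability instead of exposing the expected spread directly. The polynomial-interpolation argument (or, equivalently, an explicit combinatorial gadget on the FJ side that sparsifies $\MasIdx{\mathcal{M}}$ on the IC subgraph) is therefore the technical heart of the proof, and one has to verify the degree bounds, avoid the poles of the denominator at the chosen interpolation nodes, and treat the remaining indices in Table~\ref{table:index} via analogous expansions.
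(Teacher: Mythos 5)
Your reduction is correct and shares its core with the paper's: both proofs observe that with $\delta=1$, $\begop=\mathbf{0}$ and $\varepsilon=1$ the spreading process degenerates to the independent cascade model, the linear term of $\obj$ vanishes, and the quadratic term becomes a combination of joint activation probabilities, so hardness follows from the \SPhard{}ness of computing the expected IC spread. Where you diverge is in how the matrix $\MasIdx{\mathcal{M}}$ is collapsed to the identity: the paper simply instantiates an instance in which $\MasIdx{\mathcal{M}}=\ID$ (in effect taking the FJ weights, hence the Laplacian, to be zero, as it does explicitly elsewhere in the appendix), which turns $\Exp[\obj]$ directly into the expected spread in a one-line many-one reduction. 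You instead keep the weights positive and run a polynomial-interpolation Turing reduction in a scaling parameter $\lambda$, recovering the $\lambda=0$ value without ever constructing a zero-weight instance. Your route is heavier machinery but is more robust: it respects the paper's stated constraint $w\colon E\to\Real_{>0}$, and it does not require that the identity matrix be attainable as a member of Table~\ref{table:index}. Two caveats: (i) the interpolation is only needed if one insists on strictly positive weights, so you should say explicitly why the direct substitution is disallowed, otherwise the extra work buys nothing; and (ii) your closing remark that the remaining indices follow by ``analogous expansions'' is not automatic --- for the disagreement matrix the $\lambda=0$ evaluation gives the zero matrix, so you would have to extract a \emph{higher-order} coefficient of the interpolated polynomial (whose leading behavior is governed by $\laplacian_0$ itself) and argue hardness of that combination separately, which is a genuine additional step rather than a routine analogue.
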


\spara{Monte Carlo Simulation.}
Since Lemma~\ref{lemma:hard-computing} shows that computing $\Exp[\obj]$ exactly
is hard, we resort to approximations. One option is to use Monte Carlo
simulations of our model.  More concretely, we can simulate our model as
described in Sec.~\ref{sec:equivalence} to obtain multiple samples of
$\s$. Now a Chernoff bound implies that we can compute an approximation of
$\Exp[\s]$ with high probability. Then we can compute an approximation of
$\Exp[\obj]$ in near-linear time using the algorithms by Xu et
al.~\cite{xu2021fast}, which are based on Laplacian solvers.  This approach is
efficient when the number of seed node sets for which we wish to compute
$\Exp[\obj]$ is \emph{small}.

\spara{Reverse reachable sets.}
However, in our optimization algorithms we will need to evaluate $\Exp[\obj]$
for a \emph{large} number of different seed node sets and thus using the Monte
Carlo approach is too inefficient.  Therefore, we use 
\emph{reverse influence sampling}~\cite{borgs2014maximizing,tang2014influence,tang2015influence},
which allow us to reduce the number of simulations of our model.

Our notion of reverse reachable sets is as follows.
Suppose that we want to simulate our model on a graph $G=(V,E)$.
A \emph{possible world}
is a copy of $G$ that has labels on the edges and we generate the labels as
follows. For each edge
$(u,v)\in E$, we pretend that $u$ has
state \stspread and $v$ has state \stinactive. Now we sample the state of $v$ as
described in Phase~II above and we \emph{label} $(u,v)$ with the new
state of $v$. For example, if $v$ changes its state to \stacknowledge then the
label of $(u,v)$ is \stacknowledge. This process is repeated for all
edges $(u,v)\in E$ and we always assume that $u$ has state \stspread
and $v$ has state \stinactive, regardless of the outcomes of previous samples.

Now consider a possible world $g$. We say that there exists a \emph{live
path} from $u$ to $v$ if there exists a path in $g$ in which all edges have
label \stspread except the edge incident upon $v$ which may have label \stacknowledge
or \stspread.  Notice that live paths encode when users change their opinions in
our model: user $v$ adjusts its opinion if and only if there exists a live path
from a seed node to~$v$.

Next, let $g$ be a randomly generated possible world and let $u$
be a random node in $G$.  A \emph{random \rr-set $R$ for $u$ in $g$} is a
set of nodes in $g$ such that there exists a live path to~$u$. 

\spara{Estimating indices.}
Now we turn to estimating $\Exp[\obj]$.  Existing information propagation
methods can be used for
estimating $2\Exp[\begop^\intercal \MasIdx{\mathcal{M}} \diffvec]$,
because $\diffvec$ is the only random quantity in this expression. However,
we also need to approximate
$\Exp[\diffvec^\intercal \MasIdx{\mathcal{M}} \diffvec]
= \sum_{u,v} \MasIdx{\mathcal{M}}_{u,v} \Exp[\diffvec_u\diffvec_v]$, 
which involves products of random variables and which existing methods cannot
do.  Our main observation is that in each possible world it holds that
$\diffvec_u\diffvec_v\neq0$ if and only if there exist live paths from the seed
nodes to $u$ \emph{and} $v$.  Wang et al.~\cite{wang2017activity} followed a
similar approach but only considered pairs $(u,v)$ for which there
exist edges in the graph; here, we have to perform this operation for all
pairs $(u,v)\in V^2$.%

In the following, we set $\Delta\ebegop{u} \in [-\varepsilon,\varepsilon]$ to
denote how much user $u$ adjusts its opinion once it reaches state
\stacknowledge or \stspread.
Note that $\abs{\Delta\ebegop{u}}$ can be smaller than $\varepsilon$ because
of the interval concatenation that we described in Phase~II above.
Next, let $S$ be a set of seed nodes and let $\ind_u(S)$ be an indicator with
$\ind_u(S) = 1$ if $u$ adjusts innate opinion and otherwise $\ind_u(S) = 0$.
Let $\ind(S)$ be a vector of $\ind_u(S)$ consisting of each $u \in V$.
Note that $\ind(S)$ is a random vector and that $\Delta\begop$ is deterministic.
Observe that $\diffvec = \Delta\begop \odot \ind(S)$, where is $\odot$ the Hadamard product.

To simplify our notation, we set
$\LinGain{u} = (2\begop^\intercal \MasIdx{\mathcal{M}})_u \Delta \ebegop{u}$ and
let $\PolGain{u}{v}=(\Delta \ebegop{u})^\intercal \MasIdx{\mathcal{M}}_{u,v} \Delta
\ebegop{v}$.
Then we obtain:
\begin{align*}
\obj = \sum_{u, v\in V}\frac{1}{n}\LinGain{u}\ind_u(S) +
\PolGain{u}{v}\ind_u(S)\ind_v(S) =: F(S).
\end{align*}

Given these definitions, we let $R_u$ and $R_v$ be random \rr-sets for $u$ and
$v$, respectively, and we set
$$\estimator = \IND[(\randomrrp_u \cap S) \neq \emptyset] w_u + \IND[(\randomrrp_u \cap S) \neq \emptyset, (\randomrrp_v \cap S) \neq \emptyset]n \, m_{u, v},$$
and for a set $\c+R$ of random \rr-sets we define
\begin{align}
\label{eq:FR}
	F_{\c+R}(S) = \frac{\sum_{(\randomrrp_u, \randomrrp_v) \in \c+R} \estimator}{\abs{\c+R}}.
\end{align}

We show that $F_{\c+R}(S)$ is an unbiased estimator for $\Exp[F(S)]$.
\begin{lemma}
\label{lemma:fr-1}
Let $\c+R$ be a set of samples of pair of random \rr-sets.
Then $\Exp[F(S)] = \set+E_{u, g \sim \c+G} [n \, F_{\c+R}(S)]$.
\end{lemma}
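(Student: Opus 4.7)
The plan is to compute $\Exp[F(S)]$ and $n\,\Exp[F_{\c+R}(S)]$ and show they coincide. Since $F_{\c+R}$ is the empirical mean of $\estimator$ over its samples, the claim reduces to $\Exp[F(S)] = n\,\Exp_{u,v,g}[\estimator]$, where $u$ and $v$ are drawn uniformly from $V$ and $g$ is drawn from the possible-world distribution. I would first expand $\Exp[F(S)]$ by linearity of expectation: the linear double sum collapses over the dummy index $v$ into $\sum_{u} \LinGain{u}\,\Pr[\ind_u(S)=1]$, while the quadratic term becomes $\sum_{u,v}\PolGain{u}{v}\,\Pr[\ind_u(S)=1,\ind_v(S)=1]$. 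The randomness has now been pushed entirely into single-node and pair-of-nodes activation probabilities.

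The crux of the argument is the coupling between the forward cascade and the reverse-sampling construction. In Phase~II, the outcome on each directed edge is drawn independently from $\{$\stignore, \stacknowledge, \stspread$\}$ with the probabilities prescribed by $p_{u,v}$ and $\delta$, which is precisely the distribution used to label a possible world $g$. Under the natural coupling that reuses these per-edge samples, ``a live path from $S$ to $u$ exists in the forward process'' is literally the same event as ``$R_u\cap S\neq\emptyset$ in $g$'', giving $\Pr[\ind_u(S)=1]=\Pr_g[R_u\cap S\neq\emptyset]$. Because both reverse-reachable sets are built on the \emph{same} labelled world $g$, the coupling extends to the joint event and yields $\Pr[\ind_u(S)=1,\ind_v(S)=1]=\Pr_g[R_u\cap S\neq\emptyset,\,R_v\cap S\neq\emptyset]$.

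Substituting these two identities into $n\,\Exp_{u,v,g}[\estimator]$ finishes the computation: the uniform draws of $u$ and $v$ contribute a factor $1/n^2$, the explicit $n$ inside the quadratic part of $\estimator$ absorbs one of these $1/n$ factors, the outer $n$ cancels the remaining one in the linear part, and a short re-indexing (summing a $v$-independent summand over $v$ gives the missing factor of $n$ for the linear term) reproduces exactly the two sums obtained for $\Exp[F(S)]$. The main obstacle is the \emph{joint} identity above: the single-endpoint statement is classical in the RR-sets literature, but our quadratic objective forces reasoning about correlated activations, so I have to argue that one possible world $g$ suffices to couple \emph{every} pair of backward explorations with the forward cascade in distribution—this is exactly why the pair $(R_u,R_v)$ in the definition of $\estimator$ must be drawn from the same $g$ rather than from two independent worlds. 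Once that pairwise coupling is nailed down, the rest is linearity of expectation and bookkeeping.
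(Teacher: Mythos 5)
Your proposal is correct and follows essentially the same route as the paper's proof: express $F(S)$ in a fixed possible world via live-path indicators, identify those with the events $R_u\cap S\neq\emptyset$ (and the joint event for the quadratic term, using the \emph{same} world $g$ for both RR-sets), then take the expectation over $g$ and over the uniform pair $(u,v)$ and track the factors of $n$. Your explicit discussion of the per-edge coupling and of why both RR-sets must come from one shared possible world is a slightly more careful spelling-out of a step the paper asserts without elaboration, but it is not a different argument.
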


Since the previous lemma only holds in expectation, we now consider
approximations that hold with high probability.
Let $\ell>0$ be an error parameter,
	$\theta=\abs{\c+R}$ be the number of \rr-sets and suppose
we know $\optvalue = \max_{\abs{S} \leq k} \Exp[F(S)]$ (we show later
how to obtain bounds on $\optvalue$ using statistical tests).
Our goal will be to pick $\theta$ large enough such that 
\begin{equation}
   \label{equation:sample-error-bound}
 \Pr[\abs{n\, \fracrrp{S} - \Exp[F(S)]} \geq \frac{\epsilon}{2}\, \optvalue] \leq \frac{1}{n^{\ell} \binom{n}{k}},
\end{equation}
since then a union bound implies that for any seed set $S$ of size $k$,
$\Exp[F(S)]$ is a good estimator for $\fracrrp{S}$ w.h.p.
We show that if we pick $\theta$ large enough then
Equation~\eqref{equation:sample-error-bound} is satisfied.
\begin{lemma}
\label{lem:approximation}
   Let $\chi=\max_{u,v\in V} | w_u + n \, m_{u,v}|$ and
   $\lambda = \frac{8n \chi}{\epsilon^2}(\frac{\epsilon}{3} +1) (\ell \ln n + \ln 2 + \ln \binom{n}{k})$.
   If $\theta \geq \frac{\lambda}{\optvalue}$ then
   Equation~\eqref{equation:sample-error-bound} holds. 
\end{lemma}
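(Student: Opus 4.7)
The plan is to view $n\,F_{\c+R}(S)$ as the empirical mean of $\theta$ i.i.d.\ bounded random variables and apply a Bernstein-type concentration inequality. For $i=1,\ldots,\theta$, let $X_i = n\cdot \estimator$ for the $i$-th independently sampled pair $(R_u,R_v)\in\c+R$. Then $n\,F_{\c+R}(S) = \frac{1}{\theta}\sum_{i=1}^\theta X_i$ and, by Lemma~\ref{lemma:fr-1}, $\Exp[X_i] = \Exp[F(S)]$. Inspecting the definition of $\estimator$, each $X_i$ takes one of only three values (determined by which of the two indicators fire: both zero; only the first indicator; or both), and by the definition of $\chi$ each of these values has absolute value at most $n\chi$, so $|X_i|\leq n\chi$.

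The crucial quantity to control is the variance. I would argue $\mathrm{Var}(X_i)\leq \Exp[X_i^2]\leq n\chi\cdot \Exp[|X_i|]\leq n\chi\cdot \optvalue$ by the standard ``mean-times-range'' estimate used throughout the reverse-reachable-set literature. This is the main technical obstacle, because that estimate is usually derived for nonnegative estimators whereas here $w_u$ and $m_{u,v}$ (and hence $X_i$) can have either sign. I would resolve this either by decomposing $X_i$ into its positive and negative parts and applying the nonnegative argument to each, or, more cleanly, by introducing an auxiliary estimator with the same indicator structure but coefficients $|w_u|$ and $n|m_{u,v}|$; its expectation is an upper bound on $\Exp[|X_i|]$ and can still be bounded by $\optvalue$ via Lemma~\ref{lemma:fr-1}. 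Falling back to Hoeffding, i.e.\ $\mathrm{Var}(X_i)\leq (n\chi)^2$, would cost a factor of $\optvalue$ in the sample complexity and is inconsistent with the stated value of $\lambda$.

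With $|X_i|\leq n\chi$ and $\mathrm{Var}(X_i)\leq n\chi\,\optvalue$ in hand, Bernstein's inequality yields
\begin{equation*}
\Pr\!\left[\Big|\tfrac{1}{\theta}\sum_{i=1}^\theta X_i - \Exp[F(S)]\Big| \geq \tfrac{\epsilon}{2}\optvalue\right]
\;\leq\; 2\exp\!\left(-\frac{\theta\,\epsilon^2\,\optvalue}{8\,n\chi\,(1+\epsilon/3)}\right).
\end{equation*}
Finally, I would force the right-hand side to be at most $1/(n^\ell\binom{n}{k})$ by taking logarithms and solving for $\theta$, which yields the condition
$\theta \geq \frac{8\,n\chi\,(1+\epsilon/3)}{\epsilon^2\,\optvalue}\bigl(\ell\ln n + \ln 2 + \ln\binom{n}{k}\bigr) = \lambda/\optvalue$,
exactly as hypothesized. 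The remaining work consists only of picking the correct variant of Bernstein's inequality so that the constant inside the parentheses matches $(1+\epsilon/3)$; all the heavy lifting lies in the variance bound described above.
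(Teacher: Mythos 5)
Your proposal follows essentially the same route as the paper's proof: the same range bound $\chi$ on the single-sample estimator, the same ``range-times-mean'' variance bound, a Bernstein-type tail with the same $(1+\epsilon/3)$ constant (the paper phrases this via the Chung--Lu martingale inequality, which for independent \rr-set samples reduces to exactly the i.i.d.\ Bernstein bound you invoke), the same deviation level $\frac{\epsilon}{2}\optvalue$, and the same final algebra solving for $\theta$. The one subtlety you flag is genuine and is, if anything, handled more carefully by you than by the paper: the paper notes $x_j\in[-\chi,\chi]$ yet still writes $\Var[x_j]\le \chi x - x^2$, which tacitly uses $\Exp[x_j^2]\le\chi\,\Exp[x_j]$ and hence non-negativity of the estimator (true for the marketing-campaign setting where all coefficients are non-negative, but not justified in general), so your proposed decomposition or absolute-value auxiliary estimator is a legitimate patch rather than an unnecessary detour.
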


\subsection{Maximizing network indices}
\label{sec:sum-index}
Now we consider the \emph{sum of expressed opinions problem}, where we are given
an undirected weighted graph $G=(V,E)$ with edge probabilities $p_{u,v}$ and a
positive integer $k$.  The goal is to find a set of seed nodes of cardinality at
most $k$ that maximizes the sum of expressed opinions
$\Exp[\SumIdx{\s}] = \Exp[\ind^{\intercal} \z^*]$.
Our main result is as follows.
\begin{theorem}
\label{thm:greedy-marketing}
	There exists a greedy approximation algorithm that computes a
	$(1-1/e-\varepsilon)$-approximation with high probability.
\end{theorem}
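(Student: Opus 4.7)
The plan is to rewrite the sum-of-opinions objective as a weighted influence-maximization instance and then invoke the RR-set framework of Sec.~\ref{sec:estimating}. Writing $\z^*=(\ID+\laplacian)^{-1}\s$ and setting $y=(\ID+\laplacian)^{-1}\ind$ yields
\begin{align*}
\Exp[\ind^\intercal\z^*] = y^\intercal\begop + \sum_{u\in V} y_u\,\Delta\ebegop{u}\,\Pr[\ind_u(S)=1].
\end{align*}
Because $\ID+\laplacian$ is a strictly diagonally dominant symmetric M-matrix (positive diagonal, non-positive off-diagonal), its inverse has non-negative entries, so $y\ge 0$; in the marketing setting $\Delta\ebegop{u}\ge 0$ for every $u$. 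The first summand is independent of $S$, so up to an additive constant maximizing the objective is equivalent to maximizing $f(S):=\sum_u c_u\Pr[\ind_u(S)=1]$ with non-negative weights $c_u=y_u\Delta\ebegop{u}$, which I would precompute with a single Laplacian solve.

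I would then verify that $f$ is monotone and submodular. Fix any possible world $g$ and let $A_g(S)$ be the set of nodes reachable from $S$ by a live path in $g$; since every live path originates at a single seed, $A_g(S)=\bigcup_{s\in S}A_g(\{s\})$, which is a coverage function and hence monotone submodular in $S$. Monotonicity and submodularity of $f$ follow by taking a non-negative linear combination of the indicators $\IND[u\in A_g(S)]$ and averaging over $g$.

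With $f$ monotone submodular, I would instantiate the estimator of Sec.~\ref{sec:estimating} in its purely linear case by setting $w_u\propto c_u$ and all quadratic coefficients $m_{u,v}=0$; then $F_{\c+R}$ is itself a non-negative linear combination of coverage indicators $\IND[R_u\cap S\neq\emptyset]$, so it is monotone submodular in $S$. Lemma~\ref{lemma:fr-1} gives unbiasedness, and Lemma~\ref{lem:approximation} guarantees that, once $\theta\ge\lambda/\optvalue$, simultaneously for all $S$ with $\abs{S}\le k$ we have $\abs{nF_{\c+R}(S)-f(S)}\le\tfrac{\varepsilon}{2}\optvalue$ with probability at least $1-n^{-\ell}$. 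Running lazy greedy on $nF_{\c+R}$ returns $\hat S$ with $nF_{\c+R}(\hat S)\ge(1-1/e)\max_{\abs{S}\le k}nF_{\c+R}(S)$. The standard two-sided-error calculation then yields $f(\hat S)\ge(1-1/e)f(S^\ast)-\varepsilon\optvalue=(1-1/e-\varepsilon)f(S^\ast)$, the desired approximation.

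The main obstacle is that the sample size $\theta$ demanded by Lemma~\ref{lem:approximation} depends on the unknown quantity $\optvalue$. I would handle this adaptively in the style of the IMM/SSA scheme of Tang et al.~\cite{tang2015influence}: double $\theta$ starting from a small value and, after each doubling, use a fresh batch of RR-sets to certify a high-probability lower bound on $\optvalue$, halting once the current $\theta$ satisfies Lemma~\ref{lem:approximation}'s hypothesis with respect to that bound. A union bound over the $O(\log n)$ doubling stages costs only a constant factor in the failure probability and preserves the overall high-probability $(1-1/e-\varepsilon)$ guarantee.
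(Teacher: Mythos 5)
Your proposal is correct and follows essentially the same route as the paper: reduce the objective to a non-negative weighted sum of activation indicators $\sum_u c_u\Pr[\ind_u(S)=1]$, establish monotonicity and submodularity via possible worlds as a coverage function, and combine greedy on the RR-set estimator with the martingale concentration bound of Lemma~\ref{lem:approximation} and an IMM-style adaptive lower bound on $\optvalue$ (the paper's Algorithm~\ref{algo:sampling} and Lemma~\ref{lem:y-size-main}). The only, harmless, difference is that the paper uses $\laplacian\ind=0$ to get $\ind^\intercal(\ID+\laplacian)^{-1}=\ind^\intercal$, so the weights are exactly $\Delta\ebegop{u}$ and no Laplacian solve is needed, whereas you only invoke non-negativity of the M-matrix inverse to get $y\ge 0$.
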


Indeed, in Appendix~\ref{sec:comparison} we show that our model is strictly more powerful than
the FJ~model in which we can increase $k$~innate user opinions.

\begin{algorithm2e}[t!]
\caption{$\rrGreedy$}
\label{algo:rrgreedy}
	{\small
	\Input{$\rrpsample$, $k$}
	\Output{{$\approxsoln$}}
	}
	\BlankLine
	{\emph{$\approxsoln \leftarrow \emptyset$}} \\
	\While{$\abs{\approxsoln} \leq k $}{
	{   
	   \emph{$x \gets \arg\max_{x}\fracrrp{\approxsoln \cup \{x\}} - \fracrrp{\approxsoln}$;} \\
		\emph{$\approxsoln \gets \approxsoln \cup \{x\}$;}}
	}
	\KwRet{{$\approxsoln$}}
\end{algorithm2e}

\begin{algorithm2e}[t!]
\caption{$\sampling$}
\label{algo:sampling}
	{\small
	\Input{$\multiGraph$, $\lambda$, $\beta$, $\epsilon_2$, $k$, $\Delta \begop$, $\chi$, $\LB_0$}
	\Output{$\sampleR$}
	}
	{\small
	$\c+R \gets \emptyset$, $\LB \gets \LB_0$\;
	}
	\BlankLine
	\For{$i =1, \ldots, \log_2 n - 1$} {
		$y \gets n / 2^{i}$, $\theta_i \gets \frac{\beta}{y}$\;
		\lWhile{$\lvert \sampleR \rvert \le \theta_i$}{
			$\sampleR \gets \sampleR \cup \mathrm{GenerateRR\text{-}Set}$}
		$\t+X_i \gets \rrGreedy(\sampleR, k)$\;
		\lIf{$n \, \fracrrp{\t+X_i} \ge (1 + \epsilon_2) \, y \chi,$}{
			$\LB \gets \frac{n \, \fracrrp{\t+X_i}}{1 + \epsilon_2}$, break}
	}
	$\theta \gets \lambda /\LB$\;
	\lWhile{$\lvert \sampleR \rvert \le \theta$}{
		$\sampleR \gets \sampleR \cup \mathrm{GenerateRR\text{-}Set}$}
	{\bf Return} $\sampleR$\;
\end{algorithm2e} 

To obtain the theorem, we maximize the sum of the adjusted parts of the innate
opinions $\Exp[\sum_u \diffvec_u]$, since it is well-known that
$\sum_u \z^*_u=\sum_u \s_u$ and thus we can maximize
$\Exp[\sum_u \diffvec_u]$. Equivalently, we can maximize
$F(S) := \sum_{u \in V} \ind_u(S) \Delta \ebegop{u}$, as we show next.
\begin{lemma}
\label{lemma:max-sum-opinions} 
    $\arg\max_S \sum_{u \in V} \efinop{u}^*(S)  = \arg\max_S F(S)$
\end{lemma}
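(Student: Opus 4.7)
The plan is to show that, for every seed set $S$ (and in fact for every realization of the randomness of the information spread), the quantity $\sum_u \efinop{u}^*(S)$ differs from $F(S)$ only by a constant that does not depend on $S$. Then the two argmaxes coincide, whether we interpret the maximization deterministically or in expectation.

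\textbf{Key step: sum preservation at equilibrium.} First I would recall the well-known identity $\ind^\intercal \finop^* = \ind^\intercal \begop$ for the FJ model, which was already noted in Sec.~\ref{sec:preliminaries}. This identity follows because $L\ind = 0$, hence $\ind^\intercal L = 0$ by symmetry of $L$, hence $\ind^\intercal (\ID + L) = \ind^\intercal$, and therefore $\ind^\intercal (\ID + L)^{-1} = \ind^\intercal$. Applied to the equilibrium $\z^*(S) = (\ID + L)^{-1} \s(S)$ of our model, this yields $\sum_{u\in V}\efinop{u}^*(S) = \ind^\intercal \s(S)$.

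\textbf{Rewriting the sum.} Next I would use $\s(S) = \begop + \diffvec(S)$, where $\diffvec_u(S) = \ind_u(S)\,\Delta\ebegop{u}$ is precisely the opinion adjustment of user $u$ (which is $0$ if $u$ does not reach state \stacknowledge or \stspread). Substituting gives
\begin{equation*}
\sum_{u\in V}\efinop{u}^*(S) \;=\; \ind^\intercal \begop + \sum_{u\in V} \ind_u(S)\,\Delta\ebegop{u} \;=\; \ind^\intercal \begop + F(S).
\end{equation*}
Since $\ind^\intercal \begop$ is a constant independent of $S$, we conclude $\arg\max_S \sum_{u\in V}\efinop{u}^*(S) = \arg\max_S F(S)$. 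The same identity holds after taking expectations over the information-spread randomness, since $\ind^\intercal \begop$ remains constant.

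\textbf{Main obstacle.} There is no real obstacle here; the only subtlety is making sure the sum-preservation identity is applied to the \emph{adjusted} innate opinions $\s$ rather than the original $\begop$, and noting that the identity $\ind^\intercal (\ID+L)^{-1} = \ind^\intercal$ is purely algebraic and hence applies pathwise for every realization of $\s$. This gives the equivalence both deterministically (per sample) and in expectation, so the argmax statement holds in both interpretations.
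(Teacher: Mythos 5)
Your proposal is correct and follows essentially the same route as the paper: both apply the identity $\ind^\intercal (\ID + \laplacian)^{-1} = \ind^\intercal$ to the adjusted opinions $\s = \begop + \Delta\begop \odot \ind(S)$, and then discard the $S$-independent term $\ind^\intercal \begop$ to reduce the objective to $F(S)$. Your added remark that the identity holds pathwise (per realization of the spread) is a small but worthwhile clarification beyond what the paper states explicitly.
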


The main benefit of Lemma~\ref{lemma:max-sum-opinions} is that to maximize
$F(S)$, we do not have to compute the sparse matrix inverse from
Equation~\eqref{eq:opinion-equilibrium} which is very costly. 
Note that if $\Delta \ebegop{u}=\varepsilon$ for all $u \in V$, maximizing 
$F(S)$ reduces to the influence maximization problem~\cite{kempe2015maximizing}.
However, if $\Delta \ebegop{u}<\varepsilon$ for some $u$, the solutions might
differ. The approximation result from the theorem stems from the following
lemma.
\begin{lemma}
  \label{lemma:approx-max-sum-opinions}
  The function $\Exp[F(\cdot)]$ is monotone and submodular. Thus the greedy algorithm 
  achieves an approximation ratio of $1 - \frac{1}{e}$. 
\end{lemma}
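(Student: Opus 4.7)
The plan is to condition on the randomness of the information-spreading process and show that in every fixed possible world $g$, the function $F(S)$ is a non-negative weighted coverage function in $S$. Taking expectation over $g$ will then transfer monotonicity and submodularity to $\Exp[F(\cdot)]$, and the greedy guarantee will follow from the classical Nemhauser--Wolsey--Fisher theorem.

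First, I would fix a possible world $g$ as defined in Section~\ref{sec:estimating} and let $T_g(S) \subseteq V$ denote the set of nodes to which there exists a live path from some node in $S$ in $g$. By construction $T_g(S) = \bigcup_{s \in S} T_g(\{s\})$, so $S \mapsto T_g(S)$ is a union of fixed node subsets, one per seed, and standard coverage arguments imply that $S \mapsto |T_g(S)|$, and more generally $S \mapsto \sum_{u \in T_g(S)} c_u$ for any non-negative weights $c_u$, is monotone and submodular. Since in the non-polarizing (marketing) setting $\Delta \ebegop{u} = \min\{\ebegop_u + \varepsilon, 1\} - \ebegop_u \ge 0$ for every $u$, and since $\ind_u(S) = \IND[u \in T_g(S)]$ in world $g$, the function $F(S) = \sum_{u \in T_g(S)} \Delta \ebegop{u}$ is monotone and submodular in $S$ for every fixed $g$.

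Finally, I would take the expectation over possible worlds. Monotonicity and submodularity are both preserved under non-negative convex combinations, so $\Exp[F(\cdot)]$ inherits both properties from the per-world functions $F_g(\cdot)$. The $(1-1/e)$ approximation guarantee for the greedy algorithm then follows directly from the classical theorem of Nemhauser--Wolsey--Fisher for maximizing a monotone submodular function subject to a cardinality constraint. I do not anticipate any serious technical obstacle: the only delicate point is to verify that the definition of live path, whose last edge may carry label \stacknowledge as well as \stspread, still yields a set $T_g(\{s\})$ depending only on $g$ and $s$, which is immediate from the sampling procedure that generates $g$. Crucially, the argument breaks if any $\Delta \ebegop{u}$ can be negative, which explains why Theorem~\ref{thm:greedy-marketing} is restricted to the non-polarizing scenario.
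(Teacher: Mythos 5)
Your proposal is correct and takes essentially the same route as the paper: both arguments fix a possible world $g$, observe that $F_g(S)=\sum_{u}\ind_u(S)\,\Delta \ebegop{u}$ is a monotone submodular (weighted coverage) function because $\ind_u(\cdot)$ has coverage-type marginals and $\Delta \ebegop{u}\ge 0$ in the marketing setting, and then conclude that $\Exp[F(\cdot)]$ inherits both properties as a non-negative combination over worlds, after which the classical greedy guarantee applies. The only difference is cosmetic: you make the union structure $T_g(S)=\bigcup_{s\in S}T_g(\{s\})$ and the necessity of non-negative weights explicit, where the paper leaves these implicit.
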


\spara{Maximizing network indices. }
To estimate $F(S)$, we define $F_{\c+R}(S)$
similar to Equation~\eqref{eq:FR}. 
The difference is that we drop the quadratic terms $\IND[(\randomrrp_u \cap S) \neq \emptyset, (\randomrrp_v \cap S) \neq \emptyset]n \, m_{u, v}$, 
and we set $w_u = \Delta \ebegop{u}$. 

Our algorithm works as follows.  We sample a set $\c+R$ of \rr-sets and greedily
pick the nodes that maximize $F_{\c+R}(S)$.  The algorithm keeps on adding
\rr-sets to $\c+R$ until a statistical test asserts that we have found a lower
bound on \optvalue.  More concretely, we keep on sampling if the value estimated
by $n F_{\c+R}(S)$ is \emph{not} a lower bound on \optvalue (see (1)
in Lemma~\ref{lem:y-size-main}) and when we stop sampling then we obtain a good
enough lower bound $\LB$ (see (2) in Lemma~\ref{lem:y-size-main}).
Then we can apply Lemma~\ref{lem:approximation} with $\theta \geq \lambda / \LB$
to obtain our approximation guarantees.  We present the pseudocode including the
sampling in Algorithm~\ref{algo:sampling} and the greedy subroutine in
Algorithm~\ref{algo:rrgreedy}. We run our algorithms with parameters
$\LB_0 = \max_u \abs{\Delta \ebegop{u}}$ and
$\beta = n(\frac{4}{3}\epsilon_2 + 2)(l\ln n + \ln \log_2 2n + \ln \binom{n}{k})/\epsilon_2^2$. 

\begin{lemma}
\label{lem:y-size-main}
  Let $\t+X$ be the output of Algorithm~\ref{algo:rrgreedy} and
  suppose that $|\c+R|= \theta \geq \frac{\beta}{y}$.
  Then with probability at least $1 - \frac{n^{-\ell}}{\log_2 (n)}$:
  (1)~If $\opt < y \chi$, then $n \,  \fracrrp{\t+X} < (1 + \epsilon_2)y \chi$.
  (2)~If $\optvalue \geq y \chi$, then $n \, \fracrrp{\t+X}\leq (1 + \epsilon_2)\optvalue$.
\end{lemma}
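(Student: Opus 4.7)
\emph{Overall approach.}  My plan adapts the IMM-style sampling analysis of Tang, Shi, and Xiao to our estimator $\estimator$, which is bounded in magnitude by $\chi$.  For each \emph{fixed} seed set $S$ of size at most $k$, I apply a multiplicative Chernoff--Bernstein bound to $n\fracrrp{S}$, and then union-bound over the at most $\binom{n}{k}$ possible values of $\t+X$ to transfer the bound to the greedy output.

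\emph{Concentration for a fixed $S$.}  Fix $S$ with $|S|\leq k$.  By Lemma~\ref{lemma:fr-1}, $\Exp[n\fracrrp{S}] = \Exp[F(S)] \leq \opt$.  Rescaling samples via $Y_i := \estimator_i/\chi \in [0,1]$ (recall that $\estimator$ is bounded in magnitude by $\chi$, and is non-negative in the sum-of-opinions setting considered here), the sum $\sum_{i=1}^{\theta} Y_i$ has mean $\mu_Y = \theta\,\Exp[F(S)]/(n\chi)$.  For part~(1) I set $\mu^\star := \theta y/n$; the assumption $\opt<y\chi$ gives $\mu_Y < \mu^\star$, and a one-sided multiplicative Chernoff inequality yields
\[
\Pr\!\bigl[\,n\fracrrp{S}\geq (1+\epsilon_2)\,y\chi\,\bigr]
= \Pr\!\bigl[\textstyle\sum_i Y_i \geq (1+\epsilon_2)\mu^\star\bigr]
\leq \exp\!\bigl(-\tfrac{\epsilon_2^{2}\,\mu^\star}{2+\tfrac{2}{3}\epsilon_2}\bigr).
\]
For part~(2) I instead take $\mu^\star := \theta\opt/(n\chi)\geq\mu_Y$ and obtain the analogous bound on $\Pr[n\fracrrp{S}\geq (1+\epsilon_2)\opt]$.

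\emph{Calibrating $\beta$ and finishing.}  In either case, the hypothesis $\theta\geq\beta/y$, combined with the side condition (the definition of $\mu^\star$ in part~(1); the assumption $\opt\geq y\chi$, yielding $\theta\opt\geq\beta\chi$, in part~(2)) gives $\mu^\star \geq \beta/n$.  Plugging in the stated value $\beta = n(\tfrac{4}{3}\epsilon_2+2)(\ell\ln n + \ln\log_2(2n) + \ln\binom{n}{k})/\epsilon_2^{2}$, one checks that the Chernoff exponent is at least $\ell\ln n + \ln\log_2(2n) + \ln\binom{n}{k}$, so the per-set failure probability is at most $1/(n^{\ell}\log_2(2n)\binom{n}{k})$.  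Since $\t+X$ must be one of the at most $\binom{n}{k}$ seed sets of size $\leq k$, a union bound over all such sets bounds the overall failure probability by $n^{-\ell}/\log_2(n)$, as claimed.

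\emph{Main obstacle.}  The argument is conceptually standard, so the bulk of the effort is bookkeeping: verifying that the $(\tfrac{4}{3}\epsilon_2+2)$ factor in the definition of $\beta$ exactly absorbs the denominator $(2+\tfrac{2}{3}\epsilon_2)$ produced by the Chernoff bound, and keeping track of all three logarithmic contributions from the union bound ($n^{\ell}$, $\log_2(2n)$, and $\binom{n}{k}$).  A subtler point is that $\t+X$ itself depends on the random sample $\sampleR$ via the greedy rule, so the Chernoff bound cannot be invoked on $\t+X$ directly --- it is precisely the union bound over the polynomial class of feasible seed sets that lets us transfer the per-set concentration guarantee to the greedy output.
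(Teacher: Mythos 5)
Your proposal is correct and follows essentially the same route as the paper: a per-set Chernoff--Bernstein concentration bound on the sum of the bounded estimators $\estimator$, calibration of the exponent via $\theta\geq\beta/y$ (together with $\opt\geq y\chi$ in part~(2)), and a union bound over the $\binom{n}{k}$ candidate seed sets to handle the sample-dependence of the greedy output $\t+X$. The paper merely phrases the concentration step through a martingale inequality of Chung--Lu in IMM style (yielding the slightly looser denominator $\tfrac{4}{3}\epsilon_2+2$ that appears in $\beta$, which your tighter $2+\tfrac{2}{3}\epsilon_2$ denominator is absorbed by a fortiori), so the two arguments are substantively identical.
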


The above approach also extends to other indices if we use $F_{\c+R}(S)$ as per
Equation~\eqref{eq:FR} and set $\LB_0 = \max_{u, v}\abs{w_u + m_{u, v}}$.

\subsection{The sandwich method}
\label{sec:dis-con}
Now we present an algorithm for finding at most $k$ seed nodes that maximize the
\disconidx~$\DisConIdx{G, \begop}$ and the \conidx~$\ConIdx{G, \begop}$. Since
these optimization problems are not submodular, we cannot use the greedy
algorithm from above. However, the indices' matrices $\MasIdx{\DisConIdx{}}$ and
$\MasIdx{\ConIdx{}}$ only contain non-negative entries and
this allows us to define submodular upper and lower bounds on the objective
functions. Thus, we apply the sandwich
method~\cite{lu2015competition} 
to obtain data-dependent approximation guarantees.

We obtain our upper and lower bounds as follows. Let 
$\MasIdx{\mathcal{M}} \in \{ \MasIdx{\DisConIdx{}}, \MasIdx{\ConIdx{}} \}$. 
Now let $\MasIdx{\mathcal{M}}^U$ be the
diagonal matrix in which $\MasIdx{\mathcal{M}}^U_{ii}$ is the sum of all entries in the
$i$'th row of $\MasIdx{\mathcal{M}}$. Let
$\mu_0(S) = \Exp[2 \begop^\intercal \MasIdx{\mathcal{M}} \Delta \s + \Delta \s^\intercal \MasIdx{\mathcal{M}} \Delta \s]$, 
$\mu_L(S) = \Exp[2 \begop^\intercal \MasIdx{\mathcal{M}} \Delta \s]$, 
$\mu_U(S) = \Exp[2 \begop^\intercal \MasIdx{\mathcal{M}} \Delta \s + \Delta \s^\intercal \MasIdx{\mathcal{M}}^U \Delta \s]$. 
Since the entries of all of these matrices are
non-negative, we obtain our desired relationship
$\mu_L(S) \leq \mu_0(S) \leq \mu_U(S)$.

As both $\mu_L(S)$ and $\mu_U(S)$ are monotone and submodular, a greedy
algorithm can approximate them within factor $1 - \frac{1}{e} - \epsilon$.  In
our \emph{sandwich algorithm}, we greedily select nodes $S_L$, $S_U$
and $S_0$ that maximize $\mu_L(S)$, $\mu_U(S)$ and $\mu_0(S)$, respectively.  Then
we evaluate each of the sets on $\mu_0(S)$ and return the one with the highest
objective value, i.e., we return $\arg \max_{S\in\{S_0, S_L, S_U\}} \mu_0(S)$.
We obtain the following approximation guarantees.
\begin{theorem}[Lu et al.~\cite{lu2015competition}]
	\label{theorem:sand-lu}
  Let $S^* = \arg\max_{\abs{S} \leq k} \mu_0(S)$. Then
  $\mu_0(S) \geq \max \left\{\frac{\mu_0(S_U)}{\mu_U(S_U)}, \frac{\mu_L(S^*)}{\mu_0(S^*)}\right\} (1 - \frac{1}{e} - \epsilon)\, \mu_0(S^*)$.
\end{theorem}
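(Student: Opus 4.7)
The plan is to split the bound into its two cases and prove each separately, then take the maximum. Throughout, write $S_{\text{sand}} = \arg\max_{S \in \{S_0, S_L, S_U\}} \mu_0(S)$ for the set actually returned, and let $S_L^* = \arg\max_{|S|\leq k} \mu_L(S)$ and $S_U^* = \arg\max_{|S|\leq k} \mu_U(S)$. The only ingredients are (i) the pointwise sandwich $\mu_L(S) \leq \mu_0(S) \leq \mu_U(S)$ for every $S$, which holds because $\MasIdx{\mathcal{M}}$ has non-negative entries and $\MasIdx{\mathcal{M}}^U - \MasIdx{\mathcal{M}}$ and $\MasIdx{\mathcal{M}}$ are entrywise non-negative, and (ii) the fact that greedy on the monotone submodular functions $\mu_L$ and $\mu_U$ yields $(1 - 1/e - \epsilon)$-approximations, giving $\mu_L(S_L) \geq (1-1/e-\epsilon)\mu_L(S_L^*)$ and $\mu_U(S_U) \geq (1-1/e-\epsilon)\mu_U(S_U^*)$.

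For the bound involving $\mu_0(S_U)/\mu_U(S_U)$, the chain I would write is
\begin{equation*}
\mu_0(S_{\text{sand}}) \;\geq\; \mu_0(S_U) \;=\; \frac{\mu_0(S_U)}{\mu_U(S_U)}\,\mu_U(S_U) \;\geq\; \frac{\mu_0(S_U)}{\mu_U(S_U)}\,(1-\tfrac{1}{e}-\epsilon)\,\mu_U(S_U^*),
\end{equation*}
and then conclude by noting $\mu_U(S_U^*) \geq \mu_U(S^*) \geq \mu_0(S^*)$, where the first inequality holds because $S_U^*$ is the maximizer of $\mu_U$ and the second is the pointwise upper bound applied at $S^*$.

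For the bound involving $\mu_L(S^*)/\mu_0(S^*)$, I would argue symmetrically:
\begin{equation*}
\mu_0(S_{\text{sand}}) \;\geq\; \mu_0(S_L) \;\geq\; \mu_L(S_L) \;\geq\; (1-\tfrac{1}{e}-\epsilon)\,\mu_L(S_L^*) \;\geq\; (1-\tfrac{1}{e}-\epsilon)\,\mu_L(S^*),
\end{equation*}
using the pointwise lower bound at $S_L$, the greedy approximation guarantee for $\mu_L$, and optimality of $S_L^*$ for $\mu_L$. Finally rewrite $\mu_L(S^*) = \frac{\mu_L(S^*)}{\mu_0(S^*)}\,\mu_0(S^*)$ to produce the stated factor. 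Combining the two bounds (both of which lower-bound the same quantity $\mu_0(S_{\text{sand}})$) by taking the pointwise maximum gives the theorem.

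There is no real obstacle here; every inequality is elementary. The only things one needs to be careful about are (a) verifying that the upper-bounding matrix $\MasIdx{\mathcal{M}}^U$ really induces a submodular $\mu_U$ (which follows because a diagonal quadratic form $\Delta \s^\intercal \MasIdx{\mathcal{M}}^U \Delta \s$ is a non-negative weighted sum of coverage-like terms $\ind_u(S)$, so its expectation decomposes as a weighted influence spread plus a linear term), and (b) ensuring the greedy approximation factor includes the $\epsilon$ loss from the sampling-based evaluation of $\mu_L, \mu_U$, which is absorbed by invoking the same \rr-set estimation framework of Lemma~\ref{lem:approximation} that underlies Theorem~\ref{thm:greedy-marketing}.
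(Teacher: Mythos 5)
Your proof is correct, and it takes a complementary route to the paper's. The paper's proof spends essentially all of its effort on verifying the \emph{preconditions} of the sandwich framework --- Step~I shows the matrices $\MasIdx{\ConIdx{}}$ and $\MasIdx{\DisConIdx{}}$ are entrywise non-negative (via a Neumann-series expansion of $(\laplacian+\ID)^{-1}$), Step~II shows $\mu_0$ is monotone but neither submodular nor supermodular, and Step~III shows $\mu_L$ and $\mu_U$ are monotone and submodular --- and then invokes Lu et al.'s Theorem~9 as a black box for the final inequality. You do the opposite: you supply the elementary two-chain derivation of that inequality (which is exactly the content of the cited theorem, and your chains are valid --- the first uses $\mu_U(S_U^*)\geq\mu_U(S^*)\geq\mu_0(S^*)$, the second uses $\mu_0(S_L)\geq\mu_L(S_L)\geq(1-\frac{1}{e}-\epsilon)\mu_L(S_L^*)\geq(1-\frac{1}{e}-\epsilon)\mu_L(S^*)$), and only sketch the precondition checks at the end. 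Your sketch of why $\mu_U$ is submodular is the right idea and matches the paper's Lemma~\ref{lem:sub-two}: since $\ind_u(S)^2=\ind_u(S)$, the diagonal quadratic term collapses into a non-negative weighted sum of the coverage terms $\ind_u(S)$, with non-negativity of the weights resting on Step~I. Your approach buys self-containedness and makes transparent where each hypothesis is used (note your chains never actually need monotonicity of $\mu_0$ itself, which the paper proves but which only serves to motivate restricting to $\abs{S}=k$); the paper's approach buys brevity on the inequality at the cost of a citation, while doing the non-trivial structural work (matrix non-negativity, the submodularity/non-submodularity dichotomy) in full detail. If you were to write this up completely, the one place you would need to add real content is precisely that structural work, in particular the argument that $(\laplacian+\ID)^{-1}$ is entrywise non-negative, which your proposal takes for granted.
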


\section{Experiments}
\label{sec:experiments}

We present the experimental evaluation of our model and our methods.  
Our experiments are conducted on an Intel Xeon\,E5\,2630\,v4 at 2.20\,GHz with
128GB memory. 
Our code is written in Julia and is available on
github.\footnote{\url{https://github.com/SijingTu/WebConf-22-Viral-Marketing-Opinion-Dynamics}}

\spara{Datasets.}
We use publicly available real-world
datasets~\cite{kunegis2013konect,rossi2015network,snapnets} of social networks.
For each network we extracted the largest connected component.
Dataset statistics are presented in Table~\ref{tab:graphstats}.

\spara{Parameters.}
For each network, we set the innate opinion $\ebegop{u}$ 
of each user $u$ uniformly at random in $[0, 1]$~\cite{xu2021fast}.
We set the parameters $p_{u,v}$ as in the weighted cascade model~\cite{kempe2015maximizing, tang2015influence, tang2014influence}, 
i.e., $p_{u,v} = \frac{1}{d(v)}$, where $d(v)$ is the in-degree of $v$. 
We set $w_{u,v} = 1$ for the FJ model.   
For the polarizing campaigns with backfire, we set $\tau = 0.5$.
For all of our algorithms and heuristics, we choose the parameters 
$\epsilon=0.1$, $\ell=1$ and $\epsilon_2=0.6$.

\begin{table}[t]
  \caption{Statistics of the datasets, where $n$ is the number of nodes and
	  $m$ is the number of edges.}
  \label{tab:graphstats}
  \centering
  \begin{small}
  \begin{tabular}{lrr}  
    \toprule
    \multirow{1}*{Dataset} & \multirow{1}*{$n$} & \multirow{1}{*}{$m$}\\ 
    \midrule 
    $\Convote$ & 219 & 521 \\
    \Netscience & 379 & 914 \\
    $\WikiTalkSmall$ & 404 & 734 \\
    \WikiVote & 889 & 2914 \\
    \Reed & 962 & 18812 \\
    \EmailUniv & 1133 & 5451 \\
    \Hamster & 2000 & 16097 \\
    \USFCA & 2672 & 65244 \\
    \bottomrule
    \end{tabular}~
    \end{small}
    \begin{small}
    \begin{tabular}{lrr}  
    \toprule
    \multirow{1}*{Dataset} & \multirow{1}*{$n$} & \multirow{1}{*}{$m$}\\ 
    \midrule 
    \NipsEgo & 2888 & 2981 \\
    \PagesGovernment & 7057 & 89429 \\
    \HepPh & 11204 & 117619 \\
    \Anybeat & 12645 & 49132 \\
    \CondMat & 21363 & 91286 \\
    \Gplus & 23613 & 39182 \\
    \Brightkite & 56739 & 212945 \\
    \WikiTalk & 92117 & 360767 \\
    \bottomrule
  \end{tabular}
  \end{small}
\end{table}

\spara{Algorithms.} 
We implemented our approximation algorithms from Sec.~\ref{sec:algorithms}
and we denote them \MaxSum for the sum index, and \MaxDisCon for the
disagreement--controversy index.  Additionally, we use heuristic versions of the
greedy Algorithm~\ref{algo:rrgreedy}, together with the statistical test scheme
from Algorithm~\ref{algo:sampling}.  This gives us the following algorithms:
\MaxPol for maximizing the polarization index,
\MaxInt for maximizing the internal conflict, and
\MaxDis for maximizing disagreement.

We will see below that those algorithms that include quadratic terms are very
costly to run.  Therefore, we also introduce scalable heuristics.  We make two
changes in the heuristics: (1)~To obtain the seed nodes, we only consider the
indices' linear components, but we evaluate the final set of seed nodes on the
whole function (including the quadratic part).  
(2)~Following the sampling scheme from Algorithm~\ref{algo:sampling} typically
leads to large sampling sizes and sometimes caused our algorithms to run out of
memory.  Thus, we sample at most $200n$ \rr-sets for smaller datasets, and $5n$
\rr-sets when $n > 50,000$.  This gives good estimates in practice (typically
with less than $1\%$ error).  We denote the heuristics by \MaxLinDisCon,
\MaxLinPol, \MaxLinInt, and \MaxLinDis. 

We compare our optimization algorithms against three baselines:
\MaxInfluence chooses the seed nodes that maximize the influence;
\HighDegree picks the seed nodes with highest degrees;
\Random selects seed nodes uniformly randomly.
Since \Random is the only randomized baseline, we report average values over
$10$~runs. 
As these methods provide us with a fixed seed set, we use the Monte Carlo
simulation from Sec.~\ref{sec:estimating} to evaluate their results.

Additionally, we compare against a greedy heuristic~\FJGreedy by Chen and
Racz~\cite{chen2020network} that maximizes the indices from
Table~\ref{table:index} under the vanilla FJ model. \FJGreedy is allowed to
change $k$~innate user opinions arbitrarily much but, unlike in our model, there
is no information spread;  we provide \FJGreedy with the same parameter $k$ as
all other algorithms. Unlike for the other methods, we do \emph{not} take the
seed set returned by \FJGreedy and compute its score in our model, but we report
the relative increase of \FJGreedy in the vanilla FJ model; this will allow us
to evaluate whether the information spreading makes our model more powerful. We
will also include a value~\FJUpp by Gaitonde, Kleinberg and
Tardos~\cite[Thm.~3.4]{gaitonde2020adversarial} which gives an analytic upper
bound on what is achievable in the setting of \FJGreedy; we note that this
upper bound might be loose (i.e., it is possible that it is too large). Note
that if our algorithms achieve values larger than \FJUpp, our model is strictly
more powerful than what is achievable in the vanilla FJ model without the
information propagation step.

\spara{Evaluation.} We report the relative increases of the
indices from Sec.~\ref{sec:preliminaries}. That is, for
$\MasIdx{\mathcal{M}}$ being a matrix from Table~\ref{table:index},
$\begop$ being the non-adjusted innate opinions, and $\s$ being the adjusted innate
opinions, we report
$(\s^\intercal \MasIdx{\mathcal{M}} \s - \begop^\intercal \MasIdx{\mathcal{M}}
		\begop)/(\begop^\intercal \MasIdx{\mathcal{M}} \begop)$.

\spara{How does viral marketing change the indices?}
First, let us consider how our baselines influence the user opinions under the
\spread.  In Figure~\ref{fig:baseline}, we report how the polarization index
changes when we pick 2\% of the nodes as seeds.
We repeat our
experiments $5$~times and present the mean and the variance.  
In Figure~\ref{fig:baseline}(a) we see that marketing campaigns have
little effect on the polarization index in the network and
increase it by less than 0.1\%. However, the situation is very different when we
consider polarizing campaigns with backfire (Figure~\ref{fig:baseline}(b)):
the polarization increases up to 60\% and typically increases 
\emph{at least} 20\% if the most influential users share the polarizing campaign.
Using random seed nodes has little impact on the polarization.
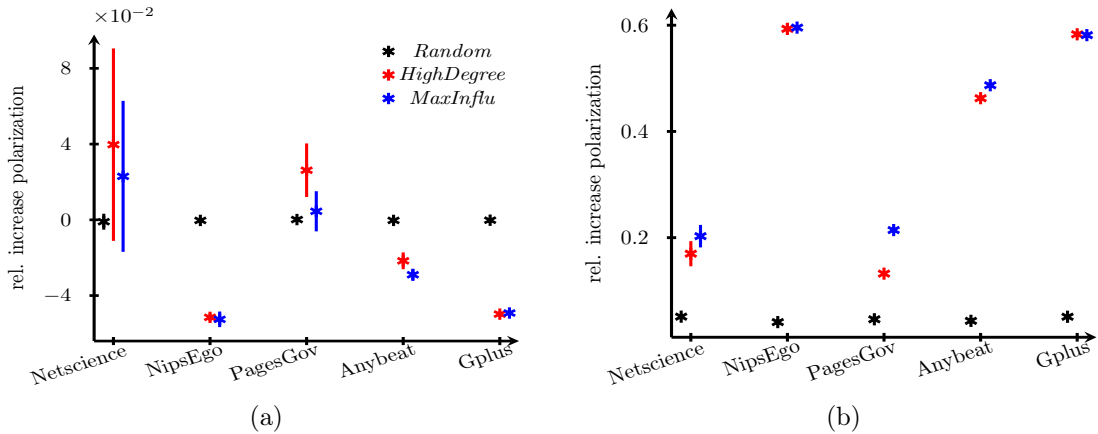
\begin{figure}[t]
	\centering
    \begin{tabular}{cc}
        \resizebox{0.48\textwidth}{0.34\textwidth}{%
			\pgfplotsset{compat=1.3}
\begin{tikzpicture}

\begin{groupplot}[group style={group size=1 by 1}]
\nextgroupplot[
tick pos = left,
axis lines=left, xtick=\empty, 
x grid style={white!69.0196078431373!black},
yticklabel style = {xshift=-1ex},
xticklabel style = {yshift=-1ex},
xmin=-0.2, xmax=4.2,
xtick style={color=black},
xtick={0,1,2,3,4},
xticklabel style={rotate=20.0,anchor=east},
xticklabels={\Netscience,\NipsEgo,\PagesGovernment,\Anybeat,\Gplus},
y grid style={white!69.0196078431373!black},
ymin=-0.0640178278669176, ymax=0.0978777631034836,
ytick style={color=black},
ytick={-0.08,-0.04,0,0.04,0.08,0.1},
ylabel={rel. increase polarization},
legend style={draw=none},
legend style={at={(1,1), anchor=north east}},
line width = 0.50 mm, 
tick style={line width=0.50mm},
x tick label style={xshift={1em}}, 
ytick scale label code/.code={$\times 10^{-2}$},
]
\path [draw=black]
(axis cs:-0.1,-0.005206868784582)
--(axis cs:-0.1,0.003229237517426);

\path [draw=black]
(axis cs:0.9,-0.001595159324288)
--(axis cs:0.9,0.000985387511902);

\path [draw=black]
(axis cs:1.9,-0.00074522972139)
--(axis cs:1.9,0.001029968531194);

\path [draw=black]
(axis cs:2.9,-0.000712947207044)
--(axis cs:2.9,0.000134865324256);

\path [draw=black]
(axis cs:3.9,-0.000497460652014)
--(axis cs:3.9,0.000112270587958);

\path [draw=red]
(axis cs:0,-0.011117022930643)
--(axis cs:0,0.090518872604829);

\path [draw=red]
(axis cs:1,-0.053535891683399)
--(axis cs:1,-0.049470479267119);

\path [draw=red]
(axis cs:2,0.012042909130786)
--(axis cs:2,0.040353483868796);

\path [draw=red]
(axis cs:3,-0.026082833383687)
--(axis cs:3,-0.017207499867765);

\path [draw=red]
(axis cs:4,-0.050316175413966)
--(axis cs:4,-0.049273967730664);

\path [draw=blue]
(axis cs:0.1,-0.016897879443348)
--(axis cs:0.1,0.062835905550274);

\path [draw=blue]
(axis cs:1.1,-0.056658937368263)
--(axis cs:1.1,-0.048427766999203);

\path [draw=blue]
(axis cs:2.1,-0.006088975465686)
--(axis cs:2.1,0.015117390237112);

\path [draw=blue]
(axis cs:3.1,-0.032178543513789)
--(axis cs:3.1,-0.025837301736653);

\path [draw=blue]
(axis cs:4.1,-0.050302475579959)
--(axis cs:4.1,-0.048114372237689);

\addplot [black, mark=asterisk, mark size=3, mark options={solid}, only marks]
table {%
-0.1 -0.000988815633578
0.9  -0.000304885906193
1.9  0.000142369404902
2.9  -0.000289040941394
3.9  -0.000192595032028
};
\addplot [red, mark=asterisk, mark size=3, mark options={solid}, only marks]
table {%
0 0.039700924837093
1 -0.051503185475259
2 0.026198196499791
3 -0.021645166625726
4 -0.049795071572315
};
\addplot [blue, mark=asterisk, mark size=3, mark options={solid}, only marks]
table {%
0.1 0.022969013053463
1.1 -0.052543352183733
2.1 0.0045142073857133
3.1 -0.029007922625221
4.1 -0.049208423908824
};
\addlegendentry{\Random}
\addlegendentry{\HighDegree}
\addlegendentry{\MaxInfluence}
\end{groupplot}

\end{tikzpicture}
		}&
		\resizebox{0.48\textwidth}{0.34\textwidth}{%
			\pgfplotsset{compat=1.3}
\begin{tikzpicture}

\begin{groupplot}[group style={group size=2 by 1}]
\nextgroupplot[
tick pos=left,
axis lines=left, xtick=\empty,
x grid style={white!69.0196078431373!black},
xmin=-0.2, xmax=4.2,
yticklabel style = {xshift=-1 ex},
xticklabel style = {yshift=-1ex},
xtick style={color=black},
xtick={0,1,2,3,4},
xticklabel style={rotate=20.0,anchor=east},
xticklabels={\Netscience,\NipsEgo,\PagesGovernment,\Anybeat,\Gplus},
y grid style={white!69.0196078431373!black},
ymin=0.0124396969741193, ymax=0.631579454382637,
ytick style={color=black},
ytick={0,0.2,0.4,0.6,0.7},
yticklabels={
  \(\displaystyle {0.0}\),
  \(\displaystyle {0.2}\),
  \(\displaystyle {0.4}\),
  \(\displaystyle {0.6}\),
  \(\displaystyle {0.7}\)
},
ylabel={rel. increase polarization},
legend style={font = \LARGE, draw=none},
legend pos= north west,
line width = 0.50 mm, 
tick style={line width=0.50mm},
x tick label style={xshift={1em}}
]
\path [draw=black]
(axis cs:-0.1,0.048816550714439)
--(axis cs:-0.1,0.053668736544641);

\path [draw=black]
(axis cs:0.9,0.040582413219961)
--(axis cs:0.9,0.041374233836029);

\path [draw=black]
(axis cs:1.9,0.045335004892114)
--(axis cs:1.9,0.047214129305732);

\path [draw=black]
(axis cs:2.9,0.043070097530129)
--(axis cs:2.9,0.043626259624727);

\path [draw=black]
(axis cs:3.9,0.050917644325787)
--(axis cs:3.9,0.051164238997957);

\path [draw=red]
(axis cs:0,0.146109966026742)
--(axis cs:0,0.193468775010156);

\path [draw=red]
(axis cs:1,0.582898762385253)
--(axis cs:1,0.603436738136795);

\path [draw=red]
(axis cs:2,0.128321248490936)
--(axis cs:2,0.136172949735356);

\path [draw=red]
(axis cs:3,0.457402918230087)
--(axis cs:3,0.467776657835823);

\path [draw=red]
(axis cs:4,0.580716308985371)
--(axis cs:4,0.585104167360463);

\path [draw=blue]
(axis cs:0.1,0.181625862679128)
--(axis cs:0.1,0.223995181589108);

\path [draw=blue]
(axis cs:1.1,0.590741909531475)
--(axis cs:1.1,0.600304527238259);

\path [draw=blue]
(axis cs:2.1,0.209928071229362)
--(axis cs:2.1,0.218645209943172);

\path [draw=blue]
(axis cs:3.1,0.482318063178492)
--(axis cs:3.1,0.491725035778242);

\path [draw=blue]
(axis cs:4.1,0.576856117713429)
--(axis cs:4.1,0.585822544929617);

\addplot [black, mark=asterisk, mark size=3, mark options={solid}, only marks]
table {%
-0.1 0.05124264362954
0.9  0.040978323527995
1.9  0.046274567098923
2.9  0.043348178577428
3.9  0.051040941661872
};
\addplot [red, mark=asterisk, mark size=3, mark options={solid}, only marks]
table {%
0 0.169789370518449
1 0.593167750261024
2 0.132247099113146
3 0.462589788032955
4 0.582910238172917
};
\addplot [blue, mark=asterisk, mark size=3, mark options={solid}, only marks]
table {%
0.1 0.202810522134118
1.1 0.595523218384867
2.1 0.214286640586267
3.1 0.487021549478367
4.1 0.581339331321523
};
\end{groupplot}

\end{tikzpicture}
		}\\
		(a)&(b)
	\end{tabular}
	\caption{The relative change of the \polidx on different datasets with
		$k=\lceil 2\%\cdot n\rceil$ seed nodes.  The plots show (a)~marketing
		campaigns and (b)~polarizing campaigns.}
\label{fig:baseline}
\end{figure}

\spara{Scalability and accuracy of the heuristics.}
In Appendix~\ref{sec:add-experiments} we show that the
heuristics scale linearly in the size of the graph and are up to three orders of
magnitude faster than the greedy algorithms, while being of similar quality.
Thus, from here onwards we focus on the heuristic methods that only take into
account the linear terms and scale to larger datasets.

\spara{Experiments for marketing campaigns.}
Next, we evaluate our methods for marketing campaigns with
$k = \lceil 0.5\%\cdot n\rceil$ seed nodes. In Table~\ref{tab:marketing} we
report the results for all previously mentioned methods, excluding \HighDegree
which behaves very similarly to \MaxInfluence.
We will consider the \sumidx and the \polidx and we will evaluate how these
indices change based on solutions of algorithms with different objectives.
While this might look counter-intuitive at first, this approach reveals
interesting connections between the different methods we consider and the
indices we optimize.  For \FJGreedy, we use two corresponding versions that
maximize the \sumidx and the \polidx, respectively; for two large datasets,
\FJGreedy and \FJUpp ran out of time.

\begin{table*}[ht!]
\caption{Results for marketing campaigns with
	$k = \lceil 0.5\%\cdot n\rceil$ seeds.
	We report the relative increase of each index in percent.}
\label{tab:marketing}
\centering
  \vspace{-4mm}
\begin{adjustbox}{max width=\textwidth}
\begin{tabular}{c cccccccc ccccccccc}
\toprule
\textbf{Dataset}  & \multicolumn{8}{c}{\textbf{\sumidx}} & \multicolumn{9}{c}{\textbf{\polidx}} \\
\cmidrule(lr){2-9} 
\cmidrule(lr){10-18} 
  &  \MaxSum  &  \MaxLinDisCon  &  \MaxLinPol  &  \MaxLinDis  &  \MaxLinInt  &   \MaxInfluence  &  \Random  &  \FJGreedy  &  \MaxSum  &  \MaxLinDisCon  &  \MaxLinPol  &  \MaxLinDis  &  \MaxLinInt  &  \MaxInfluence  &  \Random  & \FJGreedy  & \FJUpp \\
\midrule
\Netscience   & \textbf{2.79} & 2.75 & 0.74 & 1.01 & 0.21 & 2.78 & 0.27 & 0.11  & 3.15 & 3.18 & \textbf{7.54} & 5.89 & -0.35 & 3.17 & -0.06 & 2.36 &   10.54   \\
\WikiVote   & \textbf{4.14} & 4.12 & 0.53 & 0.64 & 0.48 & 4.11 & 0.3 & 0.11  & -0.64 & -0.61 & \textbf{3.83} & 3.2 & 0.81 & -0.58 & -0.06 & 2.92 &   12.29   \\
\Reed   & 3.2 & \textbf{3.22} & 0.28 & 0.3 & 0.3 & 3.2 & 0.27 & 0.1  & 0.14 & 0.09 & \textbf{10.13} & 8.04 & 0.15 & 0.11 & -0.13 & 9.56 &   68.48   \\
\EmailUniv   & 3.31 & \textbf{3.36} & 0.37 & 0.41 & 1.18 & 3.35 & 0.29 & 0.11  & 0.51 & 0.42 & \textbf{4.64} & 4.13 & 0.35 & 0.44 & -0.13 & 3.8 &   18.12   \\
\Hamster   & \textbf{4.09} & 4.07 & 0.76 & 0.81 & 0.45 & 4.06 & 0.27 & 0.1  & 0.39 & 0.46 & \textbf{7.59} & 5.82 & 0.43 & 0.72 & 0.01 & 5.18 &   25.70   \\
\USFCA   & 3.09 & 3.09 & 0.23 & 0.22 & 0.28 & \textbf{3.1} & 0.3 & 0.11  & -0.68 & -0.67 & \textbf{11.7} & 11.01 & 0.47 & -0.68 & -0.07 & 11.57 &   82.48   \\
\NipsEgo   & \textbf{18.75} & \textbf{18.75} & 0.47 & 0.1 & 0.1 & \textbf{18.75} & 0.15 & 0.1  & -5.3 & -5.29 & \textbf{1.71} & 0.51 & 0.12 & -5.29 & -0.09 & 0.79 &   5.34   \\
\PagesGovernment   & 3.47 & 3.47 & 0.53 & 0.5 & 0.44 & \textbf{3.48} & 0.28 & 0.1  & 0.78 & 0.79 & \textbf{7.31} & 5.49 & 0.85 & 0.51 & -0.06 & 6.96 &   36.87   \\
\HepPh   & 2.52 & \textbf{2.53} & 0.61 & 0.67 & 0.42 & 2.52 & 0.26 & 0.1  & -0.3 & 0.01 & \textbf{6.83} & 4.5 & 0.43 & -0.09 & -0.05 & 3.26 &   16.03   \\
\Anybeat   & \textbf{11.96} & \textbf{11.96} & 0.93 & 1.02 & 0.53 & \textbf{11.96} & 0.25 & 0.1  & -1.21 & -1.19 & \textbf{3.18} & 2.58 & 0.52 & -1.24 & -0.08 & 1.7 &   7.80   \\
\CondMat   & \textbf{2.79} & \textbf{2.79} & 0.59 & 0.65 & 0.48 & \textbf{2.79} & 0.25 & 0.1  & 0.35 & 0.58 & \textbf{6.9} & 4.61 & 0.44 & 0.31 & -0.07 & 3.42 &   15.69   \\
\Gplus   & \textbf{18.06} & \textbf{18.06} & 3.85 & 0.37 & 0.44 & \textbf{18.06} & 0.26 & 0.1  & -4.98 & -4.98 & \textbf{6.2} & 1.03 & 0.29 & -4.98 & -0.07 & 0.92 &   6.41   \\
\Brightkite   & 6.16 & 6.15 & 0.72 & 0.89 & 0.53 & \textbf{6.17} & 0.27 & -  & -0.17 & -0.06 & \textbf{4.27} & 2.53 & 0.47 & -0.24 & -0.07 & - &   -   \\
\WikiTalk   & 9.27 & 9.27 & 1.73 & 1.59 & 0.71 & \textbf{9.28} & 0.29 & -  & -0.82 & -0.71 & \textbf{3.37} & 2.63 & 0.62 & -0.79 & -0.09 & - &   -   \\
\bottomrule
\end{tabular}
\end{adjustbox}
\end{table*}

Let us consider the \sumidx. The methods \MaxSum,
\MaxLinDisCon and \MaxInfluence typically achieve the highest values and all
of them are of similar quality. Not surprisingly, this suggests that for
marketing campaigns maximizing the user opinions is essentially the same as
maximizing influence. For nine datasets, the \sumidx increases by less than 5\%
but for some it increases by up to 18.75\%.  Quite interestingly, only on two
datasets \MaxLinPol increases the \sumidx by more than 1\%, which suggests that
the solutions of \MaxLinPol and the other methods are quite dissimilar.
Additionally, we observe that the solution by \FJGreedy barely increases the
\sumidx.

However, the situation is quite different for the \polidx. Here, \MaxLinPol
clearly achieves the biggest increases followed by \MaxLinDis and \FJGreedy.
Interestingly, on several datasets the seed nodes produced by \MaxSum,
\MaxLinDisCon and \MaxInfluence even \emph{decrease} the polarization; we
explain this by the fact that if many users increase their opinions with respect
to a topic, then the overall acceptance of this topic increases and the topic
becomes less polarizing. 
Additionally, we observe that on all datasets,
\MaxLinPol achieves slightly higher values than \FJGreedy, even though \FJGreedy
can change the $k$ innate opinions arbitrarily much, while our marketing
campaign can increase each innate user opinion by at most $\varepsilon$.

For both indices, \Random and \MaxLinInt have little to no effect.

\spara{Experiments with polarizing campaigns.}
Next, we consider polarizing campaigns with backfire and
$k = \lceil 0.5\%\cdot n\rceil$ seed nodes. We report our results in
Table~\ref{tab:backfire}.
\begin{table*}[ht!]
\caption{Results for polarizing campaigns with
	$k = \lceil 0.5\%\cdot n\rceil$ seeds.
	We report the relative increase of each index in percent.}
\label{tab:backfire}
\centering
  \vspace{-4mm}
\begin{adjustbox}{max width=\textwidth}
\begin{tabular}{c ccccccccc ccccccccc}
\toprule
\textbf{Dataset}  & \multicolumn{8}{c}{\textbf{\sumidx}} & \multicolumn{9}{c}{\textbf{\polidx}} \\
\cmidrule(lr){2-9} 
\cmidrule(lr){10-18} 
  &  \MaxSum  &  \MaxLinDisCon  &  \MaxLinPol  &  \MaxLinDis  &  \MaxLinInt  &   \MaxInfluence  &  \Random  &  \FJGreedy  &  \MaxSum  &  \MaxLinDisCon  &  \MaxLinPol  &  \MaxLinDis  &  \MaxLinInt  &  \MaxInfluence  &  \Random  & \FJGreedy  & \FJUpp \\
\midrule
\Netscience   & \textbf{0.48} & 0.46 & 0.04 & 0.07 & 0.34 & 0.34 & -0.01 & 0.11  & 2.72 & 5.03 & \textbf{7.62} & 5.54 & 5.65 & 5.66 & 0.62 & 2.36 &   10.54   \\
\WikiVote   & \textbf{0.33} & 0.25 & -0.28 & -0.28 & -0.27 & -0.33 & -0.02 & 0.11  & 3.14 & 5.83 & 9.46 & \textbf{9.5} & 9.09 & 9.14 & 0.64 & 2.92 &   12.29   \\
\Reed   & \textbf{0.27} & \textbf{0.27} & 0.02 & 0.18 & 0.16 & 0.16 & 0.01 & 0.1  & 6.6 & 7.68 & \textbf{15.82} & 11.35 & 8.87 & 8.79 & 0.74 & 9.56 &   68.48   \\
\EmailUniv   & \textbf{0.33} & 0.31 & -0.23 & -0.21 & -0.18 & -0.16 & -0.02 & 0.11  & 3.39 & 5.23 & \textbf{7.72} & 7.36 & 6.87 & 6.81 & 0.79 & 3.8 &   18.12   \\
\Hamster   & \textbf{0.3} & 0.27 & -0.07 & -0.09 & -0.11 & -0.1 & -0.01 & 0.1  & 4.3 & 5.2 & \textbf{11.1} & 8.98 & 8.55 & 8.56 & 0.66 & 5.18 &   25.70   \\
\USFCA   & \textbf{0.22} & \textbf{0.22} & 0.05 & -0.04 & -0.02 & -0.02 & -0.01 & 0.11  & 6.85 & 10.05 & \textbf{13.18} & 7.39 & 5.68 & 5.74 & 0.74 & 11.57 &   82.48   \\
\NipsEgo   & \textbf{0.46} & 0.22 & 0.2 & 0.21 & 0.21 & 0.21 & 0.0 & 0.1  & 38.05 & 59.55 & \textbf{59.57} & 59.56 & 59.55 & \textbf{59.57} & 0.46 & 0.79 &   5.34   \\
\PagesGovernment   & \textbf{0.29} & 0.28 & -0.03 & 0.01 & 0.0 & -0.0 & -0.0 & 0.1  & 4.78 & 5.89 & \textbf{12.73} & 10.1 & 6.19 & 7.48 & 0.66 & 6.96 &   36.87   \\
\HepPh   & \textbf{0.33} & 0.32 & -0.01 & -0.08 & -0.1 & -0.1 & -0.01 & 0.1  & 3.69 & 5.17 & \textbf{7.62} & 5.91 & 4.61 & 5.16 & 0.66 & 3.26 &   16.03   \\
\Anybeat   & \textbf{0.42} & 0.3 & 0.11 & 0.12 & 0.14 & 0.12 & 0.0 & 0.1  & 29.3 & 38.14 & \textbf{39.84} & 39.75 & 39.12 & 39.55 & 0.48 & 1.7 &   7.80   \\
\CondMat   & \textbf{0.36} & 0.32 & 0.01 & 0.01 & 0.02 & 0.02 & 0.0 & 0.1  & 4.26 & 5.58 & \textbf{8.28} & 6.68 & 5.32 & 5.84 & 0.65 & 3.42 &   15.69   \\
\Gplus   & \textbf{0.49} & 0.15 & 0.1 & 0.1 & 0.1 & 0.1 & 0.0 & 0.1  & 29.75 & 57.48 & \textbf{57.94} & \textbf{57.94} & 57.92 & 57.93 & 0.66 & 0.92 &   6.41   \\
\Brightkite   & \textbf{0.38} & 0.24 & -0.02 & -0.0 & 0.0 & 0.01 & 0.0 & -  & 5.66 & 13.35 & \textbf{15.86} & 15.65 & 15.17 & 15.58 & 0.7 & - &   -   \\
\WikiTalk   & \textbf{0.49} & 0.29 & 0.02 & 0.01 & 0.02 & 0.02 & 0.0 & -  & 13.46 & 25.84 & \textbf{28.79} & 28.71 & 28.19 & 28.57 & 0.73 & - &   -   \\
\bottomrule
\end{tabular}
\end{adjustbox}
\end{table*}

Let us start with the \sumidx. Unlike in marketing campaigns, now \MaxSum is
clearly the best method overall and outperforms \MaxInfluence. However, for all
methods the increase is very small, indicating that it is difficult to increase
the sum of the user opinions with polarizing campaigns.

Now consider the \polidx where the increases compared to the marketing
campaigns are startling. On 10 out of 14 datasets, \MaxLinPol increases the
polarization by \emph{at least} 10\% and the biggest increases reach up to 59\%.
This is in stark contrast to marketing campaigns where on all but two datasets
the polarization increased by \emph{at most} 10\%. Even
\MaxInfluence always increases the polarization by more than 5\% and up to 59\%.
Next, we observe that \MaxLinPol achieves much larger increases in polarization
than \FJGreedy, typically being at least factor~2 larger and up to factor~75
(for \NipsEgo).  Finally, we observe that on three datasets, \MaxLinPol
outperforms that analytic lower bound \FJUpp and for 10 out of 12 datasets it is
within factor~3. These findings suggest that for polarizing campaigns, the
information spread is very powerful compared to only changing the innate
opinions of a given set of users.

\spara{Further experiments} are provided in Appendix~\ref{sec:add-experiments}.

\section{Conclusions}
\label{sec:conclusions}
We presented a novel model that allows to quantify how viral information
effects user opinions in online social networks.  We presented algorithms to
simulate the model and to optimize different network indices. This allowed us to
understand how much impact adversaries can have on the social network.
Our experiments showed that marketing campaigns and polarizing contents behave very
differently. While for marketing campaigns it is possible to significantly
increase the user opinions, this seems very difficult for polarizing contents.
However, the picture is vastly different for the polarization in the network:
it barely increases for marketing campaigns but for polarizing
contents the increase can be very high, even when the number of seed
nodes is small. We believe that this gives an insight into the growing polarization
observed in today's social media.

There are several interesting directions for future work. Obtaining
approximation algorithm for polarizing contents is intriguing. Another important
question is to study how the parameters of our model should be set to capture
real-world behaviors as accurately as possible; beyond pure parameter
estimation, this might involve replacing the independent cascade model or the
FJ-model with other models from the literature.

\section*{Acknowledgements}
We are deeply grateful to Aristides Gionis for his mentorship and many
discussions during this project.
This research is supported by the Academy of Finland projects AIDA~(317085) and
	MLDB~(325117), the ERC Advanced Grant REBOUND~(834862), the EC~H2020 RIA
	project SoBigData++~(871042), and the Wallenberg~AI, Autonomous Systems and
	Software Program~(WASP) funded by the Knut and Alice Wallenberg Foundation.
Our computations were enabled by resources provided by the Swedish National
Infrastructure for Computing (SNIC) at UPPMAX partially funded by the
Swedish Research Council through grant agreement no. 2018-05973.

\bibliographystyle{plain}
\bibliography{bibfile}

\clearpage
\appendix

\section{Overview of the Appendix}
The appendix is organized as follows:
\begin{itemize}
	\item Sec.~\ref{sec:omitted} contains omitted proofs from the main text.
	\item Sec.~\ref{sec:comparison} presents a comparison of the
		spread--acknowledge model and the FJ model for the sum index.
	\item Sec.~\ref{sec:add-experiments} presents further experimental
		evaluation of our model and our algorithms.
\end{itemize}

\section{Omitted Proofs}
\label{sec:omitted}

\subsection{Proof of Lemma~\ref{lem:equivalence}}
\label{sec:proof-equivalence}
Before we present the formal proof, we first present a proof sketch.
\begin{proof}[Proof Sketch]
	We prove the first claim by induction over the number of rounds.
	In round~$0$, we assumed that both models were initialized with
	the same innate opinions $\begop$ and the same seed nodes. Thus, the
	distributions are deterministic and identical. Now consider round $t>0$.
	For the \spread we note that Phase~I does not change the innate opinions
	and it has no impact on Phase~II.  Therefore, we can
	ignore Phase~I and the model coincides with the two-stage model.
	Now the claim follows from the induction hypothesis.
	
	The second claim essentially follows from the fact that the expressed
	equilibrium opinions $\z^*$ and $\zz^*$ are deterministic
	transformations of $\s$ and $\ss$. 
	However, there is a	subtlety:   
	While in the two-stage model, all opinion updates (as per Equation~\eqref{eq:opinion-round}) are performed
	based on the final vector of innate opinions $\ss$, this is not the
	case in the \spread. In the \spread, it is possible that the innate
	opinions change from round to round and, therefore, it could be possible
	that the opinion updates from early rounds might alter $\z^*$ (as
	per Equation~\eqref{eq:opinion-equilibrium}).  However, we show that this is
	not the case since the equilibrium opinions $\z^*$  are independent from how
	the initial expressed opinions $z^{(0)}$ were initialized in
	Equation~\eqref{eq:opinion-round}. Thus, we can ignore all
	opinion updates that were performed while the innate opinion
	vector did not correspond to $\s$. Now, since $\s$ and $\ss$ follow
	the same distribution and since $\z^*$ and $\zz^*$ are deterministic
	transformations of $\s$ and $\ss$, we obtain the claim.
\end{proof}

Now we proceed to the formal proof of the lemma.

\begin{proof}
    First, as mentioned above, Phase~I does not change the innate opinions
	and it has no impact on Phase~II.  
     Therefore, both models must generate the same distribution over the states of the users. 
	This implies that the distribution of $\s$ and $\ss$ are the same at the end
	of the process for both models.
    The different part is, when the information spread ceases, the two-stage model 
    and \spread may have different intermediate expressed opinions. 
    Therefore, we will prove that the equilibrium expressed opinions only depend on the 
    innate opinions, rather than the intermediate expressed opinions.

   Next, we present an equivalent formulation of the update rule of the \fj.
   The new formulation allows us to give a simple proof that when the innate
   opinions change over time, then the final expressed
   opinion $\z^*$ only depends on the last vector of innate opinions; in
   particular, $\z^*$ is independent of the intermediate expressed and innate
   opinions.

   Let $\Stoc$ be an $n \times n$ row stochastic matrix, let $\Lambda$ be a $n \times n$ diagonal matrix.
   Specifically, let
   $$\eStoc{i}{j} = \frac{w_{i,j}}{\sum_{j\in N(i)}w_{i,j}}$$
   and let
   $$\Lambda_{ii} = \frac{\sum_{j\in N(i)} w_{i,j}}{1 + \sum_{j \in N(i)}
	   w_{i,j}}.$$
   The update rule of the \fj from Equation~\eqref{eq:opinion-round} can thus be 
   equivalently formulated as:
   \begin{equation}
       \label{eq:opinion-round-2}
       \finop^{(t+1)} = \Lambda \Stoc \finop^{(t)} + (\ID - \Lambda)\begop.
   \end{equation}

   The following lemma shows how the expressed opinions evolve when the vector
   of innate opinions~$\begop$ is fixed. Note that in the lemma we do not make
   the assumption that $\finop^{(t)}$ was obtained from the same vector of
   innate opinions $\begop$.
   \begin{lemma}
      \label{lemma:opinion-model-1}
       Consider the vector of innate opinions $\begop$ and the expressed opinions
	   $\finop^{(t)}$ at time step $t$. 
	   Suppose that we perform $T$ additional time steps while $\begop$ is
	   fixed. Then based 
       on the updating rule of Equation~\eqref{eq:opinion-round}, 
       the expressed opinion becomes
       \begin{equation}
           \label{eq:opinion-round-2-intermediate}
       \finop^{(t+T)} = (\Lambda \Stoc)^T \finop^{(t)} + \sum_{i=0}^{T-1} (\Lambda \Stoc)^i (\ID - \Lambda) \begop. 
       \end{equation} 
   \end{lemma}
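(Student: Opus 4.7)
The plan is to prove the formula by induction on $T$, using the equivalent update rule in Equation~\eqref{eq:opinion-round-2} as the atomic step. The base case $T=1$ is immediate: the claimed formula specializes to $\finop^{(t+1)} = (\Lambda\Stoc)\finop^{(t)} + (\ID-\Lambda)\begop$, which is exactly Equation~\eqref{eq:opinion-round-2}. Note that this step implicitly uses the equivalence between Equation~\eqref{eq:opinion-round} and Equation~\eqref{eq:opinion-round-2}; since that equivalence is already asserted in the preceding text (and follows directly by checking $(\m+D+\ID)^{-1}\m+W = \Lambda\Stoc$ and $(\m+D+\ID)^{-1} = \ID-\Lambda$ entrywise), I would simply invoke it rather than re-derive.

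For the inductive step, assume the formula holds after $T$ additional updates. Applying Equation~\eqref{eq:opinion-round-2} once more to $\finop^{(t+T)}$ and then substituting the inductive hypothesis for $\finop^{(t+T)}$ yields
$$\finop^{(t+T+1)} = \Lambda\Stoc\left[(\Lambda\Stoc)^T \finop^{(t)} + \sum_{i=0}^{T-1}(\Lambda\Stoc)^i(\ID-\Lambda)\begop\right] + (\ID-\Lambda)\begop.$$
The first bracket collapses to $(\Lambda\Stoc)^{T+1}\finop^{(t)}$, and pushing $\Lambda\Stoc$ inside the sum followed by the reindexing $i+1 \mapsto i$ turns the remaining terms into $\sum_{i=1}^{T}(\Lambda\Stoc)^{i}(\ID-\Lambda)\begop$. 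Absorbing the trailing $(\ID-\Lambda)\begop$ as the $i=0$ term gives the desired $\sum_{i=0}^{T}(\Lambda\Stoc)^{i}(\ID-\Lambda)\begop$, completing the induction.

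The main obstacle is minimal, since this is a standard unrolling of a linear recurrence; the only place to be careful is the index shift when distributing $\Lambda\Stoc$ through the summation. The value of the lemma lies not in its difficulty but in what it sets up for the rest of the argument in Appendix~\ref{sec:proof-equivalence}: writing $\finop^{(t+T)}$ as the sum of a term damped by $(\Lambda\Stoc)^T$ and a term depending only on the \emph{current} innate opinion vector $\begop$ makes it manifest that, as $T\to\infty$, the equilibrium depends only on $\begop$ and is independent of the initial expressed opinions $\finop^{(t)}$. This is exactly what is needed to argue that any intermediate expressed opinions in the \spread can be ignored when computing $\finop^*$, and hence that the two-stage model produces the same distribution over $\finop^*$ as the \spread.
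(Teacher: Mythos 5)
Your proof is correct and takes essentially the same route as the paper: the paper's own proof is the one-line remark that the formula ``follows from applying Equation~\eqref{eq:opinion-round-2} $T$ times,'' and your induction on $T$ (with the base case given by Equation~\eqref{eq:opinion-round-2} and the index shift handled in the inductive step) is simply the careful write-up of that same unrolling. No gaps.
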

   \begin{proof}
       This follows from applying Equation~\eqref{eq:opinion-round-2} $T$ times.
   \end{proof}
  
   As a corollary we obtain that when the innate opinions
   change over time, then the final expressed opinions only depend on the
   \emph{last} vector of innate opinions.
   \begin{corollary}
      \label{corollary:opinion-model-2}
	  Assume we perform a sequence of modification on the innate opinions
	  during some finite time span and let the innate opinions after the last
	  modification be $\s$. Then final expressed opinion vector is given by:
      \begin{equation}
       \z^* = (\ID - \Lambda \Stoc)^{-1}(\ID - \Lambda) \s. 
      \end{equation}
  \end{corollary}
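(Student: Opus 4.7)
The plan is to take Lemma~\ref{lemma:opinion-model-1} as the starting point and then pass to the limit. Let $t_0$ be the time step immediately after the last modification of the innate opinions, so that for all $t \geq t_0$ the innate opinion vector is fixed at $\s$. Let $\finop^{(t_0)}$ denote the expressed opinion vector at this moment (which depends on the entire history of modifications and updates). Applying Lemma~\ref{lemma:opinion-model-1} with $\begop = \s$ and starting time $t_0$, for every $T \geq 0$ we have
\begin{equation*}
\finop^{(t_0 + T)} = (\Lambda \Stoc)^T \finop^{(t_0)} + \left(\sum_{i=0}^{T-1} (\Lambda \Stoc)^i\right) (\ID - \Lambda) \s.
\end{equation*}
The equilibrium $\z^*$ is obtained by letting $T \to \infty$, so it suffices to analyze the two terms separately.

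The key step is to show that the spectral radius of $\Lambda \Stoc$ is strictly smaller than~$1$. Note that $\Stoc$ is row-stochastic, so $\|\Stoc\|_\infty = 1$, and $\Lambda$ is a nonnegative diagonal matrix with entries $\Lambda_{ii} = \frac{\sum_{j \in N(i)} w_{i,j}}{1 + \sum_{j \in N(i)} w_{i,j}} < 1$. Hence the induced $\ell_\infty$-norm satisfies
\begin{equation*}
\|\Lambda \Stoc\|_\infty \leq \|\Lambda\|_\infty \cdot \|\Stoc\|_\infty = \max_i \Lambda_{ii} < 1,
\end{equation*}
which bounds the spectral radius by the same quantity. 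Consequently, $(\Lambda \Stoc)^T \to \m+0$ as $T \to \infty$, and the geometric series $\sum_{i=0}^{\infty} (\Lambda \Stoc)^i$ converges to $(\ID - \Lambda \Stoc)^{-1}$.

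Combining these two facts, the first term $(\Lambda \Stoc)^T \finop^{(t_0)}$ vanishes in the limit \emph{regardless of the value of $\finop^{(t_0)}$}, which is precisely the point stated in the proof sketch of Lemma~\ref{lem:equivalence}: the equilibrium does not depend on the intermediate expressed opinions nor on the earlier innate opinions. The surviving term yields
\begin{equation*}
\z^* = \lim_{T \to \infty} \finop^{(t_0 + T)} = (\ID - \Lambda \Stoc)^{-1} (\ID - \Lambda) \s,
\end{equation*}
which is the claimed identity. No step appears to pose a real obstacle here; the only nontrivial ingredient is the spectral radius bound, and that follows directly from the structure of $\Lambda$ and the row-stochasticity of $\Stoc$.
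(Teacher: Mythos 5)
Your proposal is correct and follows essentially the same route as the paper: apply Lemma~\ref{lemma:opinion-model-1} from the time of the last modification, observe that the spectral radius of $\Lambda \Stoc$ is strictly below one so the term involving the intermediate expressed opinions vanishes, and sum the resulting geometric series to get $(\ID - \Lambda \Stoc)^{-1}(\ID - \Lambda)\s$. The only difference is that you explicitly justify the spectral-radius bound via the induced $\ell_\infty$-norm, whereas the paper simply asserts it; this is a welcome addition rather than a deviation.
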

  \begin{proof}
      Let the $t$ be time step after which we perform the final modification to
	  the innate opinions. Then we apply Lemma~\ref{lemma:opinion-model-1} and
      it follows that $\z^{(t+T)} = (\Lambda \Stoc)^T \z^{(t)} + \sum_{i=0}^{T-1} (\Lambda \Stoc)^i (\ID - \Lambda) \s$
	  for all $T\geq0$. 
      Since the absolute value of the largest eigenvalue of $\Lambda \Stoc$ is strictly smaller than 1, 
      $\lim_{T \rightarrow \infty} (\Lambda \Stoc)^T = 0$ 
      and $\lim_{T \rightarrow \infty} \sum_{i=0}^T(\Lambda \Stoc)^i = (\ID - \Lambda \Stoc)^{-1}$. 
      It follows that
	  \begin{align*}
	  	\z^*
		= \lim_{T \rightarrow \infty} \z^{(T+t)}
		= (\ID - \Lambda \Stoc)^{-1}(\ID - \Lambda) \s.
		\quad \qedhere
	  \end{align*}
  \end{proof}

  To prove that $\z^*$ and $\zz^*$ have the same distribution, recall that we
  already argued that $\s$ and $\ss$ have the same distribution. Now observe
  that $\z^*$ and $\zz^*$ are deterministic transformations of $\s$ and $\ss$
  and by Corollary~\ref{corollary:opinion-model-2} they converge to the same
  vector for a given vector of innate opinions. This finishes the proof of the
  lemma.
\end{proof}

\subsection{Proof of Lemma~\ref{lemma:hard-computing}}
The lemma follows from the fact that our model generalizes the independent
cascade model. More concretely, suppose that $\begop = 0$, i.e., all initial
innate opinions are initialized to 0. Furthermore, suppose that we set
$\varepsilon=1$ and $\delta=1$. Then it can be seen that $\Delta \ebegop{u}=1$
for all $u \in V$ and, since $\delta=1$, all vertices have the states inactive,
ignore or spread.  Now if we set $\MasIdx{\mathcal{M}}$ to the $n \times n$
identity matrix then we observe that computing $\Exp[F(\cdot)]$ in our model is
the same as computing the influence spreading through the \emph{independent
cascade model}. Since the latter is known to be \NPhard~\cite{kempe2015maximizing} and \SPhard~\cite{chen2010scalable}, we obtain
the lemma.

\subsection{Proof of Lemma~\ref{lemma:fr-1}}
Let $I_g(S)$ be the set of all nodes that can be reached from any node of $S$ in
$g$ through a live path.  Let $\pathworld{S}{u} = 1$ if there is a live path
from any node in $S$ to $u$ in $g$ and $\pathworld{S}{u} = 0$ otherwise. Let
$\randomrrp_{g, u}$ be a RR-set of $u$ in $g$. Then $F(S)$ on $g$ can be
formulated as:

\begin{equation*}
  \begin{aligned}
&\Exp[F_g(S)] \\
&=\sum_{u,v \in V} \frac{1}{n} \IND[\pathworld{S}{u} = 1] \LinGain{u} + \IND[\pathworld{S}{u} \pathworld{S}{v} = 1] \PolGain{u}{v} \\
&= \sum_{u,v \in V} \frac{1}{n} \IND[\randomrrp_{g, u} \cap S \neq \emptyset] \LinGain{u} + \IND[\randomrrp_{g, u} \cap S \neq \emptyset, \randomrrp_{g, v} \cap S \neq \emptyset] \PolGain{u}{v}
  \end{aligned}
\end{equation*}

Now let $g$ be randomly generated possible world. We denote the distribution of
$g$ by $\c+G$ and write $g \sim \c+G$. Then we obtain:
\begin{equation}
\label{eq:bridge}
    \begin{split}
        &\Exp[F(S)] \\
		    &= \Exp_{g \sim \c+G} [\sum_{u,v \in V} (\IND[\randomrrp_{u} \cap S \neq \emptyset] \LinGain{u} + n\IND[\randomrrp_{u} \cap S \neq \emptyset, \randomrrp_{v} \cap S \neq \emptyset] \PolGain{u}{v})] \\
        &= \sum_{u,v \in V} \Exp_{g \sim \c+G} [ \IND[\randomrrp_{u} \cap S \neq \emptyset] \LinGain{u} + n\IND[\randomrrp_{u} \cap S \neq \emptyset, \randomrrp_{v} \cap S \neq \emptyset] \PolGain{u}{v}] \\
        &= n \Exp_{(u, v) \sim V^2, g \sim \c+G} \estimator \\
        &= n \Exp_{(u, v) \sim V^2, g \sim \c+G}[\frac{\sum_{(\randomrrp_u, \randomrrp_v) \in \c+R} \estimator}{\abs{\c+R}}] \\
        &= n \Exp_{(u, v) \sim V^2, g \sim \c+G}[\fracrrp{S}]
    \end{split}
\end{equation}

Thus $n\fracrrp{S}$ with sample $\c+R$ is an unbiased estimator of $\Exp[F(S)]$. 

\subsection{Proof of Lemma~\ref{lem:approximation}}
Before we prove the lemma, we first introduce martingales and some concentration
inequalities. We prove the lemma at the end of the subsection.

Let $x_1, x_2, \ldots, x_{\theta}$ be a sequence of random variables. 
For each $1\leq i \leq \theta$, we set $x_i = \estimator$.
Now observe that $\fracrrp{S} = \frac{\sum_{i=1}^\theta x_i}{\theta}$
and according to Lemma~\ref{lemma:fr-1}, $\set+E[F(S)] = \frac{n}{\theta}\cdot\set+E[\sum_{i=1}^\theta x_i]$. 
Similar to IMM~\cite{tang2015influence} and TDEM~\cite{aslay2018maximizing}, 
we determine the sample size by concentration laws of martingales. Let us first
introduce martingales. 

\begin{definition}[Martingale]
   A sequence of random variables $Y_1, Y_2, Y_3, \ldots$ 
   is a \emph{martingale} if and only if $\set+E[\abs{Y_i}] < +\infty$
   and $\set+E[Y_i \mid Y_1, Y_2, \ldots, Y_{i-1}] = Y_{i-1}$ 
   for any $i$.
\end{definition}

Note that as the generation of an \rr-set $\randomrrp_i$ is independent of $\randomrrp_1, \ldots, \randomrrp_{i-1}$, 
we have $\Exp[x_i \mid x_1, \ldots, x_{i-1}] = \frac{\Exp[F(S)]}{n^2}$. 
Now let $x = \frac{1}{n}\Exp[F(X)]$ and $M_j= \sum_{z=1}^{j}(x_z - x)$.
Then $\Exp[M_j] = 0$ and
\begin{displaymath}
 \begin{aligned}
\Exp[M_j \mid M_1, \ldots, M_{j-1}] &= \Exp[M_{j-1} + x_j - x \mid M_1, \ldots, M_{j-1}] \\
&= M_{j-1} - x + \Exp[x_j] \\
&= M_{j-1}.
\end{aligned}
\end{displaymath}
Therefore, the sequence $M_1, \ldots, M_{\theta}$ is a martingale.
Next, we restate a concentration inequality for martingales by 
McDiarmid~\cite{mcdiarmid1998concentration}, which we cite from Chung and Lu~\cite{chung2006concentration}.
\begin{lemma}[Theorem $6.1$ in~\cite{chung2006concentration}]
\label{lemmaMartIneqLiterature-1}
	Let $Y_1, Y_2, \ldots $ be a martingale,
	such that $Y_1 \le a$, $\Var[Y_1] \le b_1$, $\lvert Y_z - Y_{z-1} \rvert \le a$ for $z \in [2,j]$, and
	\[
	\Var[Y_z \mid Y_1, \ldots, Y_{z-1}] \le b_j, \text{ for } z \in [2,j],
	\]
	where $\Var[\cdot]$ denotes the variance. Then, for any $\gamma > 0$
	\begin{displaymath}
	\begin{aligned}
	\Pr(Y_j - \Exp[Y_j] \ge \gamma) \le \exp\left(-\dfrac{\gamma^2}{2(\sum_{z = 1}^{j} b_z + a \gamma /3)} \right)
	\end{aligned}
	\end{displaymath}
\end{lemma}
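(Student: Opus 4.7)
The plan is to prove this by the standard exponential-moment (Chernoff/Freedman) method for martingale differences. Set $X_z = Y_z - Y_{z-1}$ for $z \in [2,j]$ and $X_1 = Y_1 - \Exp[Y_1]$, so that the $X_z$ form a martingale difference sequence with $\Exp[X_z \mid \mathcal{F}_{z-1}] = 0$, $\Var[X_z \mid \mathcal{F}_{z-1}] \le b_z$, and the one-sided upper bound $X_z \le a$. For any $t > 0$, Markov's inequality gives
\[
\Pr(Y_j - \Exp[Y_j] \ge \gamma) \;\le\; e^{-t\gamma}\, \Exp\Bigl[\exp\Bigl(t\sum_{z=1}^{j} X_z\Bigr)\Bigr].
\]

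The key ingredient will be a conditional MGF bound of Bennett type: for any mean-zero random variable $X$ with $X \le a$ and $\Exp[X^2] \le \sigma^2$,
\[
\Exp[e^{tX}] \;\le\; \exp\!\Bigl(\sigma^2 \cdot \tfrac{e^{ta} - 1 - ta}{a^2}\Bigr).
\]
I would derive this from the pointwise inequality $e^{tx} \le 1 + tx + x^2 \cdot (e^{ta}-1-ta)/a^2$ valid for $x \le a$ (proved by comparing Taylor series coefficients and using monotonicity of $(e^u - 1 - u)/u^2$), taking expectations, and applying $1 + y \le e^y$. Applying this conditionally with $\sigma^2 = b_z$ and iterating by the tower property from $z = j$ down to $z = 1$ gives
\[
\Exp\Bigl[\exp\Bigl(t\sum_{z=1}^{j} X_z\Bigr)\Bigr] \;\le\; \exp\!\Bigl( B \cdot \tfrac{e^{ta} - 1 - ta}{a^2}\Bigr), \qquad B := \sum_{z=1}^{j} b_z.
\]

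To convert from Bennett to Bernstein form, I would use the elementary calculus inequality
\[
\frac{e^u - 1 - u}{u^2} \;\le\; \frac{1}{2(1 - u/3)} \qquad \text{for } 0 \le u < 3,
\]
which yields $\Exp[\exp(t(Y_j - \Exp[Y_j]))] \le \exp\bigl(B t^2 / (2(1 - at/3))\bigr)$ provided $at < 3$. Plugging this back into the Markov bound gives
\[
\Pr(Y_j - \Exp[Y_j] \ge \gamma) \;\le\; \exp\!\Bigl(-t\gamma + \tfrac{B t^2}{2(1 - at/3)}\Bigr).
\]
The final step is to optimize in $t$. I would take $t = \gamma / (B + a\gamma/3)$ (which one can verify is the minimizer after a short computation), substitute, and simplify to obtain exactly $\exp(-\gamma^2 / (2(B + a\gamma/3)))$, completing the proof.

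The main obstacle I expect is the Bennett MGF bound itself — in particular justifying the pointwise inequality $e^{tx} \le 1 + tx + \frac{e^{ta}-1-ta}{a^2} x^2$ for all $x \le a$. This requires showing that $\varphi(x) := (e^{tx} - 1 - tx)/x^2$ is nondecreasing in $x$ (so it is maximized at the endpoint $x = a$), which follows from a direct Taylor-series comparison but is the only step that is not purely mechanical. The remaining work, namely the iterated tower argument and the optimization over $t$, is routine once that lemma is in hand; a small additional subtlety is the handling of $Y_1$, where we only have $Y_1 \le a$ rather than $|Y_1 - \Exp[Y_1]| \le a$, but this is fine because the Bennett bound above needs only the one-sided upper bound $X_1 \le a - \Exp[Y_1] \le a$ together with the variance control $\Var[Y_1] \le b_1$.
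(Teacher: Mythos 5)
Your proof is correct, but note that the paper does not actually prove this lemma at all: it is imported verbatim from the literature (McDiarmid's martingale Bernstein inequality, as restated in Theorem~6.1 of Chung and Lu), and the paper only cites it before applying it to the martingale $M_1,\dots,M_\theta$. What you have written is a sound self-contained derivation of that cited result by the standard Freedman-type exponential-moment method: the pointwise bound $e^{tx}\le 1+tx+x^2\,\frac{e^{ta}-1-ta}{a^2}$ for $x\le a$ (via monotonicity of $u\mapsto (e^u-1-u)/u^2$), the tower-property iteration, the Bennett-to-Bernstein conversion $(e^u-1-u)/u^2\le \frac{1}{2(1-u/3)}$ for $0\le u<3$ (which follows from $2\cdot 3^{k-2}\le k!$ termwise), and the optimizer $t=\gamma/(B+a\gamma/3)$, which indeed satisfies $at<3$ and yields exactly $\exp\bigl(-\gamma^2/(2(B+a\gamma/3))\bigr)$. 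The only step I would flag is your treatment of the first increment: from the hypothesis $Y_1\le a$ you conclude $X_1=Y_1-\Exp[Y_1]\le a-\Exp[Y_1]\le a$, which silently assumes $\Exp[Y_1]\ge 0$; as stated, the lemma's hypotheses do not guarantee this. This is really an artifact of the paper's slightly garbled restatement of Chung--Lu (where the martingale is anchored at $Y_0=\Exp[Y_j]$ so the first increment is automatically centered, and where the displayed condition should read $\Var[Y_z\mid Y_1,\dots,Y_{z-1}]\le b_z$ rather than $b_j$); in the paper's actual application one has $\Exp[M_1]=0$, so nothing breaks, but a fully rigorous standalone proof should either add the assumption $\Exp[Y_1]\ge 0$ or carry $a-\Exp[Y_1]$ through the first factor.
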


We now use Lemma~\ref{lemmaMartIneqLiterature-1} 
to get the concentration result for the martingale
$M_1, \ldots, M_{\theta}$.
As we define the random variables $x_i$ as $x_i = \estimator = \IND[(\randomrrp_u \cap S) \neq \emptyset] \LinGain{u}   + n\IND[(\randomrrp_u \cap S) \neq \emptyset, (\randomrrp_v \cap S) \neq \emptyset] \PolGain{u}{v}$, 
let $\chi = \max_{u, v} \abs{\LinGain{u} + n\PolGain{u}{v}}$, then $\max x_i \leq \chi$. 

Since $x_j \in [-\chi,\chi]$ for all $j \in \{1,\dots,\theta\}$,
we have $\lvert M_1 \rvert= \lvert x_1 - x \rvert \le 2\chi$ and
$\lvert M_j - M_{j-1} \rvert \le 2\chi$ for any $j \in [2,\theta]$.
Additionally, it holds that
 $\Var[M_1] = \Var[x_1]$, and for any $j \in \{2,\dots,\theta\}$ we have that:
\begin{displaymath}
 \begin{aligned}
\Var[M_j  \mid M_1, \ldots, M_{j-1}] 
&=  \Var[M_{j-1} + x_j - x \mid M_1, \ldots, M_{j-1}] \\
&=\Var[x_j \mid M_1, \ldots, M_{j-1}]  \\
&=\Var[x_j].
\end{aligned}  
\end{displaymath}

And for $\Var[x_j]$ we have that:
\begin{displaymath}
  \begin{aligned}
    \Var[x_j] &= \Exp[x_j^2] - \Exp[x_j]^2 
    \leq \chi x - x^2
    \leq \chi x.
  \end{aligned} 
\end{displaymath}

By using Lemma~\ref{lemmaMartIneqLiterature-1}, for
$M_{\theta}=  \sum_{j=1}^{\theta}(x_j - x)$, with
$\Exp[M_{\theta}] = 0$, $a = 2\chi$, $b_j=\chi x$, for $j  = 1, 2, \ldots ,\theta$, and
$\gamma = \delta \theta x$, we have the following corollary.

\begin{corollary}
\label{corr:ourIneq1}
For any $\delta > 0$,
\begin{displaymath}
\begin{aligned}
\Pr[\sum_{j=1}^{\theta} x_j - \theta x \ge \delta \theta x] \le \exp\left(-\dfrac{\delta^2}{2\chi(\frac{2\delta}{3} + 1)} \, \theta x \right).
\end{aligned}   
\end{displaymath}
\end{corollary}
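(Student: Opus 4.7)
The plan is to obtain Corollary \ref{corr:ourIneq1} by a direct application of Lemma \ref{lemmaMartIneqLiterature-1} to the martingale $M_1,\dots,M_\theta$ that has already been set up, with the right choice of parameters $a$, $b_j$, and $\gamma$. All of the preparatory work -- verifying that $(M_j)$ is a martingale, that $|M_j-M_{j-1}|\le 2\chi$, and that $\mathrm{Var}[M_j\mid M_1,\dots,M_{j-1}]=\mathrm{Var}[x_j]\le \chi x$ -- has already been done in the paragraphs preceding the corollary, so I would treat those as the inputs and focus on the plug-in.

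Concretely, I would set $a=2\chi$, $b_j=\chi x$ for every $j\in\{1,\dots,\theta\}$, and $\gamma=\delta\theta x$ in Lemma \ref{lemmaMartIneqLiterature-1}. Because $\mathbb{E}[M_\theta]=0$, the event $\{M_\theta\ge \gamma\}$ is exactly $\{\sum_{j=1}^\theta x_j - \theta x \ge \delta \theta x\}$, so the left-hand side of the lemma matches the left-hand side of the corollary. For the right-hand side I would then simplify the exponent: the sum $\sum_{z=1}^\theta b_z$ equals $\theta\chi x$, and $a\gamma/3 = 2\chi\delta\theta x/3$, so the denominator becomes
\begin{equation*}
2\Bigl(\theta\chi x + \tfrac{2}{3}\chi\delta\theta x\Bigr) = 2\chi\theta x\Bigl(1+\tfrac{2\delta}{3}\Bigr),
\end{equation*}
while the numerator $\gamma^2=\delta^2\theta^2 x^2$ cancels one factor of $\theta x$, yielding exactly the exponent $-\delta^2\theta x/\bigl(2\chi(\tfrac{2\delta}{3}+1)\bigr)$ claimed in the statement.

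The only genuine obligations are therefore (i) to justify that the hypotheses of Lemma \ref{lemmaMartIneqLiterature-1} are met -- in particular that $|M_1|\le 2\chi$, not merely $|M_j-M_{j-1}|\le 2\chi$ for $j\ge 2$ -- and (ii) to carry out the arithmetic above cleanly. The bound on $|M_1|=|x_1-x|$ follows from $x_1\in[-\chi,\chi]$ together with $x=\tfrac{1}{n}\mathbb{E}[F(S)]\in[-\chi,\chi]$ (since $\mathbb{E}[F(S)]=n\,\mathbb{E}[F_{\c+R}(S)]$ is itself an average of samples bounded by $\chi$), so the triangle inequality gives $|M_1|\le 2\chi$. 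The variance bound $\mathrm{Var}[x_j]\le \chi x$ has already been derived in the preceding display using $\mathbb{E}[x_j^2]\le \chi\,\mathbb{E}[x_j]=\chi x$ together with $\mathbb{E}[x_j]^2\ge 0$.

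I do not expect any real obstacle: the corollary is a specialization of a black-box martingale tail bound, and all structural ingredients are in place. The one minor subtlety worth flagging in the write-up is that the corollary is stated for the one-sided upper-tail deviation $\sum_j x_j - \theta x \ge \delta\theta x$; Lemma \ref{lemmaMartIneqLiterature-1} is itself one-sided, so no symmetrization argument is needed, and the bound follows directly from the substitution described above.
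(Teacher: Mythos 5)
Your proposal is correct and follows exactly the paper's route: the paper likewise obtains the corollary by applying Lemma~\ref{lemmaMartIneqLiterature-1} to the martingale $M_1,\dots,M_\theta$ with $a=2\chi$, $b_j=\chi x$, and $\gamma=\delta\theta x$, relying on the bounds $\lvert M_1\rvert\le 2\chi$, $\lvert M_j-M_{j-1}\rvert\le 2\chi$, and $\Var[x_j]\le\chi x$ established just before. Your explicit simplification of the exponent and the check that $\lvert M_1\rvert\le 2\chi$ are exactly the (routine) steps the paper leaves implicit.
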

Moreover, for the martingale $-M_1, \ldots, -M_{\theta}$,
we similarly have $a = 2\chi$ and $b_j = \chi x$ for $j  = 1, \ldots ,\theta$.
Note also that $\Exp[-M_{\theta}] = 0$.
Hence, %
for $-M_{\theta}=  \sum_{j=1}^{\theta}(x - x_j)$ and 
$\gamma = \delta \theta x$ we obtain a corollary similar to the one above.
\begin{corollary}\label{corr:ourIneq2}
For any $\delta > 0$,
\begin{displaymath}
\begin{aligned}
\Pr[\sum_{j=1}^{\theta} x_j - \theta x \le -\delta \theta x ] \le \exp\left(-\dfrac{\delta^2}{2\chi(\frac{2\delta}{3} + 1)} \, \theta  x \right).
\end{aligned}  
\end{displaymath}
\end{corollary}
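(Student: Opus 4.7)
The plan is to mirror the argument that just established Corollary~\ref{corr:ourIneq1}, but apply it to the negated martingale $-M_1, \ldots, -M_\theta$ rather than to $M_1, \ldots, M_\theta$ itself. The paragraph immediately preceding the statement already gathered all the ingredients we need; the proposal is simply to combine them through one application of Lemma~\ref{lemmaMartIneqLiterature-1}.

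First, I would verify that $-M_1, \ldots, -M_\theta$ is indeed a martingale: since $M_j$ is a martingale, the sequence $-M_j$ is also a martingale with $\Exp[-M_\theta] = -\Exp[M_\theta] = 0$. Negation preserves both the increment bound and the conditional variance, so $\lvert -M_1 \rvert \le 2\chi$, $\lvert (-M_j) - (-M_{j-1}) \rvert = \lvert M_j - M_{j-1} \rvert \le 2\chi$, and $\Var[-M_j \mid -M_1, \ldots, -M_{j-1}] = \Var[M_j \mid M_1, \ldots, M_{j-1}] \le \chi x$. Thus we may invoke Lemma~\ref{lemmaMartIneqLiterature-1} with the parameters $a = 2\chi$ and $b_z = \chi x$ for every $z \in \{1, \ldots, \theta\}$, identical to those used for $M_\theta$.

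Second, I would rewrite the event of interest as an upper-tail event for $-M_\theta$. Since $-M_\theta = \sum_{j=1}^{\theta}(x - x_j)$, the event $\sum_{j=1}^{\theta} x_j - \theta x \le -\delta \theta x$ is the same as $-M_\theta \ge \delta \theta x$. Applying Lemma~\ref{lemmaMartIneqLiterature-1} with $\gamma = \delta \theta x$ and $\Exp[-M_\theta] = 0$ yields
\[
\Pr[-M_\theta \ge \delta \theta x] \le \exp\!\left(-\frac{(\delta \theta x)^2}{2(\theta \chi x + 2\chi \cdot \delta \theta x / 3)}\right) = \exp\!\left(-\frac{\delta^2}{2\chi(\tfrac{2\delta}{3} + 1)} \, \theta x\right),
\]
which is exactly the bound stated in the corollary.

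There is essentially no substantive obstacle: the real work was done in setting up the martingale $M_\theta$, bounding its increments by $2\chi$, and establishing $\Var[x_j] \le \chi x$. The only thing that requires (minor) care is to make sure the sign manipulation goes through — namely, that negating a martingale preserves all the hypotheses of Lemma~\ref{lemmaMartIneqLiterature-1} so that the same $a$ and $b_z$ still apply, and that the lower-tail event for the original sum coincides with the upper-tail event for the negated martingale evaluated at $\gamma = \delta \theta x$.
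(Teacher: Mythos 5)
Your proposal is correct and matches the paper's own argument exactly: the paper likewise applies Lemma~\ref{lemmaMartIneqLiterature-1} to the negated martingale $-M_1,\ldots,-M_\theta$ with the same parameters $a=2\chi$, $b_j=\chi x$, and $\gamma=\delta\theta x$, identifying the lower-tail event for $\sum_j x_j$ with the upper-tail event for $-M_\theta$. The algebra in your final display also reproduces the stated bound correctly.
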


\begin{proof}[Proof of Lemma~\ref{lem:approximation}]
Using Corollaries~\ref{corr:ourIneq1} and~\ref{corr:ourIneq2} and letting 
$\delta = \dfrac{\epsilon\, \optvalue}{2 n x}$, we obtain
   \begin{displaymath}
   \begin{aligned}
     &\Pr[\abs{n\fracrrp{X} - \Exp[F(X)]} \geq \frac{\epsilon}{2}\optvalue] \\
	 &= \Pr[\abs{\sum_{i=1}^{\theta}x_i - \theta x} {\geq \frac{\theta\epsilon}{2n}}\optvalue] \\ 
&\le 2 \exp\left(-\dfrac{\delta^2}{2\chi(\frac{2\delta}{3} + 1)} \, \theta x \right)  \\ 
& = 2 \exp\left(-\dfrac{3 \epsilon^2 \, \optvalue^2}{8\,n \chi( \epsilon\, \optvalue + 3 n x)}  \, \theta \right) \\
& \le 2 \exp\left(-\dfrac{3 \epsilon^2 \, \optvalue^2}{8\,n \chi ( \epsilon\, \optvalue + 3 \optvalue)}  \, \theta \right) \\
&= 2 \exp\left(-\dfrac{\epsilon^2 \, \optvalue}{8\,n \chi ( \frac{\epsilon}{3} + 1)}  \, \theta \right), 
   \end{aligned} 
  \end{displaymath}  
where the last inequality above follows from the fact that $n x \le \optvalue$.
Finally, by requiring
\begin{align*}
2 \exp\left(-\dfrac{\epsilon^2 \, \optvalue}{8\,n \chi ( \frac{\epsilon}{3} + 1)}  \, \theta \right)  \le \dfrac{1}{n^{\ell} \, \binom{n}{k}},
\end{align*}
we obtain the lower bound on $ \theta$.
\end{proof}

\subsection{Proof of Theorem~\ref{thm:greedy-marketing}}
\label{sec:app-sum-index}
First, we prove that the $\fracrrp{S}$ is submodular and monontone under the 
setting of maximizing the sum of expressed opinions. Recall that we set
$$\fracrrp{S} = \frac{\sum_{\randomrrp_u \in \rrpsample} \Delta \ebegop{u} \IND[(\randomrrp_u \cap S) \neq \emptyset] }{\abs{\rrpsample}}.$$
The submodularity and monotonicity follow from the weighted version of
maximum set coverage problem, and these two properties ensure that greedy
algorithm on $\fracrrp{S}$ achieves an approximation ration of $1-\frac{1}{e}$. 
We present the details below. 

Let $S \subset T \subseteq U$, let $v \in U \setminus T$.

First, we show that $\fracrrp{S}$ is monontone: Since for any $R_u \in \rrpsample$, $\IND[(\randomrrp_u \cap T) \neq \emptyset] - \IND[(\randomrrp_u \cap S) \neq \emptyset] \geq 0$, 
thus, $\fracrrp{T} - \fracrrp{S} \geq 0$. 

Second, we show that $\fracrrp{S}$ is submodular: Since for any $R_u \in \rrpsample$, 
$(\IND[(\randomrrp_u \cap (T\cup\{v\})) \neq \emptyset] - \IND[(\randomrrp_u \cap T) \neq \emptyset]) \leq (\IND[(\randomrrp_u \cap (S\cup\{v\})) \neq \emptyset] - \IND[(\randomrrp_u \cap S) \neq \emptyset])$, 
thus, $\fracrrp{T \cup \{v\}} - \fracrrp{T} - (\fracrrp{S \cup \{v\}} - \fracrrp{S}) \leq 0$. 

Now we apply Lemma~\ref{lem:y-size-main} and Lemma~\ref{lem:approximation} to
obtain a lower bound on $\optvalue$ and the sampling size $\theta$. This ensures that
Equation~\eqref{equation:sample-error-bound} holds.

Note that there are in total $\binom{n}{k}$ different sets of size $k$ and thus
there are $\binom{n}{k}$ ways to choose $S$. Applying a union bound to
Equation~\eqref{equation:sample-error-bound}, we obtain that
   \[\set+P[\abs{n\cdot \fracrrp{S} - \Exp[F(S)]} \geq \frac{\epsilon}{2}\cdot \optvalue \mid \mbox{for any $S$ of size k}] \leq \frac{1}{n^{\ell}}.\]

   Let $S^{G}$ denote the greedy solution for $F_{\c+R}(\cdot)$, 
   $S^{+}$ the optimal solution for $F_{\c+R}(\cdot)$, 
   and $S^*$ the optimal solution for $\Exp[F(\cdot)]$.
   Then the following inequality holds with probability at least $1 - n^{-\ell}$:
   \begin{equation*}
      \begin{aligned}
         \Exp[F(S^G)] &\geq n \fracrrp{S^G} - \frac{\epsilon}{2} \optvalue \\
         &\geq n (1 - \frac{1}{e}) \fracrrp{S^+} - \frac{\epsilon}{2} \optvalue \\
         &\geq n (1 - \frac{1}{e}) \fracrrp{S^*} - \frac{\epsilon}{2} \optvalue \\
         &\geq (1 - \frac{1}{e}) (\Exp[F(S^*)] - \frac{\epsilon}{2}\optvalue) - \frac{\epsilon}{2} \optvalue \\
         &= (1 - \frac{1}{e})\optvalue - (1 - \frac{1}{e}) \frac{\epsilon}{2} \optvalue - \frac{\epsilon}{2} \optvalue \\
         &\geq (1 - \frac{1}{e} - \epsilon) \optvalue.
      \end{aligned}
   \end{equation*}
Thus implementing the greedy algorithm on $F_{\c+R}(\cdot)$ directly gives us a
$(1 - \frac{1}{e} - \epsilon)$-approximation with high probability. 
	
\subsection{Proof of Lemma~\ref{lemma:max-sum-opinions}}
It is well-known that for the Laplacian $\laplacian$ of undirected graphs it
holds that $\ind^\intercal (\ID + \laplacian)^{-1} = \ind^{\intercal}$.
Now we obtain:
\begin{equation}
  \begin{aligned}
	\arg\max_S \sum_{u \in V} \efinop{u}^*(S)
		&= \arg \max_{S} \ind^\intercal [(\ID + \laplacian)^{-1} (\begop + \Delta \begop \odot \ind(S))]\\
		&=  \arg \max_{S} \ind^\intercal [\Delta \begop \odot \ind(S)] \\
		&=  \arg \max_{S} \sum_{u \in V} \Delta \ebegop{u} \ind_u(S) \\
		&= \arg\max_S F(S).
  \end{aligned}
\end{equation}

\subsection{Proof of Lemma~\ref{lemma:approx-max-sum-opinions}}
  Let $S \subset T \subseteq U$, let $v \in U \setminus T$. 
We first prove that in any possible world $g$, $F_g(S)$ is submodular and monontone.

First, we show that $F_g(S)$ is monontone: Since for any $u \in V$, 
  $\ind_u(T) - \ind_u(S) \geq 0$, 
thus, $F_g(T) - F_g(S) \geq 0$. 

Second, we show that $F_g(S)$ is submodular: Since for any $u \in V$, 
$\ind_u(T\cup\{v\})- \ind_u(T) \leq \ind_u(S\cup\{v\})- \ind_u(S)$, 
thus, $F_g(T \cup \{v\}) - F_g(T) - (F_g(S \cup \{v\}) - F_g(S)) \leq 0$. 

Now $\Exp[F(S)]$ is submodular and monontone as it is a linear combination of
the $F_g(S)$, and the weight for each part is non-negative. 
The approximation ratio of greedy algorithm on $\Exp[F(S)]$ is thus
$1 - \frac{1}{e}$~\cite{hochbaum1996approximating}.

\subsection{Proof of Lemma~\ref{lem:y-size-main}}
\emph{Proof of~(1).}
Let $X$ be a
set of seed nodes with 
$\abs{X} = k$ and let $x = \frac{1}{n} \Exp[F(X)]$.
We show that if $\opt < y\chi$ then
$n\fracrrp{X} \geq (1 + \epsilon_2) \,y \chi$
with probability at most $\frac{n^{-\ell}}{\log_2 n \binom{n}{k}}$.  

Note that $\optvalue < y \chi$ implies that $x < \frac{\opt}{n} < \frac{y \chi}{n}$,
and $1 < \frac{y \chi}{xn}$. Notice that by construction $y \leq n$ since in the
algorithm we set $y \leftarrow n/2^i$. 
Let $\delta = \frac{1 + \epsilon_2}{n x} y \chi - 1$ and observe that
$\delta > \frac{\epsilon_2 y \chi}{n x}$. 
Then, by using Corollary~\ref{corr:ourIneq1}, we have:
  \begin{displaymath}
    \begin{aligned}
      \Pr[n\fracrrp{X} \geq (1 + \epsilon_2)y] &= \Pr\left[\theta \fracrrp X - \theta x \geq \theta x \left(\frac{(1 + \epsilon_2)y\chi}{nx} - 1\right)\right] \\
      &= \Pr[\theta \fracrrp X - \theta x \geq \theta x \delta] \\  
      &\leq \exp \left(-\frac{\delta}{2 \chi (\frac{2}{3} + \frac{1}{\delta})}\theta x\right) \\
      &\leq \exp \left(-\frac{\delta}{2 \chi (\frac{2}{3} + \frac{1}{\epsilon_2})}\theta x\right) \\
      &\leq \exp \left(-\frac{\epsilon_2^2}{\frac{4}{3}\epsilon_2 + 2}\frac{y}{n}\theta\right) \\
      &\leq \frac{n^{-\ell}}{\log_2(n)\cdot \binom{n}{k}}.
    \end{aligned}
  \end{displaymath}
Finally by a union bound over all $\binom{n}{k}$ choices of $X$, we conclude
that if $\opt < y \chi$, then $n\fracrrp{\t+X} < (1 + \epsilon_2)y$ with
probability at least $1 - \frac{n^{-\ell}}{\log_2 n}$.

\emph{Proof of~(2).}
  Let $X$ be a set of seed nodes with $\abs{X} = k$ and let $x = \frac{1}{n} \Exp[F(X)]$.
  Assume that 
  $\opt \geq y\chi$. Note that this implies $\frac{\optvalue}{nx} \geq 1$.
  Now we will show that if $\opt \geq y \chi$ then $n \, \fracrrp{\t+X} > (1 + \epsilon_2)\optvalue$
  with probability at most $\frac{n^{-\ell}}{\log_2 n \binom{n}{k}}$. 

  Let $\delta = \frac{(1 + \epsilon_2) \optvalue}{n x} - 1$. Then we have $\delta \geq \epsilon_2$. 
  By using Corollary~\ref{corr:ourIneq2}, we obtain that:
  \begin{displaymath}
    \begin{aligned}
      &\Pr[n \, \fracrrp{X} > (1 + \epsilon_2)\opt] \\
	  &= \Pr\left[\theta \fracrrp X - \theta x > \theta x \left(\frac{(1 + \epsilon_2)\opt}{nx} - 1\right)\right] \\
      &\leq \exp \left(-\frac{\delta}{2 \chi (\frac{2}{3} + \frac{1}{\delta})}\theta x\right) \\
      &\leq \exp \left(-\frac{\delta}{2 \chi (\frac{2}{3} + \frac{1}{\epsilon_2})}\theta x\right) \\
      &\leq \exp \left(-\frac{\epsilon_2 \optvalue}{2 \chi n (\frac{2}{3} + \frac{1}{\epsilon_2})}\theta \right) \\
      &\leq \exp\left(-\frac{\epsilon_2^2}{\frac{4}{3}\epsilon_2 + 2}\frac{y}{n}\theta\right) \\
      &\leq \frac{n^{-\ell}}{\log_2(n) \cdot \binom{n}{k}}
    \end{aligned}
  \end{displaymath}
By taking a union bound over all $\binom{n}{k}$ choices of $X$, we reach the
desired result.

\subsection{Proof of Theorem~\ref{theorem:sand-lu}}
Our proof has four steps. 
First, we show that the matrices $\MasIdx{\ConIdx{}}$ and
$\MasIdx{\DisConIdx{}}$ have non-negative entries.
Second, we show that 
the objective function $\mu_0(S)$ is monotone but neither submodular nor supermodular.
Third, we show that the upper and lower bounds we consider are monotone and submodular.
Fourth, we apply the sandwich method.

\emph{Step~I: The matrices $\MasIdx{\ConIdx{}}$ and $\MasIdx{\DisConIdx{}}$ have
	non-negative entries.}
As we defined in Table~\ref{table:index}, 
$\MasIdx{\ConIdx{}} = (\laplacian + \ID)^{-2}$
and $\MasIdx{\DisConIdx{}} = (\laplacian + \ID)^{-1}$. 
We now show that all entries of $\MasIdx{\DisConIdx{}}$ 
are non-negative. Note that this implies that the entries of
$\MasIdx{\ConIdx{}}$ are non-negative as well since 
$\MasIdx{\ConIdx{}} = \MasIdx{\DisConIdx{}}^2$. 

Now observe that:
\begin{equation*}
  \begin{aligned}
    (\laplacian + \ID)^{-1} &= [(\m+D + \ID) (\ID - (\ID + \m+D)^{-1} \m+W)]^{-1} \\
    &=(\ID - (\ID + \m+D)^{-1} \m+W)^{-1}(\m+D + \ID)^{-1} \\
    &=[\sum_{i=0}^{\infty} (\ID + \m+D)^{-i}\m+W^i] (\m+D + \ID)^{-1}.
  \end{aligned}
\end{equation*}
The last equation holds since the entries of matrix $(\ID + \m+D)^{-1} \m+W$ are non-negative, and 
the sum of rows and columns are strictly smaller than 1, thus the limit $\lim_{i\rightarrow \infty} (\ID + \m+D)^{-i} \m+W^i = 0$.  
As each matrix $(\ID + \m+D)^{-i}\m+W^i$, and $(\m+D + \ID)^{-1}$ 
are non-negative, their sums and multiplication are non-negative. 
Thus $(\laplacian + \ID)^{-1}$ is non-negative. 

\emph{Step~II: The objective function $\mu_0(S)$ is monotone but neither submodular nor supermodular.}
Let $\MasIdx{\mathcal{M}} \in \{ \MasIdx{\DisConIdx{}}, \MasIdx{\ConIdx{}} \}$, 
our goal is to maximize $\Exp[\s^{\intercal} \MasIdx{\c+M} \s]$ 
by selecting a set of seed nodes $S$. 
Let
$$\mu_0(S) = \Exp[2 \begop^\intercal \MasIdx{\mathcal{M}} \Delta \s + \Delta \s^\intercal \MasIdx{\mathcal{M}} \Delta \s].$$
By substituting $\s=\begop+\Delta\s$ into $\Exp[\s^{\intercal} \MasIdx{\c+M}
\s]$ and ignoring summands which only depends on $\begop$, we have 
$$\arg \max_S \mu_0(S) = \arg \max_S \Exp[\s^{\intercal} \MasIdx{\c+M} \s].$$

Now, we show that $\mu_0(S)$ is neither submodular nor supermodular, 
but monontone in the following Lemma~\ref{lem:non-sub-super}, and Lemma~\ref{lem:mon}. 

\begin{lemma}
  \label{lem:non-sub-super}
  $\mu_0(S)$ is neither submodular nor supermodular.
\end{lemma}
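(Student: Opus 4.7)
The plan is to rule out both submodularity and supermodularity by exhibiting two small, hand-crafted instances on which the relevant marginal-gain inequality is violated. Since $\mu_0(S)$ is defined in expectation over the random vector $\Delta\s$, and $\Delta\s_u = \Delta\begop_u \cdot \ind_u(S)$ where $\ind_u(S)$ is the indicator that a live path reaches $u$ from $S$, I will decompose
\[
\mu_0(S) = \sum_{u} 2(\begop^\intercal \MasIdx{\mathcal{M}})_u \Delta\begop_u\, \Exp[\ind_u(S)] + \sum_{u,v} \MasIdx{\mathcal{M}}_{u,v} \Delta\begop_u \Delta\begop_v\, \Exp[\ind_u(S)\ind_v(S)].
\]
The linear part is a non-negative combination of reach probabilities and hence submodular; all potential non-modular behavior lives in the quadratic part through the pair-coverage probabilities $\Exp[\ind_u(S)\ind_v(S)]$.

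First I would refute supermodularity. I would use the linear part in isolation on an essentially pathological tiny graph (e.g., two seed candidates each pointing to the same unique target $u$ via independent edges with probability strictly between $0$ and $1$), so that $\Exp[\ind_u(S)]$ is strictly submodular in $S$, and then choose $\begop$ and $\MasIdx{\mathcal{M}}$ so that the linear coefficient $2(\begop^\intercal \MasIdx{\mathcal{M}})_u \Delta\begop_u$ dominates. This yields strictly decreasing marginal gains, violating supermodularity.

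Next I would refute submodularity, which is the more delicate direction. Here I would exploit the pair term $\Exp[\ind_u(S)\ind_v(S)]$. Consider a graph with two ``sink'' nodes $u$ and $v$ together with three candidate seeds $a,b,c$, where $a$ can reach only $u$, $b$ can reach only $v$, and $c$ can reach both with some probability. Taking $S=\emptyset$, $T=\{a,b\}$ and the new element $c$, adding $c$ to $S$ adds a tiny contribution to $\Exp[\ind_u\ind_v]$, but adding $c$ to $T$ can \emph{increase} $\Exp[\ind_u\ind_v]$ much more (because $\{a,b\}$ already makes each of $u,v$ reachable with some probability, and $c$ pushes this further). By rescaling $\Delta\begop_u\Delta\begop_v \MasIdx{\mathcal{M}}_{u,v}$ to be large enough relative to the linear marginals, this reversed monotonicity of differences yields a violation of the submodular inequality.

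The main obstacle is the quantitative tuning in the second step: one has to verify that the positive interaction through the pair probability on $(u,v)$ genuinely dominates the submodular effect of the linear reach terms and of the diagonal contributions $\Exp[\ind_u(S)^2] = \Exp[\ind_u(S)]$ (which are submodular). I would handle this by keeping the graph so small that all probabilities $\Exp[\ind_u(S)]$ and $\Exp[\ind_u(S)\ind_v(S)]$ can be written in closed form from the edge probabilities, then treating $\Delta\begop$ and a rescaling of $\MasIdx{\mathcal{M}}_{u,v}$ as free parameters and choosing them to make the desired inequality strict. Finally, I would note that the construction needs only the nonnegativity of $\MasIdx{\mathcal{M}}$'s entries (established in Step~I) and otherwise works identically for $\MasIdx{\ConIdx{}}$ and $\MasIdx{\DisConIdx{}}$, so the same counterexample covers both cases.
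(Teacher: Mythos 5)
Your overall strategy---two separate counterexamples, one where the submodular ``coverage'' behavior of the reach probabilities wins (refuting supermodularity) and one where the supermodular pair-interaction term wins (refuting submodularity)---is the same as the paper's, and your non-supermodularity half is essentially sound: the paper realizes the same idea by setting the FJ weights $w_{u,v}=0$ (so $\MasIdx{\mathcal{M}}=\ID$) and $\delta=1$, which collapses $\mu_0$ to $\varepsilon^2$ times the independent-cascade influence spread, a strictly submodular function on your kind of shared-target gadget. One caveat even here: you must also neutralize the seed--seed cross terms such as $\MasIdx{\mathcal{M}}_{a,b}\,\Exp[\ind_a(S)\ind_b(S)]$, which are supermodular (they jump from $0$ to $1$ only once both seeds are selected); making $\MasIdx{\mathcal{M}}$ diagonal via $w\equiv 0$ is the clean way to do this, and ``choosing $\begop$ so the linear coefficient dominates'' does not by itself address it.

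The genuine gap is in your non-submodularity half. You propose to ``rescale $\Delta\begop_u\Delta\begop_v\MasIdx{\mathcal{M}}_{u,v}$ to be large enough relative to the linear marginals and the diagonal contributions,'' treating $\MasIdx{\mathcal{M}}_{u,v}$ as a free parameter. It is not: $\MasIdx{\mathcal{M}}$ is $(\laplacian+\ID)^{-1}$ or $(\laplacian+\ID)^{-2}$ for the chosen FJ graph, so its entries lie in $[0,1]$, each row of $(\laplacian+\ID)^{-1}$ sums to $1$, and the diagonal entries are bounded below by $1/(1+d_u)$; the off-diagonal entry $\MasIdx{\mathcal{M}}_{u,v}$ can therefore never be made large relative to the diagonal terms it must beat, and $\Delta\begop$ is capped by $\varepsilon\le 1$. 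Your gadget can in fact be salvaged, but for a reason your accounting misses: with $\begop=\m+0$ the dominant positive contributions to the submodularity defect come from the seed--seed and seed--target cross terms (e.g.\ $\Exp[\ind_a(S)\ind_c(S)]$ jumps from $0$ to $1$ only when both $a$ and $c$ are seeds), not from making $\MasIdx{\mathcal{M}}_{u,v}$ large. The paper avoids all of this tuning with a degenerate instance: set all cascade probabilities $p_{u,v}=0$ and $\delta=0$, so that $\ind_u(S)=\IND[u\in S]$ deterministically. Then every diagonal term $\Exp[\ind_u(S)]$ is \emph{modular} and drops out of the defect entirely, the linear term vanishes with $\begop=\m+0$, and $\mu_0(S)=\sum_{u,v\in S}m_{u,v}$ with $m_{u,v}\ge 0$ is exactly a non-negative pair-sum, which is (strictly) supermodular---hence not submodular---with no quantitative comparison needed. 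I would recommend adopting that degenerate instance, or else redoing your second step with an explicit closed-form computation of the defect that accounts for all pairs, including the seed--seed ones.
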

\begin{proof}
$\mu_0(S)$ is not submodular:
We can set the spread probabilities $p_{uv}$ on
all the edges to be $0$, and $\delta = 0$ as well. 
We also set the non-adjusted innate opinions 
$\begop = \m+0$. In this case, $\mu_0(S)$ becomes
$\mu_0(S) = \Exp[\Delta \s^{\intercal} \MasIdx{\mathcal{M}} \Delta \s] = \sum_{u \in S, v \in S} \Delta \ebegop{u} \Delta \ebegop{v} \MasIdx{\mathcal{M}}_{u,v}$.
As in Sec.~\ref{sec:estimating}, we set $m_{u, v} = (\Delta \ebegop{u})^{\intercal} \MasIdx{\mathcal{M}}_{u,v} \Delta \ebegop{v}$. 
Let $S \subseteq T \subseteq U$, and $x \in U \setminus T$, 
$\mu_0(T \cup \{x\}) - \mu_0(T) - (\mu_0(S \cup \{x\}) - \mu_0(S)) = \sum_{u \in T \setminus S} m_{u, x} + \sum_{v \in T \setminus S} m_{x, v} \geq 0$.
The inequality is strict $>$ if any $m_{u, x}$ or $m_{x, v}$, $u, v \in T \setminus S$
is strictly larger than 0. This condition can be satisfied by selecting $v$. 

$\mu_0(S)$ is not supermodular:
We set non-adjusted innate opinions $\begop = \m+0$, 
and $\epsilon$ small enough such that $\Delta \ebegop{u} = \epsilon$ for any $u$. 
We set $w_{uv} = 0$ for any $(u, v) \in E$. 
In addition, we set $\delta = 1$, such that \spread becomes independent cascade model. 
Under these settings, 
$\MasIdx{\mathcal{M}}$ is the identity matrix $\ID$, and 
$\mu_0(S) = \epsilon^2 \sum_{v \in V} \ind_u(S)$. 
Note that $\sum_{u\in V}\ind_u(S)$ is the influence spread in the independent
cascade model with seed set $S$, 
which is submodular~\cite{kempe2015maximizing}. 
Let $\sigma(S) = \sum_{u\in V}\ind_u(S)$. 
Let $S \subseteq T \subseteq U$, and $x \in U \setminus T$, 
from the submodularity of $\sigma(S)$ it follows 
$\mu_0(T \cup \{x\}) - \mu_0(T) - (\mu_0(S \cup \{x\}) - \mu_0(S)) = 
\epsilon^2 \sum_{u \in V} (\sigma(T \cup \{x\}) - \sigma(T)) - (\sigma(S \cup \{x\}) - \sigma(S)) \leq 0$.
Now, let $S = \emptyset$, and select $T$ and $\{x\}$ such that $\sigma(T) =
\abs{V}$ and $\sigma(\{x\}) > 0$.
The inequality is strict. 
\end{proof}

\begin{lemma}
  \label{lem:mon}
  $\mu_0(S)$ is monontone.
\end{lemma}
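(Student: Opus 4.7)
The plan is to prove monotonicity by a coupling argument combined with the non-negativity properties we have already established. Specifically, I will argue that in any fixed realization of the randomness underlying the information-spreading process, the vector of innate-opinion adjustments is coordinate-wise non-decreasing in the seed set, and then use the non-negativity of $\begop$, $\Delta\s$, and the entries of $\MasIdx{\mathcal{M}}$ to lift this to the expectation.

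First, I would couple the two processes for $S\subseteq T$ by running them in the same possible world $g$ (as defined in Sec.~\ref{sec:estimating}). In this coupling, if there is a live path from $S$ to some node $u$ in $g$, then trivially there is also a live path from $T\supseteq S$ to $u$. Hence $\ind_u(S)\le \ind_u(T)$ pointwise in every sample. Since we are in the non-polarizing setting (where the sandwich method is applied) we have $\Delta \ebegop{u}\ge 0$ for every $u$, and therefore
\[
0 \;\le\; \Delta\s(S) \;=\; \Delta\begop \odot \ind(S) \;\le\; \Delta\begop \odot \ind(T) \;=\; \Delta\s(T)
\]
coordinate-wise in the coupled sample.

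Second, I would use Step~I, which guarantees that every entry of $\MasIdx{\mathcal{M}}$ is non-negative, together with $\begop\in[0,1]^n$. Under these sign conditions both summands in $\mu_0$ are coordinate-wise non-decreasing functions of $\Delta\s$ restricted to the non-negative orthant: the linear term $2\begop^\intercal \MasIdx{\mathcal{M}}\Delta\s$ is non-decreasing because its gradient $2 \MasIdx{\mathcal{M}}^\intercal \begop$ is entrywise non-negative, and the quadratic term $\Delta\s^\intercal \MasIdx{\mathcal{M}}\Delta\s = \sum_{u,v} \MasIdx{\mathcal{M}}_{u,v}\Delta\s_u\Delta\s_v$ is a sum of products of non-negative quantities, hence also non-decreasing in each $\Delta\s_u\ge0$. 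Combining these two facts with the pointwise inequality $\Delta\s(S)\le\Delta\s(T)$ yields, in every coupled sample,
\[
2\begop^\intercal \MasIdx{\mathcal{M}}\Delta\s(S) + \Delta\s(S)^\intercal \MasIdx{\mathcal{M}}\Delta\s(S) \;\le\; 2\begop^\intercal \MasIdx{\mathcal{M}}\Delta\s(T) + \Delta\s(T)^\intercal \MasIdx{\mathcal{M}}\Delta\s(T).
\]
Taking expectations over the coupling preserves the inequality and gives $\mu_0(S)\le \mu_0(T)$, as desired.

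The only subtle point is the reliance on $\Delta\ebegop{u}\ge 0$, which is why monotonicity holds for non-polarizing content but not in general; in the polarizing setting some $\Delta\ebegop{u}$ can be negative and the coupling step breaks. Beyond this, everything reduces to coordinate-wise monotonicity of a bilinear form with non-negative coefficients, which is routine once the coupling is in place.
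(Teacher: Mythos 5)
Your proof is correct and follows essentially the same route as the paper's: both rest on the pointwise inequality $\ind_u(S)\le\ind_u(T)$ for $S\subseteq T$ together with the non-negativity of $\Delta\ebegop{u}$ and of the entries of $\MasIdx{\mathcal{M}}$, so that every term in $\mu_0(T)-\mu_0(S)$ is non-negative. Your version is in fact somewhat more careful than the paper's one-line argument, since you make the coupling over possible worlds explicit and treat the linear term $2\begop^\intercal\MasIdx{\mathcal{M}}\Delta\s$ as well as the quadratic one.
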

\begin{proof}
  Let $S \subseteq T$. Then
  $$\mu_0(T) - \mu_0(S) = \sum_{u, v \in V} m_{u, v} (\ind_{u}(T) \ind_{v}(T) - \ind_{u}(S) \ind_{v}(S)) \geq 0$$
  since $\ind_{v}(T) \geq \ind_{v}(S)$ for any $v \in V$. 
\end{proof}

\emph{Step~III: The upper and lower bound are monotone and submodular.}
For 
$\mu_L(S) = \Exp[2 \begop^\intercal \MasIdx{\mathcal{M}} \Delta \s]$, 
$\mu_U(S) = \Exp[2 \begop^\intercal \MasIdx{\mathcal{M}} \Delta \s + \Delta \s^\intercal \MasIdx{\mathcal{M}}^U \Delta \s]$,
we prove that both $\mu_L(S)$ and $\mu_U(S)$ are submodular. 

\begin{lemma}
  \label{lem:sub-two}
  $\mu_L(S)$ and $\mu_U(S)$ are submodular. 
\end{lemma}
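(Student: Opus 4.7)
The plan is to reduce both $\mu_L(S)$ and $\mu_U(S)$ to non-negative linear combinations of per-node reachability functions $f_u(S) := \Exp[\ind_u(S)]$, and then invoke the standard possible-worlds/live-path argument to conclude that each $f_u$ is monotone and submodular in $S$. Since a non-negative linear combination of monotone submodular functions is monotone and submodular, the claim will follow.

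First, I would unfold $\mu_L$. Recalling that $\Delta \s = \Delta \begop \odot \ind(S)$, linearity of expectation gives
\begin{align*}
\mu_L(S) \;=\; 2\,\Exp\!\big[\begop^\intercal \MasIdx{\mathcal{M}} (\Delta \begop \odot \ind(S))\big]
\;=\; 2\sum_{u\in V} (\begop^\intercal \MasIdx{\mathcal{M}})_u\, \Delta \ebegop{u}\, f_u(S).
\end{align*}
For the setting we are in (marketing/non-polarizing content), $\Delta \ebegop{u}\ge 0$, and by Step~I of the theorem the entries of $\MasIdx{\mathcal{M}}$ are non-negative, so $\begop \ge 0$ makes each coefficient $(\begop^\intercal \MasIdx{\mathcal{M}})_u\, \Delta \ebegop{u}$ non-negative.

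Next, for $\mu_U$ the linear part is identical, so the work is on the quadratic term $\Exp[\Delta \s^\intercal \MasIdx{\mathcal{M}}^U \Delta \s]$. The key design choice is that $\MasIdx{\mathcal{M}}^U$ is diagonal, which collapses the double sum and kills all cross terms. Using the fact that $\ind_u(S)\in\{0,1\}$, so $\ind_u(S)^2 = \ind_u(S)$, I obtain
\begin{align*}
\Exp[\Delta \s^\intercal \MasIdx{\mathcal{M}}^U \Delta \s]
\;=\; \sum_{u\in V} \MasIdx{\mathcal{M}}^U_{uu}\,(\Delta \ebegop{u})^2\, \Exp[\ind_u(S)^2]
\;=\; \sum_{u\in V} \MasIdx{\mathcal{M}}^U_{uu}\,(\Delta \ebegop{u})^2\, f_u(S),
\end{align*}
with coefficients that are non-negative because $\MasIdx{\mathcal{M}}^U_{uu}$ is a row-sum of non-negative entries of $\MasIdx{\mathcal{M}}$. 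Hence both $\mu_L(S)$ and $\mu_U(S)$ have been written as non-negative linear combinations of the functions $\{f_u\}_{u\in V}$.

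It remains to argue that each $f_u$ is monotone and submodular in $S$. For this I would use the possible-worlds view from Section~\ref{sec:estimating}: sample a world $g$, which fixes the live-path structure; then $\ind_u(S)=1$ iff the random reverse-reachable set $R_{g,u}$ intersects $S$, i.e.\ $\ind_u(S) = \IND[R_{g,u}\cap S\neq\emptyset]$. For fixed $R_{g,u}$, this is the indicator of a set-cover constraint and is well known to be monotone and submodular in $S$. Taking expectation over the random world preserves both properties, so $f_u$ is monotone and submodular, and so are $\mu_L$ and $\mu_U$.

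The only point requiring care is ensuring the linear coefficients are non-negative, which relies on the non-negativity of $\MasIdx{\mathcal{M}}$ (established in Step~I), the non-negativity of $\begop$ (innate opinions lie in $[0,1]$), and the assumption that we are in the marketing regime so $\Delta \ebegop{u}\ge 0$; without the latter, the linear part of $\mu_U$ could contribute negative coefficients and the submodularity argument would break, which matches the restriction of the sandwich result to non-polarizing content.
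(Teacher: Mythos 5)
Your proposal is correct and follows essentially the same route as the paper: both decompose $\mu_L(S)$ and $\mu_U(S)$ into non-negative linear combinations of the per-node activation indicators $\ind_u(S)$ (using that $\MasIdx{\mathcal{M}}^U$ is diagonal and $\ind_u(S)^2=\ind_u(S)$ to linearize the quadratic term), and then invoke monotonicity and submodularity of coverage-type indicators. Your explicit justification of why the coefficients are non-negative (non-negativity of $\MasIdx{\mathcal{M}}$, of $\begop$, and of $\Delta\ebegop{u}$ in the marketing regime) is slightly more careful than the paper, which simply asserts the weights are positive.
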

\begin{proof}
  We use the notation defined in Sec.~\ref{sec:estimating}, i.e., we set
$\LinGain{u} = (2\begop^\intercal \MasIdx{\mathcal{M}})_u \Delta \ebegop{u}$ and
$\PolGain{u}{v}=(\Delta \ebegop{u})^\intercal \MasIdx{\mathcal{M}}_{u,v} \Delta
\ebegop{v}$. 
Then, $\mu_L(S) = \sum_u \LinGain{u} \ind_u(S)$, and $\mu_U(S) = \sum_u (\LinGain{u} + \sum_{v\in V}\MasIdx{\mathcal{M}}_{u,v}\Delta \ebegop{u}^2) \ind_{u}(S)$. 
Let $S \subseteq T \subseteq U$, and $x \in U \setminus T$, since 
for any $u \in V$, 
$\ind_u(S \cup \{x\}) - \ind_u(S) \geq \ind_u(T \cup \{x\}) - \ind_u(T)$, 
and the weight $\LinGain{u}$, $(\LinGain{u} + \sum_{v\in V}\MasIdx{\mathcal{M}}_{u,v}\Delta \ebegop{u}^2)$ are positive. 
We have $\mu_L(S \cup \{x\}) - \mu_L(S) \geq \mu_L(T \cup \{x\}) - \mu_L(T)$ and 
$\mu_U(S \cup \{x\}) - \mu_U(S) \geq \mu_U(T \cup \{x\}) - \mu_U(T)$. 
It follows that $\mu_L(S)$ and $\mu_U(S)$ are submodular. 
\end{proof}

\emph{Step~IV: Application of the sandwich method.}
Similarly with Lemma~\ref{lem:mon}, we can also see that $\mu_L(S)$ 
and $\mu_U(S)$ are monotone. 
Since $\mu_0(S)$ is monotone, and both $\mu_L(S)$ and $\mu_U(S)$ are submodular and monontone, 
we can apply the sandwich approximation scheme proposed by Lu et al.~\cite[Theorem 9]{lu2015competition},
and it gives us the approximation result.

\section{Comparison of the spread--acknowledge model and the FJ model for the sum index}
\label{sec:comparison}
We discuss how the \sumidx is affected when we plant~$k$ seed nodes in the
\spread with marketing campaigns, compared to changing the opinions of $k$ users
in the FJ model. We argue that in the \spread, the \sumidx will increase at
least as much as in the FJ~model, regardless of which seed nodes are picked.

First, suppose that in the FJ model we are allowed to increase $k$~innate user
opinions by $\varepsilon$. Let $\begop$ denote the user opinions before the
increase and let $\s$ denote the user opinions after the increase. Then we have
that for the innate user opinions it holds that
$\ind^{\intercal} \s = \ind^{\intercal} \begop + k\varepsilon$.
Thus, no matter for which $k$~users we change the innate opinions, it will
always increase the \sumidx by exactly $k\varepsilon$.

Second, consider the \spread and consider any set of $k$~seed nodes. Since we
consider marketing campaigns, for each of the $k$~seed nodes, the innate opinion
will be increased by~$\varepsilon$. Additionally, the information cascade may
reach some non-seed users and increase their opinions. Hence, the \sumidx will
increase by at least $k\varepsilon$.

\section{Additional Experiments}
\label{sec:add-experiments}
In this section we present additional results from our experiments. All
parameters are set as described in the main text, unless stated otherwise.

\spara{Further evaluation of baselines.}
In Figure~\ref{fig:baseline} we showed that viral marketing campaigns can have a
strong effect on the \polidx. Here we report our experimental results for other
network indices. In Figure~\ref{fig:baseline-discon} we present the results for
the \disconidx, in Figure~\ref{fig:baseline-dis} for the \disidx and in
Figure~\ref{fig:baseline-int} for the \intidx.

\begin{figure}[H]
	\centering
    \begin{tabular}{cc}
        \resizebox{0.48\textwidth}{0.34\textwidth}{%
			\pgfplotsset{compat=1.3}
\begin{tikzpicture}

\begin{groupplot}[group style={group size=1 by 1}]
\nextgroupplot[
tick pos=both,
axis lines=left, xtick=\empty, 
x grid style={white!69.0196078431373!black},
yticklabel style = {xshift= 0.8 ex},
xticklabel style = {yshift=-1.6 ex, rotate=45.0,anchor=east},
xmin=-0.2, xmax=4.2,
xtick style={color=black},
xtick={0,1,2,3,4},
xticklabels={\Netscience,\NipsEgo,\PagesGovernment,\Anybeat,\Gplus},
y grid style={white!69.0196078431373!black},
ymin=0.0086326326214931, ymax=0.371832140442839,
ytick style={color=black},
ytick={0,0.05,0.1,0.15,0.2,0.25,0.3,0.35,0.4},
yticklabels={
  \(\displaystyle {0.00}\),
  \(\displaystyle {0.05}\),
  \(\displaystyle {0.10}\),
  \(\displaystyle {0.15}\),
  \(\displaystyle {0.20}\),
  \(\displaystyle {0.25}\),
  \(\displaystyle {0.30}\),
  \(\displaystyle {0.35}\),
  \(\displaystyle {0.40}\)
},
ylabel={rel. increase \disconidx},
legend style={font = \Huge, draw=none},
legend style={at={(1,1), anchor=north east}},
line width = 0.50 mm, 
tick style={line width=0.50mm},
ytick scale label code/.code={$\times 10^{-2}$},
]
\path [draw=black]
(axis cs:-0.1,0.027083924880246)
--(axis cs:-0.1,0.02911814058553);

\path [draw=black]
(axis cs:0.9,0.025141701158827)
--(axis cs:0.9,0.025710236216547);

\path [draw=black]
(axis cs:1.9,0.031158369462285)
--(axis cs:1.9,0.031560900049333);

\path [draw=black]
(axis cs:2.9,0.030308172053967)
--(axis cs:2.9,0.030552990529017);

\path [draw=black]
(axis cs:3.9,0.0283791693649424)
--(axis cs:3.9,0.0285657268246916);

\path [draw=red]
(axis cs:0,0.121864417286312)
--(axis cs:0,0.127747632131382);

\path [draw=red]
(axis cs:1,0.344059407961135)
--(axis cs:1,0.351718594347667);

\path [draw=red]
(axis cs:2,0.121620025118767)
--(axis cs:2,0.124210184756131);

\path [draw=red]
(axis cs:3,0.267089061094514)
--(axis cs:3,0.270784255916584);

\path [draw=red]
(axis cs:4,0.339521823990702)
--(axis cs:4,0.342283532540786);

\path [draw=blue]
(axis cs:0.1,0.14018786926178)
--(axis cs:0.1,0.150962996047816);

\path [draw=blue]
(axis cs:1.1,0.343687929460303)
--(axis cs:1.1,0.355323071905505);
 
\path [draw=blue]
(axis cs:2.1,0.14690116651478)
--(axis cs:2.1,0.14916931869781);

\path [draw=blue]
(axis cs:3.1,0.274113704050368)
--(axis cs:3.1,0.279272859840664);

\path [draw=blue]
(axis cs:4.1,0.341213418804761)
--(axis cs:4.1,0.344730788860937);

\addplot [black, mark=asterisk, mark size=3, mark options={solid}, only marks]
table {%
-0.1 0.0281010327328888
0.9 0.025425968687687
1.9 0.031359634755809
2.9 0.030430581291492
3.9 0.028472448094817
};
\addplot [red, mark=asterisk, mark size=3, mark options={solid}, only marks]
table {%
0 0.124806024708847
1 0.347889001154401
2 0.122915104937449
3 0.268936658505549
4 0.340902678265744
};
\addplot [blue, mark=asterisk, mark size=3, mark options={solid}, only marks]
table {%
0.1 0.145575432654798
1.1 0.349505500682904
2.1 0.148035242606295
3.1 0.276693281945516
4.1 0.342972103832849
};

\end{groupplot}

\end{tikzpicture}
		}&
		\resizebox{0.48\textwidth}{0.34\textwidth}{%
			\pgfplotsset{compat=1.3}
\begin{tikzpicture}

\begin{groupplot}[group style={group size=2 by 1}]
\nextgroupplot[
tick pos=both,
axis lines=left, xtick=\empty, 
x grid style={white!69.0196078431373!black},
yticklabel style = {xshift=-1 ex},
xticklabel style = {yshift=-1ex},
xmin=-0.2, xmax=4.2,
xtick style={color=black},
xtick={0,1,2,3,4},
xticklabel style={rotate=45.0,anchor=east},
xticklabels={\Netscience,\NipsEgo,\PagesGovernment,\Anybeat,\Gplus},
y grid style={white!69.0196078431373!black},
ymin=-0.0028395685807644, ymax=0.0955551554273084,
ytick style={color=black},
ytick={-0.02,0,0.02,0.04,0.06,0.08,0.1},
ylabel={rel. increase \disconidx},
legend style={draw=none},
legend style={at={(1.1,1), anchor=north east}},
line width = 0.50 mm, 
tick style={line width=0.50mm},
ytick scale label code/.code={$\times 10^{-2}$},
]
\path [draw=black]
(axis cs:-0.1,0.003415170293928)
--(axis cs:-0.1,0.00584530673014);

\path [draw=black]
(axis cs:0.9,0.00561081322043)
--(axis cs:0.9,0.006078493127502);

\path [draw=black]
(axis cs:1.9,0.001632918874148)
--(axis cs:1.9,0.00221685470642);

\path [draw=black]
(axis cs:2.9,0.004353135360906)
--(axis cs:2.9,0.004804054736506);

\path [draw=black]
(axis cs:3.9,0.006176785879706)
--(axis cs:3.9,0.006598900859466);

\path [draw=red]
(axis cs:0,0.00984346285947)
--(axis cs:0,0.026512664297808);

\path [draw=red]
(axis cs:1,0.081030893540672)
--(axis cs:1,0.090792688664296);

\path [draw=red]
(axis cs:2,0.004600997726438)
--(axis cs:2,0.00785352843588);

\path [draw=red]
(axis cs:3,0.046208115197612)
--(axis cs:3,0.050060758041);

\path [draw=red]
(axis cs:4,0.07092250153131)
--(axis cs:4,0.074345115004608);

\path [draw=blue]
(axis cs:0.1,0.009643635773477)
--(axis cs:0.1,0.021558796494785);

\path [draw=blue]
(axis cs:1.1,0.079558321935674)
--(axis cs:1.1,0.091082667972396);

\path [draw=blue]
(axis cs:2,0.00708863970367)
--(axis cs:2,0.012119522501372);

\path [draw=blue]
(axis cs:3.1,0.048453555145443)
--(axis cs:3.1,0.051269708881937);

\path [draw=blue]
(axis cs:4.1,0.071041626835282)
--(axis cs:4.1,0.074510745036488);

\addplot [black, mark=asterisk, mark size=3, mark options={solid}, only marks]
table {%
-0.1 0.0046302385120344
0.9 0.005844653173966
1.9 0.001924886790284
2.9 0.004578595048706
3.9 0.006387843369586
};
\addplot [red, mark=asterisk, mark size=3, mark options={solid}, only marks]
table {%
0 0.018178063578639
1 0.085911791102484
2 0.006227263081159
3 0.048134436619306
4 0.072633808267959
};
\addplot [blue, mark=asterisk, mark size=3, mark options={solid}, only marks]
table {%
0.1 0.015601216134131
1.1 0.085320494954035
2.1 0.009604081102521
3.1 0.04986163201369
4.1 0.072776185935885
};
\addlegendentry{\Random}
\addlegendentry{\HighDegree}
\addlegendentry{\MaxInfluence}
\end{groupplot}

\end{tikzpicture}
		}\\
		(a)&(b)
	\end{tabular}
	\caption{The relative change of the \disconidx on different datasets with
		$k=\lceil 2\%\cdot n\rceil$ seed nodes.  The plots show (a)~marketing
		campaigns and (b)~polarizing campaigns.}
\label{fig:baseline-discon}
\end{figure}
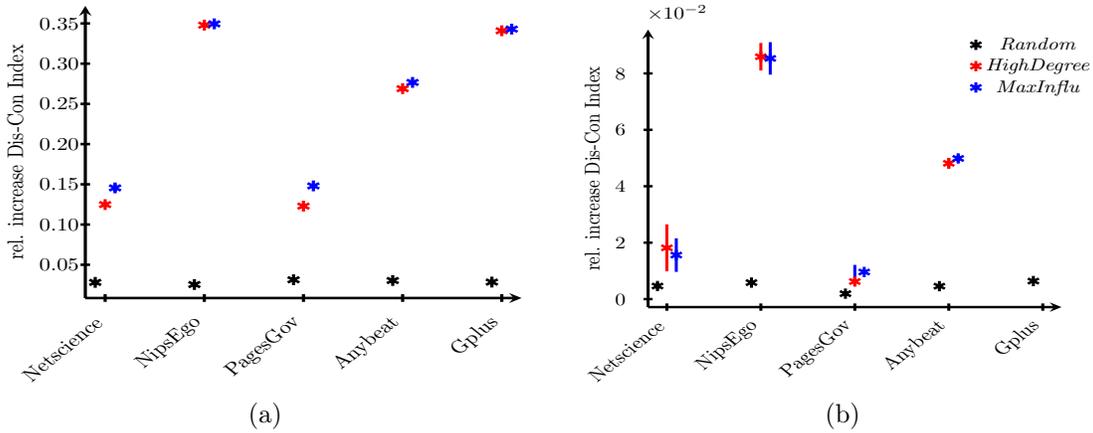

\begin{figure}[H]
	\centering
    \begin{tabular}{cc}
        \resizebox{0.48\textwidth}{0.34\textwidth}{%
			\pgfplotsset{compat=1.3}
\begin{tikzpicture}

\begin{groupplot}[group style={group size=2 by 1}]
\nextgroupplot[
tick pos=both,
axis lines=left, xtick=\empty, 
x grid style={white!69.0196078431373!black},
yticklabel style = {xshift=-1 ex},
xticklabel style = {yshift=-1ex},
xmin=-0.2, xmax=4.2,
xtick style={color=black},
xtick={0,1,2,3,4},
xticklabel style={rotate=45.0,anchor=east},
xticklabels={\Netscience,\NipsEgo,\PagesGovernment,\Anybeat,\Gplus},
y grid style={white!69.0196078431373!black},
ymin=-0.060095260946855, ymax=0.0185368843957079,
ytick style={color=black},
ytick={-0.07,-0.06,-0.05,-0.04,-0.03,-0.02,-0.01,0,0.01,0.02},
ylabel={rel. increase \disidx},
legend style={font = \Huge, draw=none},
legend style={at={(1,1), anchor=north east}},
line width = 0.50 mm, 
tick style={line width=0.50mm},
ytick scale label code/.code={$\times 10^{-2}$},
]
\path [draw=black]
(axis cs:-0.1,-0.000629016685996)
--(axis cs:-0.1,0.000874192890588);

\path [draw=black]
(axis cs:0.9,-0.00066713752601)
--(axis cs:0.9,-3.4461455246e-05);

\path [draw=black]
(axis cs:1.9,-0.000546548523255)
--(axis cs:1.9,-3.8093497635e-05);

\path [draw=black]
(axis cs:2.9,-0.000600460884754)
--(axis cs:2.9,-0.000142616954252);

\path [draw=black]
(axis cs:3.9,-0.000468144573322369)
--(axis cs:3.9,-0.000287840197355631);

\path [draw=red]
(axis cs:0,-0.013201040705656)
--(axis cs:0,0.014962695971046);

\path [draw=red]
(axis cs:1,-0.053161738761712)
--(axis cs:1,-0.04951643591064);

\path [draw=red]
(axis cs:2,0.001772071773992)
--(axis cs:2,0.009185941851456);

\path [draw=red]
(axis cs:3,-0.029806988035078)
--(axis cs:3,-0.025602771342954);

\path [draw=red]
(axis cs:4,-0.048818782618228)
--(axis cs:4,-0.047575990206984);

\path [draw=blue]
(axis cs:0.1,-0.01259445844042)
--(axis cs:0.1,0.006526653379248);

\path [draw=blue]
(axis cs:1.1,-0.056521072522193)
--(axis cs:1.1,-0.048359265956923);

\path [draw=blue]
(axis cs:2.1,-0.005747946475147)
--(axis cs:2.1,-0.000269971909429);

\path [draw=blue]
(axis cs:3.1,-0.032254707621049)
--(axis cs:3.1,-0.028175865272457);

\path [draw=blue]
(axis cs:4.1,-0.049328980002168)
--(axis cs:4.1,-0.046700998247052);

\addplot [black, mark=asterisk, mark size=3, mark options={solid}, only marks]
table {%
-0.1 0.000122588102296
0.9 -0.000350799490628
1.9 -0.000292321010445
2.9 -0.000371538919503
3.9 -0.000377992385339
};
\addplot [red, mark=asterisk, mark size=3, mark options={solid}, only marks]
table {%
0 0.000880827632695
1 -0.051339087336176
2 0.005479006812724
3 -0.027704879689016
4 -0.048197386412606
};
\addplot [blue, mark=asterisk, mark size=3, mark options={solid}, only marks]
table {%
0.1 -0.003033902530586
1.1 -0.052440169239558
2.1 -0.003008959192288
3.1 -0.030215286446753
4.1 -0.04801498912461
};
\end{groupplot}

\end{tikzpicture}%
		}&
		\resizebox{0.48\textwidth}{0.34\textwidth}{%
			\pgfplotsset{compat=1.3}
\begin{tikzpicture}

\begin{groupplot}[group style={group size=2 by 1}]
\nextgroupplot[
tick pos=both,
axis lines=left, xtick=\empty, 
x grid style={white!69.0196078431373!black},
yticklabel style = {xshift=-1 ex},
xticklabel style = {yshift=-1ex},
xmin=-0.2, xmax=4.2,
xtick style={color=black},
xtick={0,1,2,3,4},
xticklabel style={rotate=45.0,anchor=east},
xticklabels={\Netscience,\NipsEgo,\PagesGovernment,\Anybeat,\Gplus},
y grid style={white!69.0196078431373!black},
ymin=0.0126591508247966, ymax=0.628825108358854,
ytick style={color=black},
ytick={0,0.1,0.2,0.3,0.4,0.5,0.6,0.7},
yticklabels={
  \(\displaystyle {0.0}\),
  \(\displaystyle {0.1}\),
  \(\displaystyle {0.2}\),
  \(\displaystyle {0.3}\),
  \(\displaystyle {0.4}\),
  \(\displaystyle {0.5}\),
  \(\displaystyle {0.6}\),
  \(\displaystyle {0.7}\)
},
ylabel={rel. increase \disidx},
legend style={draw=none},
legend style={at={(1.1,1), anchor=north east}},
line width = 0.50 mm, 
tick style={line width=0.50mm},
ytick scale label code/.code={$\times 10^{-2}$},
]

\path [draw=black]
(axis cs:-0.1,0.047622555008085)
--(axis cs:-0.1,0.052406552216763);

\path [draw=black]
(axis cs:0.9,0.040666694349072)
--(axis cs:0.9,0.041332203261736);

\path [draw=black]
(axis cs:1.9,0.046770741290215)
--(axis cs:1.9,0.047782168701931);

\path [draw=black]
(axis cs:2.9,0.0437442274915)
--(axis cs:2.9,0.044146197165868);

\path [draw=black]
(axis cs:3.9,0.050406886150427)
--(axis cs:3.9,0.050630570108613);

\path [draw=red]
(axis cs:0,0.183323978402376)
--(axis cs:0,0.202294724490924);

\path [draw=red]
(axis cs:1,0.58035204456031)
--(axis cs:1,0.599587247145578);

\path [draw=red]
(axis cs:2,0.154477056495996)
--(axis cs:2,0.158617057564446);

\path [draw=red]
(axis cs:3,0.452513071831155)
--(axis cs:3,0.460981497127829);

\path [draw=red]
(axis cs:4,0.569375941262302)
--(axis cs:4,0.57436989666069);

\path [draw=blue]
(axis cs:0.1,0.222331212763594)
--(axis cs:0.1,0.238569458515456);

\path [draw=blue]
(axis cs:1.1,0.590663422958611)
--(axis cs:1.1,0.600817564834579);

\path [draw=blue]
(axis cs:2.1,0.21482208571373)
--(axis cs:2.1,0.21882258172096);

\path [draw=blue]
(axis cs:3.1,0.47044561692322)
--(axis cs:3.1,0.47604180175384);

\path [draw=blue]
(axis cs:4.1,0.566887331046118)
--(axis cs:4.1,0.575600923924598);

\addplot [black, mark=asterisk, mark size=3, mark options={solid}, only marks]
table {%
-0.1 0.0500145536124244
0.9 0.040999448805404
1.9 0.047276454996073
2.9 0.043945212328684
3.9 0.05051872812952
};
\addplot [red, mark=asterisk, mark size=3, mark options={solid}, only marks]
table {%
0 0.19280935144665
1 0.589969645852944
2 0.156547057030221
3 0.456747284479492
4 0.571872918961496
};
\addplot [blue, mark=asterisk, mark size=3, mark options={solid}, only marks]
table {%
0.1 0.230450335639525
1.1 0.595740493896595
2.1 0.216822333717345
3.1 0.47324370933853
4.1 0.571244127485358
};
\addlegendentry{\Random}
\addlegendentry{\HighDegree}
\addlegendentry{\MaxInfluence}
\end{groupplot}

\end{tikzpicture}%
		}\\
		(a)&(b)
	\end{tabular}
	\caption{The relative change of the \disidx on different datasets with
		$k=\lceil 2\%\cdot n\rceil$ seed nodes.  The plots show (a)~marketing
		campaigns and (b)~polarizing campaigns.}
\label{fig:baseline-dis}
\end{figure}

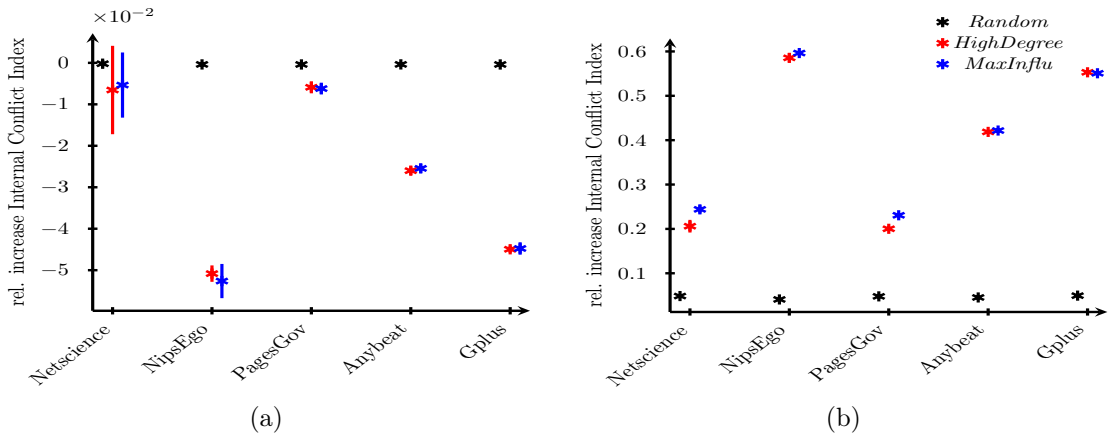
\begin{figure}[H]
	\centering
    \begin{tabular}{cc}
        \resizebox{0.48\textwidth}{0.34\textwidth}{%
			\pgfplotsset{compat=1.3}
\begin{tikzpicture}

\begin{groupplot}[group style={group size=2 by 1}]
\nextgroupplot[
tick pos=both,
axis lines=left, xtick=\empty, 
x grid style={white!69.0196078431373!black},
yticklabel style = {xshift=-1 ex},
xticklabel style = {yshift=-1ex},
xmin=-0.2, xmax=4.2,
xtick style={color=black},
xtick={0,1,2,3,4},
xticklabel style={rotate=45.0,anchor=east},
xticklabels={\Netscience,\NipsEgo,\PagesGovernment,\Anybeat,\Gplus},
y grid style={white!69.0196078431373!black},
ymin=-0.0597880123921739, ymax=0.0071491992231419,
ytick style={color=black},
ytick={-0.06,-0.05,-0.04,-0.03,-0.02,-0.01,0,0.01},
ylabel={rel. increase \intidx},
legend style={font = \Huge, draw=none},
legend style={at={(1,1), anchor=north east}},
line width = 0.50 mm, 
tick style={line width=0.50mm},
ytick scale label code/.code={$\times 10^{-2}$},
]

\path [draw=black]
(axis cs:-0.1,-0.001090484360664)
--(axis cs:-0.1,0.000680881649102);

\path [draw=black]
(axis cs:0.9,-0.000655213622717)
--(axis cs:0.9,-9.7492561923e-05);

\path [draw=black]
(axis cs:1.9,-0.000581512252346)
--(axis cs:1.9,-0.000188492936932);

\path [draw=black]
(axis cs:2.9,-0.000510104608403)
--(axis cs:2.9,-0.000230447748653);

\path [draw=black]
(axis cs:3.9,-0.00049651022985)
--(axis cs:3.9,-0.000281396594136);

\path [draw=red]
(axis cs:0,-0.017205497706377)
--(axis cs:0,0.004106598695173);

\path [draw=red]
(axis cs:1,-0.052804989304264)
--(axis cs:1,-0.048845418821184);

\path [draw=red]
(axis cs:2,-0.007344950292105)
--(axis cs:2,-0.004442105231719);

\path [draw=red]
(axis cs:3,-0.027023598453499)
--(axis cs:3,-0.024974636625455);

\path [draw=red]
(axis cs:4,-0.045799022100929)
--(axis cs:4,-0.044107901219903);

\path [draw=blue]
(axis cs:0.1,-0.013210234808446)
--(axis cs:0.1,0.002501466302444);

\path [draw=blue]
(axis cs:1.1,-0.056745411864205)
--(axis cs:1.1,-0.048493913653099);

\path [draw=blue]
(axis cs:2.1,-0.007619073257552)
--(axis cs:2.1,-0.004783954813946);

\path [draw=blue]
(axis cs:3.1,-0.02656739096934)
--(axis cs:3.1,-0.024335262751822);

\path [draw=blue]
(axis cs:4.1,-0.046254425772522)
--(axis cs:4.1,-0.043265841983194);

\addplot [black, mark=asterisk, mark size=3, mark options={solid}, only marks]
table {%
-0.1 -0.0002048013557811
0.9 -0.00037635309232
1.9 -0.000385002594639
2.9 -0.000370276178528
3.9 -0.000388953411993
};
\addplot [red, mark=asterisk, mark size=3, mark options={solid}, only marks]
table {%
0 -0.006549449505602
1 -0.050825204062724
2 -0.005893527761912
3 -0.025999117539477
4 -0.044953461660416
};
\addplot [blue, mark=asterisk, mark size=3, mark options={solid}, only marks]
table {%
0.1 -0.005354384253001
1.1 -0.052619662758652
2.1 -0.006201514035749
3.1 -0.025451326860581
4.1 -0.044760133877858
};
\end{groupplot}

\end{tikzpicture}%
		}&
		\resizebox{0.48\textwidth}{0.34\textwidth}{%
			\pgfplotsset{compat=1.3}
\begin{tikzpicture}

\begin{groupplot}[group style={group size=2 by 1}]
\nextgroupplot[
tick pos=both,
axis lines=left, xtick=\empty, 
x grid style={white!69.0196078431373!black},
yticklabel style = {xshift=-1 ex},
xticklabel style = {yshift=-1ex},
xmin=-0.2, xmax=4.2,
xtick style={color=black},
xtick={0,1,2,3,4},
xticklabel style={rotate=45.0,anchor=east},
xticklabels={\Netscience,\NipsEgo,\PagesGovernment,\Anybeat,\Gplus},
y grid style={white!69.0196078431373!black},
ymin=0.0127386235742769, ymax=0.629363755388465,
ytick style={color=black},
ytick={0,0.1,0.2,0.3,0.4,0.5,0.6,0.7},
yticklabels={
  \(\displaystyle {0.0}\),
  \(\displaystyle {0.1}\),
  \(\displaystyle {0.2}\),
  \(\displaystyle {0.3}\),
  \(\displaystyle {0.4}\),
  \(\displaystyle {0.5}\),
  \(\displaystyle {0.6}\),
  \(\displaystyle {0.7}\)
},
ylabel={rel. increase \intidx},
legend style={draw=none},
legend style={at={(axis cs:2.4,0.7)}, anchor=north west},
line width = 0.50 mm, 
tick style={line width=0.50mm},
ytick scale label code/.code={$\times 10^{-2}$},
]
\path [draw=black]
(axis cs:-0.1,0.046074030509379)
--(axis cs:-0.1,0.051232188321367);

\path [draw=black]
(axis cs:0.9,0.04076703865674)
--(axis cs:0.9,0.041253556213468);

\path [draw=black]
(axis cs:1.9,0.047578169853077)
--(axis cs:1.9,0.048346335224339);

\path [draw=black]
(axis cs:2.9,0.045309241044207)
--(axis cs:2.9,0.045663520716761);

\path [draw=black]
(axis cs:3.9,0.049467024251114)
--(axis cs:3.9,0.04971866185877);

\path [draw=red]
(axis cs:0,0.191839754682586)
--(axis cs:0,0.220375975900772);

\path [draw=red]
(axis cs:1,0.574838799358796)
--(axis cs:1,0.596274720426214);

\path [draw=red]
(axis cs:2,0.198741616145655)
--(axis cs:2,0.201783074784027);

\path [draw=red]
(axis cs:3,0.415101483309098)
--(axis cs:3,0.4222217844019);

\path [draw=red]
(axis cs:4,0.549666617968446)
--(axis cs:4,0.556299935074844);

\path [draw=blue]
(axis cs:0.1,0.234790523368836)
--(axis cs:0.1,0.253356711951856);

\path [draw=blue]
(axis cs:1.1,0.591384058407926)
--(axis cs:1.1,0.601335340306002);

\path [draw=blue]
(axis cs:2.1,0.228673238709404)
--(axis cs:2.1,0.232769440490988);

\path [draw=blue]
(axis cs:3.1,0.418382219477667)
--(axis cs:3.1,0.425067884750365);

\path [draw=blue]
(axis cs:4.1,0.546742832579124)
--(axis cs:4.1,0.554878069026602);

\addplot [black, mark=asterisk, mark size=3, mark options={solid}, only marks]
table {%
-0.1 0.0486531094153733
0.9 0.041010297435104
1.9 0.047962252538708
2.9 0.045486380880484
3.9 0.049592843054942
};
\addplot [red, mark=asterisk, mark size=3, mark options={solid}, only marks]
table {%
0 0.206107865291679
1 0.585556759892505
2 0.200262345464841
3 0.418661633855499
4 0.552983276521645
};
\addplot [blue, mark=asterisk, mark size=3, mark options={solid}, only marks]
table {%
0.1 0.244073617660346
1.1 0.596359699356964
2.1 0.230721339600196
3.1 0.421725052114016
4.1 0.550810450802863
};
\addlegendentry{\Random}
\addlegendentry{\HighDegree}
\addlegendentry{\MaxInfluence}
\end{groupplot}

\end{tikzpicture}%
		}\\
		(a)&(b)
	\end{tabular}
	\caption{The relative changes of the \intidx on different datasets with
		$k=\lceil 2\%\cdot n\rceil$ seed nodes.  The plots show (a)~marketing
		campaigns and (b)~polarizing campaigns.}
\label{fig:baseline-int}
\end{figure}

\spara{Evaluation of the heuristics.}
Next, we turn our attention to our algorithms and we study their scalability and their
accuracy.  
Figure~\ref{fig:time}(a) compares the greedy algorithms and the heuristics.  It
shows that the greedy algorithms are up to three orders of magnitude slower than the
heuristics; this makes running the greedy algorithms prohibitively costly on
larger datasets.  
In Figure~\ref{fig:time}(b) we study the solution quality of \MaxLinDisCon. 
We consider the \disconidx (other indices behave similarly)
and compare \MaxLinDisCon with \MaxDisCon; we also
include a lower bound~\LowDisCon, as discussed in Sec.~\ref{sec:dis-con}. We observe that
the heuristic \MaxLinDisCon performs slightly worse than \MaxDisCon.
Nonetheless, the results of \MaxLinDisCon are
of high quality and almost as good as the much slower \MaxDisCon. 

\begin{figure}[t!]
	\centering
    \begin{tabular}{cc}
        \resizebox{0.48\textwidth}{0.34\textwidth}{%
			\begin{tikzpicture}

\definecolor{darkgray176}{RGB}{176,176,176}
\definecolor{green01270}{RGB}{0,127,0}

\begin{axis}[
log basis y={10},
tick pos=both,
axis lines=left, xtick=\empty, ytick=\empty,
x grid style={white!69.0196078431373!black},
xlabel={$n$},
xmin=185.5, xmax=922.5,
xtick style={color=black},
xtick={100,300,500,700,900,1100},
xticklabels={
  \(\displaystyle {100}\),
  \(\displaystyle {300}\),
  \(\displaystyle {500}\),
  \(\displaystyle {700}\),
  \(\displaystyle {900}\),
  \(\displaystyle {1100}\)
},
y grid style={darkgray176},
ylabel={seconds},
ymin=0.0389282771166603, ymax=10066.7456896512,
ymode=log,
ytick style={color=black},
ytick={0.001,0.01,0.1,1,10,100,1000,10000,100000,1000000},
legend pos= outer north east,
legend columns=1,
legend style={draw=none},
line width = 0.50 mm, 
tick style={line width=0.50mm},
ylabel={seconds}
]
\addplot [green]
table {%
219 0.351797819137573
379 0.498874902725219
404 0.690531015396118
889 1.92104387283325
};
\addplot [red]
table {%
219 0.117319107055664
379 0.227393150329589
404 0.331827878952026
889 1.62721514701843
};
\addplot [blue]
table {%
219 0.164546012878417
379 0.336296081542968
404 0.448657035827636
889 1.43192887306213
};
\addplot [black]
table {%
219 0.141379117965698
379 0.287919044494628
404 0.174530029296875
889 1.02280020713806
};
\addplot [red, dashed]
table {%
219 23.1959509849548
379 96.7579200267792
404 97.9669659137726
889 1780.42035198212
};
\addplot [blue, dotted]
table {%
219 8.38209581375122
379 31.5538189411163
404 34.2430381774902
889 1825.1451280117
};
\addplot [green, dashed]
table {%
219 11.2410268783569
379 54.1678268909454
404 84.2020509243011
889 4585.47620415688
};
\addplot [black, dashed]
table {%
219 24.2821140289307
379 97.0339431762695
404 259.328497886658
889 8610.16955184937
};

\addlegendentry{\MaxLinInt}
\addlegendentry{\MaxLinDis}
\addlegendentry{\MaxLinPol}
\addlegendentry{\MaxLinDisCon}
\addlegendentry{\MaxInt}
\addlegendentry{\MaxDis}
\addlegendentry{\MaxPol}
\addlegendentry{\MaxDisCon}

\end{axis}

\end{tikzpicture}
		}&
		\resizebox{0.48\textwidth}{0.34\textwidth}{%
			\pgfplotsset{compat=1.3}
\begin{tikzpicture}

\begin{axis}[
    legend cell align={right},
    legend style={fill opacity=0.8, draw opacity=1, text opacity=1, draw=none},
    tick pos=both,
    axis lines=left, xtick=\empty, 
    x grid style={white!69.0196078431373!black}, 
    xmin=-0.5495, xmax=3.2895,
    xtick style={color=black},
    xtick={0,1,2,3},
    xticklabel style={rotate=20.0,anchor=east},
    xticklabels={Convote,WikiTalkHT,Netscience,WikiVote},
    y grid style={darkgray176},
    ymin=0, ymax=0.240018069198431,
    y grid style={white!69.0196078431373!black},
    ylabel={rel. increase dis--con},
ytick style={color=black},
legend pos= north west,
legend style={draw=none},
line width = 0.50 mm, 
tick style={line width=0.50mm},
    ]
\draw[draw=none,fill=black,fill opacity=0.7] (axis cs:-0.375,0) rectangle (axis cs:-0.225,0.0813089467594634);
\draw[draw=none,fill=black,fill opacity=0.7] (axis cs:0.625,0) rectangle (axis cs:0.775,0.0673478485883574);
\draw[draw=none,fill=black,fill opacity=0.7] (axis cs:1.625,0) rectangle (axis cs:1.775,0.228588637331248);
\draw[draw=none,fill=black,fill opacity=0.7] (axis cs:2.625,0) rectangle (axis cs:2.775,0.055083670469406);
\draw[draw=none,fill=red,fill opacity=0.7] (axis cs:-0.205,0) rectangle (axis cs:-0.055,0.0813089467594616);
\draw[draw=none,fill=red,fill opacity=0.7] (axis cs:0.795,0) rectangle (axis cs:0.945,0.0673478485883579);
\draw[draw=none,fill=red,fill opacity=0.7] (axis cs:1.795,0) rectangle (axis cs:1.945,0.228588637331839);
\draw[draw=none,fill=red,fill opacity=0.7] (axis cs:2.795,0) rectangle (axis cs:2.945,0.0550836704694053);
\draw[draw=none,fill=blue,fill opacity=0.7] (axis cs:-0.035,0) rectangle (axis cs:0.115,0.0788504647853582);
\draw[draw=none,fill=blue,fill opacity=0.7] (axis cs:0.965,0) rectangle (axis cs:1.115,0.0661599570396653);
\draw[draw=none,fill=blue,fill opacity=0.7] (axis cs:1.965,0) rectangle (axis cs:2.115,0.211866551919509);
\draw[draw=none,fill=blue,fill opacity=0.7] (axis cs:2.965,0) rectangle (axis cs:3.115,0.0542642510595786);
\addlegendimage{line width=0.3mm,color=black};
\addlegendentry{\MaxLinDisCon};
\addlegendimage{line width=0.3mm,color=red};
\addlegendentry{\MaxDisCon};
\addlegendimage{line width=0.3mm,color=blue}
\addlegendentry{\LowDisCon};
\end{axis}
\end{tikzpicture}
		}\\
		(a)&(b)
	\end{tabular}
\caption{Analysis of the greedy algorithms and heuristics. Plot~(a) shows the
	running times (in seconds) w.r.t.\ the graph size and (b)~the relative
	increase of the \disconidx with $k=5$ and marketing campaigns.}
\label{fig:time}
\end{figure}
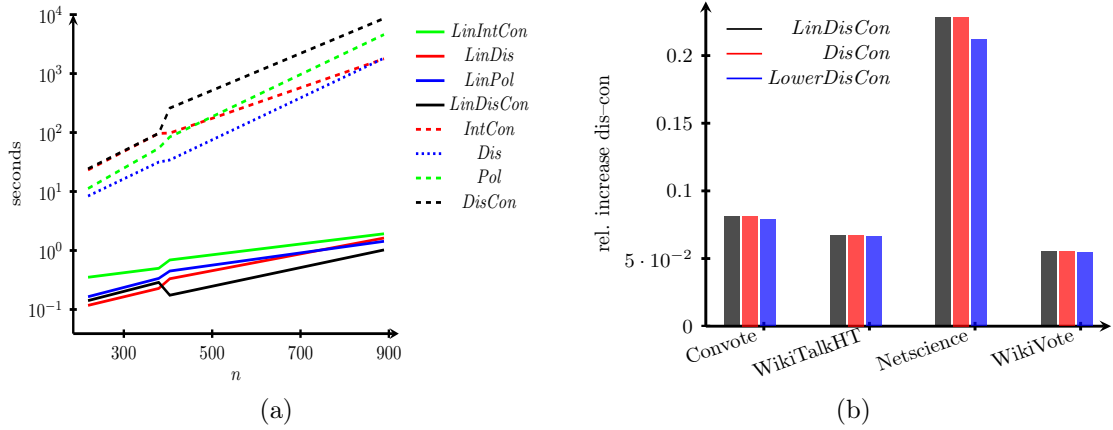

Figure~\ref{fig:time-2} shows that the heuristics that only consider the linear
parts scale linearly in the size of the graph. 
In Figure~\ref{fig:compare-2} and Figure~\ref{fig:compare-3}, 
we add the comparisons of relative increases on other network indices. 

\begin{figure}[H]
	\centering
    \begin{tabular}{cc}
        \resizebox{0.48\textwidth}{0.34\textwidth}{%
			\begin{tikzpicture}

\definecolor{color0}{rgb}{0.0470588235294118,0.364705882352941,0.647058823529412}
\definecolor{color1}{rgb}{0,0.725490196078431,0.270588235294118}
\definecolor{color2}{rgb}{1,0.584313725490196,0}
\definecolor{color3}{rgb}{1,0.172549019607843,0}
\definecolor{color4}{rgb}{0.517647058823529,0.356862745098039,0.592156862745098}

\begin{axis}[
log basis y={10},
tick pos=both,
axis lines=left, xtick=\empty, ytick=\empty,
x grid style={white!69.0196078431373!black},
xmode=log,
xmin=287.974598322573, xmax=121234.106075193,
xtick style={color=black},
xtick={10,100,1000,10000,100000,1000000,10000000},
xlabel={$n$},
y grid style={white!69.0196078431373!black},
ymin=0.00270347050270336, ymax=26146.2301867163,
ymode=log,
ytick style={color=black},
ymin=0.0491058247144428, ymax=4717.94954243572,
legend style={font=\Large, draw=none},
legend pos= outer north east, 
legend columns=1,
line width = 0.50 mm, 
tick style={line width=0.50mm},
ylabel={seconds},
ytick={0.001,0.01,0.1,1,10,100,1000,10000,100000},
]
\addplot [black]
table {%
379 0.764966964721679
889 1.16822504997253
962 2.09355688095093
1133 1.3598461151123
2000 3.24952888488769
2672 15.2518429756165
2888 1.04771494865417
4158 12.8684680461884
4991 6.07284808158875
5054 12.2550020217896
7057 45.4769439697266
11204 147.640490055084
12645 38.9246850013733
14113 105.535808086395
17903 345.76726102829
21363 276.831474065781
23613 88.4117879867554
56739 476.900470018387
92117 1352.07726287842
};
\addplot [red, dashed]
table {%
379 0.300781965255737
889 4.1201479434967
962 17.492301940918
1133 6.64484000205994
2000 32.1900320053101
2672 115.749627828598
2888 21.6390209197998
4158 95.3211090564728
4991 117.369243860245
5054 199.107218027115
7057 489.44070315361
11204 1028.44246792793
12645 750.429357051849
14113 915.232490062714
17903 2702.09905195236
21363 2182.64607310295
23613 2445.63373088837
56739 652.830957174301
92117 1789.24687790871
};
\addplot [blue, dashed]
table {%
379 0.08272099494934
889 1.06456112861633
962 3.11181807518005
1133 1.66212701797485
2000 13.1593551635742
2672 67.3092830181122
2888 15.0609781742096
4158 73.0469350814819
4991 97.3521199226379
5054 199.926979064941
7057 565.310620069504
11204 991.69727897644
12645 760.318145036697
14113 880.545186042786
17903 2506.07513093948
21363 2190.7809381485
23613 2009.31821107864
56739 689.032766103745
92117 1932.56712293625
};
\addplot [green, dashed]
table {%
379 0.150960206985473
889 1.82266402244568
962 4.09466600418091
1133 4.20449185371399
2000 27.2559521198273
2672 102.760493040085
2888 18.1482269763947
4158 89.5487418174744
4991 87.510880947113
5054 184.709882974625
7057 591.682937860489
11204 980.06405210495
12645 744.870545864105
14113 876.566281080246
17903 2488.39201998711
21363 2176.75665092468
23613 2800.72553992271
56739 680.772832870483
92117 1969.11894917488
};
\addplot [red]
table {%
379 0.177518844604492
889 0.576751947402954
962 2.07290506362915
1133 1.50677299499512
2000 3.97834992408752
2672 18.1553838253021
2888 0.86129093170166
4158 19.3920450210571
4991 9.81180000305176
5054 15.1049220561981
7057 74.8052289485931
11204 215.945448160172
12645 66.7483649253845
14113 171.059938907623
17903 524.689456224442
21363 443.971834897995
23613 139.407995939255
56739 575.343125104904
92117 1477.30988001823
};
\addplot [blue]
table {%
379 0.190727949142456
889 0.567249059677124
962 2.13277697563171
1133 1.13939118385315
2000 3.90835094451904
2672 18.2101020812988
2888 0.772948026657104
4158 17.3098080158234
4991 9.41839289665222
5054 14.9316539764404
7057 69.4169249534607
11204 215.08848285675
12645 63.4836950302124
14113 167.158182144165
17903 497.33438205719
21363 422.177358865738
23613 135.526823997498
56739 573.797589063644
92117 1474.43640303612
};
\addlegendentry{\MaxSum}
\addlegendentry{\MaxLinInt}
\addlegendentry{\MaxLinDis}
\addlegendentry{\MaxLinPol}
\addlegendentry{\MaxLinDisCon}
\addlegendentry{\MaxInfluence}
\end{axis}

\end{tikzpicture}
		}
	\end{tabular}
\caption{
The running times in seconds of the heuristics 
w.r.t.\ the graph size.}
\label{fig:time-2}
\end{figure}
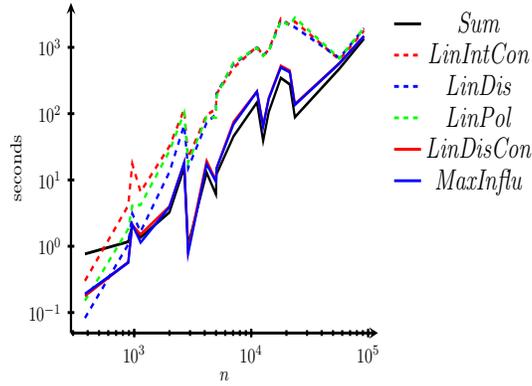

\begin{figure}[H]
	\centering
    \begin{tabular}{cc}
		\resizebox{0.48\textwidth}{0.34\textwidth}{%
			\pgfplotsset{compat=1.3}
\begin{tikzpicture}

\begin{axis}[
legend cell align={left},
legend style={fill opacity=0.8, draw opacity=1, text opacity=1, draw=none},
tick pos=both,
axis lines=left, xtick=\empty, 
x grid style={white!69.0196078431373!black},
xmin=-0.5495, xmax=3.2895,
xtick style={color=black},
xtick={0,1,2,3},
xticklabel style={rotate=20.0,anchor=east},
xticklabels={Convote,WikiTalkHT,Netscience,WikiVote},
y grid style={white!69.0196078431373!black},
ymin=0, ymax=0.0239027820464249,
ytick style={color=black},
ylabel={rel. increase internal conflict},
ytick style={color=black},
legend pos= outer north east,
legend style={draw=none},
line width = 0.50 mm, 
tick style={line width=0.50mm},
ytick scale label code/.code={$\times 10^{-2}$},
]
\draw[draw=none,fill=black,fill opacity=0.7] (axis cs:-0.375,0) rectangle (axis cs:-0.225,0.0213633247241984);
\draw[draw=none,fill=black,fill opacity=0.7] (axis cs:0.625,0) rectangle (axis cs:0.775,0.0132705057006914);
\draw[draw=none,fill=black,fill opacity=0.7] (axis cs:1.625,0) rectangle (axis cs:1.775,0.0114420192393931);
\draw[draw=none,fill=black,fill opacity=0.7] (axis cs:2.625,0) rectangle (axis cs:2.775,0.00572828070638435);
\draw[draw=none,fill=red,fill opacity=0.7] (axis cs:-0.205,0) rectangle (axis cs:-0.055,0.0227645543299285);
\draw[draw=none,fill=red,fill opacity=0.7] (axis cs:0.795,0) rectangle (axis cs:0.945,0.0132705057006914);
\draw[draw=none,fill=red,fill opacity=0.7] (axis cs:1.795,0) rectangle (axis cs:1.945,0.0121377799474679);
\draw[draw=none,fill=red,fill opacity=0.7] (axis cs:2.795,0) rectangle (axis cs:2.945,0.00649534895404421);
\draw[draw=none,fill=blue,fill opacity=0.7] (axis cs:-0.035,0) rectangle (axis cs:0.115,0.0194819477392247);
\draw[draw=none,fill=blue,fill opacity=0.7] (axis cs:0.965,0) rectangle (axis cs:1.115,0.0119573926231118);
\draw[draw=none,fill=blue,fill opacity=0.7] (axis cs:1.965,0) rectangle (axis cs:2.115,0.0115944448087757);
\draw[draw=none,fill=blue,fill opacity=0.7] (axis cs:2.965,0) rectangle (axis cs:3.115,0.00594627857391189);
\addlegendimage{line width=0.3mm,color=black};
\addlegendentry{\MaxLinInt};
\addlegendimage{line width=0.3mm,color=red};
\addlegendentry{\MaxInt};
\addlegendimage{line width=0.3mm,color=blue}
\addlegendentry{\LowInt};
\end{axis}

\end{tikzpicture}
		}
	\end{tabular}
\caption{Analysis of the greedy algorithms and heuristics on different datasets.
	The plot shows the relative increase of the \intidx with $k=5$ and marketing
	campaigns.}
\label{fig:compare-2}
\end{figure}
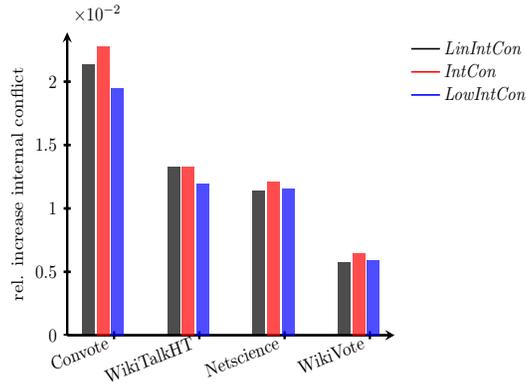

\begin{figure}[H]
	\centering
    \begin{tabular}{cc}
        \resizebox{0.48\textwidth}{0.34\textwidth}{%
			\pgfplotsset{compat=1.3}
\begin{tikzpicture}

\begin{axis}[
legend cell align={left},
legend style={fill opacity=0.8, draw opacity=1, text opacity=1, draw=none},
tick pos=both,
axis lines=left, xtick=\empty, 
x grid style={white!69.0196078431373!black},
xmin=-0.5495, xmax=3.2895,
xtick style={color=black},
xtick={0,1,2,3},
xticklabel style={rotate=20.0,anchor=east},
xticklabels={Convote,WikiTalkHT,Netscience,WikiVote},
y grid style={white!69.0196078431373!black},
ymin=0, ymax=0.0436403065496831,
ytick style={color=black},
ylabel={rel. increase disagreement},
ytick style={color=black},
legend pos= north east,
legend style={draw=none},
line width = 0.50 mm, 
tick style={line width=0.50mm},
ytick scale label code/.code={$\times 10^{-2}$},
]
\draw[draw=none,fill=black,fill opacity=0.7] (axis cs:-0.375,0) rectangle (axis cs:-0.225,0.039543712755692);
\draw[draw=none,fill=black,fill opacity=0.7] (axis cs:0.625,0) rectangle (axis cs:0.775,0.0215579563220537);
\draw[draw=none,fill=black,fill opacity=0.7] (axis cs:1.625,0) rectangle (axis cs:1.775,0.0179380903354767);
\draw[draw=none,fill=black,fill opacity=0.7] (axis cs:2.625,0) rectangle (axis cs:2.775,0.0112781900240184);
\draw[draw=none,fill=red,fill opacity=0.7] (axis cs:-0.205,0) rectangle (axis cs:-0.055,0.0415621967139839);
\draw[draw=none,fill=red,fill opacity=0.7] (axis cs:0.795,0) rectangle (axis cs:0.945,0.0215579563220537);
\draw[draw=none,fill=red,fill opacity=0.7] (axis cs:1.795,0) rectangle (axis cs:1.945,0.0189136145487223);
\draw[draw=none,fill=red,fill opacity=0.7] (axis cs:2.795,0) rectangle (axis cs:2.945,0.0114047202727589);
\draw[draw=none,fill=blue,fill opacity=0.7] (axis cs:-0.035,0) rectangle (axis cs:0.115,0.0253975011585583);
\draw[draw=none,fill=blue,fill opacity=0.7] (axis cs:0.965,0) rectangle (axis cs:1.115,0.0187993913663749);
\draw[draw=none,fill=blue,fill opacity=0.7] (axis cs:1.965,0) rectangle (axis cs:2.115,0.0164674043978995);
\draw[draw=none,fill=blue,fill opacity=0.7] (axis cs:2.965,0) rectangle (axis cs:3.115,0.0105221842300732);
\addlegendimage{line width=0.3mm,color=black};
\addlegendentry{\MaxLinDis};
\addlegendimage{line width=0.3mm,color=red};
\addlegendentry{\MaxDis};
\addlegendimage{line width=0.3mm,color=blue}
\addlegendentry{\LowDis};
\end{axis}

\end{tikzpicture}
		}&
		\resizebox{0.48\textwidth}{0.34\textwidth}{%
			\pgfplotsset{compat=1.3}
\begin{tikzpicture}

\begin{axis}[
legend cell align={left},
legend style={fill opacity=0.8, draw opacity=1, text opacity=1, draw=none},
tick pos=both,
axis lines=left, xtick=\empty, 
x grid style={white!69.0196078431373!black},
xmin=-0.5495, xmax=3.2895,
xtick style={color=black},
xtick={0,1,2,3},
xticklabel style={rotate=20.0,anchor=east},
xticklabels={Convote,WikiTalkHT,Netscience,WikiVote},
y grid style={white!69.0196078431373!black},
ymin=0, ymax=0.116277122770421,
ytick style={color=black},
ylabel={rel. increase polarization},
ytick style={color=black},
legend style={draw=none},
line width = 0.50 mm, 
tick style={line width=0.50mm},
]
\draw[draw=none,fill=black,fill opacity=0.7] (axis cs:-0.375,0) rectangle (axis cs:-0.225,0.0910891753009288);
\draw[draw=none,fill=black,fill opacity=0.7] (axis cs:0.625,0) rectangle (axis cs:0.775,0.101090703760452);
\draw[draw=none,fill=black,fill opacity=0.7] (axis cs:1.625,0) rectangle (axis cs:1.775,0.0525743173524555);
\draw[draw=none,fill=black,fill opacity=0.7] (axis cs:2.625,0) rectangle (axis cs:2.775,0.0274842017235349);
\draw[draw=none,fill=red,fill opacity=0.7] (axis cs:-0.205,0) rectangle (axis cs:-0.055,0.0943371109464605);
\draw[draw=none,fill=red,fill opacity=0.7] (axis cs:0.795,0) rectangle (axis cs:0.945,0.11074011692421);
\draw[draw=none,fill=red,fill opacity=0.7] (axis cs:1.795,0) rectangle (axis cs:1.945,0.0562017282423905);
\draw[draw=none,fill=red,fill opacity=0.7] (axis cs:2.795,0) rectangle (axis cs:2.945,0.0305632286026342);
\draw[draw=none,fill=blue,fill opacity=0.7] (axis cs:-0.035,0) rectangle (axis cs:0.115,0.064040587880875);
\draw[draw=none,fill=blue,fill opacity=0.7] (axis cs:0.965,0) rectangle (axis cs:1.115,0.0895149503460742);
\draw[draw=none,fill=blue,fill opacity=0.7] (axis cs:1.965,0) rectangle (axis cs:2.115,0.0421250338391266);
\draw[draw=none,fill=blue,fill opacity=0.7] (axis cs:2.965,0) rectangle (axis cs:3.115,0.0215947159442649);
\addlegendimage{line width=0.3mm,color=black};
\addlegendentry{\MaxLinPol};
\addlegendimage{line width=0.3mm,color=red};
\addlegendentry{\MaxPol};
\addlegendimage{line width=0.3mm,color=blue}
\addlegendentry{\LowPol};
\end{axis}

\end{tikzpicture}
		}\\
		(a)&(b)
	\end{tabular}
\caption{Analysis of the greedy algorithms and heuristics on different datasets. Plot~(a) presents the
	relative increase of the \disidx and Plot~(b) presents the relative increase
	of the \polidx. In both cases we used $k=5$ and marketing campaigns.}
\label{fig:compare-3}
\end{figure}
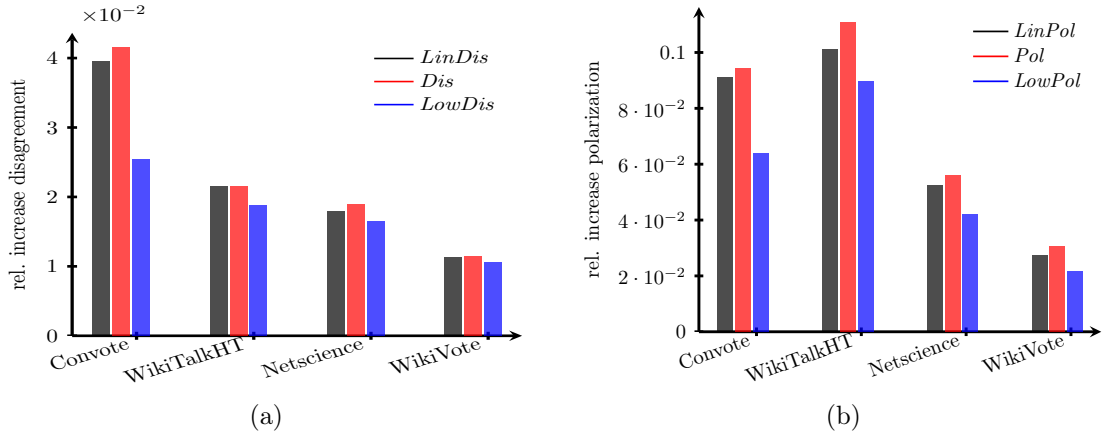

\begin{table*}[!htb]
\caption{Results for marketing campaigns with
	$k = \lceil 0.5\%\cdot n\rceil$ seeds,
	where we initialized the innate opinions using the uniform distribution.
	We report the relative increase of each index in percent.}
\label{tab:marketing-2}
\centering
  \vspace{-4mm}
\begin{adjustbox}{max width=\textwidth}
\begin{tabular}{c ccccccccc ccccccccc}
\toprule
\textbf{Dataset}  & \multicolumn{9}{c}{\textbf{\disidx}} & \multicolumn{9}{c}{\textbf{\intidx}} \\
\cmidrule(lr){2-10} 
\cmidrule(lr){11-19} 
 & \MaxSum & \MaxLinDisCon & \MaxLinPol & \MaxLinDis & \MaxLinInt &  \MaxInfluence & \Random & \FJGreedy & \FJUpp & \MaxSum & \MaxLinDisCon & \MaxLinPol & \MaxLinDis & \MaxLinInt &  \MaxInfluence & \Random & \FJGreedy &\FJUpp\\
\midrule
\Netscience  & 0.81 & 0.79 & 1.21 & \textbf{1.28} & 0.44 & 0.79 & -0.12 & 1.04 &  15.84   & -0.53 & -0.54 & 0.21 & -0.13 & 0.86 & -0.53 & -0.1 & \textbf{1.1} &  16.98   \\
\WikiVote  & -1.33 & -1.32 & 1.57 & \textbf{1.77} & 0.77 & -1.28 & -0.07 & 1.12 &  15.74   & -1.47 & -1.49 & 0.35 & 0.67 & 1.0 & -1.42 & -0.08 & \textbf{1.17} &  17.84   \\
\Reed  & -0.05 & -0.09 & 2.71 & 2.59 & 0.45 & -0.08 & -0.08 & \textbf{2.74} &  44.57   & -0.39 & -0.42 & -0.0 & 0.09 & 0.69 & -0.4 & -0.08 & \textbf{0.77} &  11.49   \\
\EmailUniv  & -0.19 & -0.13 & 1.73 & \textbf{1.8} & 0.78 & -0.11 & -0.09 & 1.32 &  19.14   & -0.75 & -0.63 & 0.26 & 0.44 & \textbf{1.11} & -0.59 & -0.08 & 1.0 &  14.67   \\
\Hamster  & -0.97 & -0.93 & 1.67 & \textbf{1.87} & 0.51 & -0.81 & -0.04 & 1.45 &  22.82   & -1.14 & -1.11 & 0.12 & 0.24 & 0.67 & -1.05 & -0.07 & \textbf{0.84} &  12.85   \\
\USFCA  & -0.33 & -0.32 & 2.8 & 2.85 & 0.5 & -0.35 & -0.07 & \textbf{3.2} &  50.80   & -0.62 & -0.57 & 0.1 & 0.21 & 0.66 & -0.62 & -0.09 & \textbf{0.8} &  11.81   \\
\NipsEgo  & -5.37 & -5.35 & 0.4 & 0.55 & 0.43 & -5.35 & -0.06 & \textbf{0.71} &  10.22   & -5.33 & -5.31 & 0.38 & 0.53 & 0.83 & -5.3 & -0.06 & \textbf{1.54} &  39.06   \\
\PagesGovernment  & -0.44 & -0.44 & 2.06 & \textbf{2.26} & 0.71 & -0.56 & -0.08 & 1.83 &  29.07   & -0.81 & -0.8 & 0.18 & 0.38 & 0.75 & -0.82 & -0.08 & \textbf{0.83} &  12.67   \\
\HepPh  & -0.49 & -0.32 & 1.37 & \textbf{1.89} & 0.54 & -0.4 & -0.07 & 1.24 &  18.85   & -0.64 & -0.59 & 0.01 & 0.31 & 0.82 & -0.67 & -0.08 & \textbf{0.98} &  14.81   \\
\Anybeat  & -2.34 & -2.32 & 1.24 & \textbf{1.41} & 0.59 & -2.35 & -0.08 & 0.85 &  12.63   & -2.43 & -2.41 & 0.3 & 0.57 & 1.16 & -2.43 & -0.07 & \textbf{1.37} &  22.25   \\
\CondMat  & -0.28 & -0.21 & 1.26 & \textbf{1.77} & 0.66 & -0.27 & -0.07 & 1.15 &  17.84   & -0.68 & -0.68 & 0.03 & 0.4 & 0.81 & -0.69 & -0.07 & \textbf{0.98} &  15.40   \\
\Gplus  & -4.94 & -4.94 & \textbf{0.9} & 0.72 & 0.36 & -4.94 & -0.07 & 0.72 &  10.73   & -4.71 & -4.72 & -0.04 & 0.46 & 1.05 & -4.71 & -0.07 & \textbf{1.53} &  29.31   \\
\Brightkite  & -0.88 & -0.84 & 1.14 & \textbf{1.49} & 0.54 & -0.91 & -0.07 & - &  -   & -1.52 & -1.52 & 0.07 & 0.47 & \textbf{0.9} & -1.53 & -0.07 & - &  -   \\
\WikiTalk  & -1.43 & -1.34 & 1.66 & \textbf{1.68} & 0.61 & -1.41 & -0.09 & - &  -   & -1.94 & -1.9 & 0.38 & 0.64 & \textbf{1.06} & -1.94 & -0.09 & - &  -   \\
\bottomrule
\end{tabular}
\end{adjustbox}
\end{table*}

\begin{table*}[!htb]
\caption{Results for polarizing campaigns with
	$k = \lceil 0.5\%\cdot n\rceil$ seeds,
	where we initialized the innate opinions using the uniform distribution.
	We report the relative increase of each index in percent.}
\label{tab:backfire-2}
\centering
  \vspace{-4mm}
\begin{adjustbox}{max width=\textwidth}
\begin{tabular}{c ccccccccc ccccccccc}
\toprule
\textbf{Dataset}  & \multicolumn{9}{c}{\textbf{\disidx}} & \multicolumn{9}{c}{\textbf{\intidx}} \\
\cmidrule(lr){2-10} 
\cmidrule(lr){11-19} 
 & \MaxSum & \MaxLinDisCon & \MaxLinPol & \MaxLinDis & \MaxLinInt &  \MaxInfluence & \Random & \FJGreedy & \FJUpp & \MaxSum & \MaxLinDisCon & \MaxLinPol & \MaxLinDis & \MaxLinInt &  \MaxInfluence & \Random & \FJGreedy &\FJUpp\\
\midrule
\Netscience  & 3.64 & 5.36 & 2.31 & \textbf{7.33} & \textbf{7.33} & 7.32 & 0.68 & 1.04 &  15.84   & 4.85 & 4.88 & 0.97 & 7.41 & 7.95 & \textbf{7.96} & 0.68 & 1.1 &  16.98   \\
\WikiVote  & 2.6 & 5.79 & 11.05 & \textbf{11.09} & 10.93 & \textbf{11.09} & 0.68 & 1.12 &  15.74   & 2.26 & 6.06 & 12.36 & 12.41 & 12.6 & \textbf{12.68} & 0.67 & 1.17 &  17.84   \\
\Reed  & 5.62 & 5.55 & 6.01 & \textbf{8.89} & 8.4 & 8.39 & 0.73 & 2.74 &  44.57   & 5.98 & 5.31 & 1.99 & 7.96 & \textbf{8.49} & 8.48 & 0.71 & 0.77 &  11.49   \\
\EmailUniv  & 2.69 & 4.63 & 7.7 & \textbf{8.2} & 8.15 & 8.11 & 0.79 & 1.32 &  19.14   & 2.29 & 4.43 & 8.9 & \textbf{9.94} & 9.88 & 9.84 & 0.78 & 1.0 &  14.67   \\
\Hamster  & 3.25 & 4.96 & 7.61 & \textbf{10.08} & 10.02 & 10.06 & 0.68 & 1.45 &  22.82   & 3.1 & 5.01 & 7.49 & 11.15 & 11.42 & \textbf{11.48} & 0.65 & 0.84 &  12.85   \\
\USFCA  & 2.74 & 4.02 & 5.58 & \textbf{7.17} & 6.83 & 6.87 & 0.79 & 3.2 &  50.80   & 1.83 & 2.51 & 3.92 & 8.16 & 8.67 & \textbf{8.7} & 0.77 & 0.8 &  11.81   \\
\NipsEgo  & 37.94 & 59.44 & 59.44 & 59.44 & \textbf{59.45} & 59.44 & 0.46 & 0.71 &  10.22   & 37.32 & 58.95 & 58.93 & 58.95 & \textbf{58.96} & 58.94 & 0.46 & 1.54 &  39.06   \\
\PagesGovernment  & 3.41 & 4.61 & 7.52 & \textbf{9.28} & 7.83 & 8.48 & 0.69 & 1.83 &  29.07   & 2.48 & 3.86 & 6.05 & 9.05 & \textbf{10.09} & 9.99 & 0.7 & 0.83 &  12.67   \\
\HepPh  & 2.36 & 3.49 & 4.82 & \textbf{6.46} & 5.53 & 5.97 & 0.68 & 1.24 &  18.85   & 1.58 & 2.58 & 4.14 & 6.79 & \textbf{7.51} & 7.36 & 0.7 & 0.98 &  14.81   \\
\Anybeat  & 27.84 & 36.77 & 38.55 & \textbf{38.58} & 38.19 & 38.5 & 0.5 & 0.85 &  12.63   & 22.18 & 30.28 & 31.73 & 32.13 & \textbf{32.56} & 32.35 & 0.55 & 1.37 &  22.25   \\
\CondMat  & 3.04 & 4.7 & 5.39 & \textbf{7.31} & 6.76 & 7.05 & 0.65 & 1.15 &  17.84   & 2.71 & 4.61 & 4.77 & 7.92 & \textbf{8.46} & 8.35 & 0.65 & 0.98 &  15.40   \\
\Gplus  & 28.44 & 56.09 & 56.51 & \textbf{56.52} & 56.51 & 56.51 & 0.66 & 0.72 &  10.73   & 25.87 & 53.11 & 53.61 & 53.62 & \textbf{53.64} & 53.63 & 0.66 & 1.53 &  29.31   \\
\Brightkite  & 5.4 & 14.28 & 16.85 & \textbf{17.13} & 16.81 & 17.12 & 0.7 & - &  -   & 5.28 & 15.41 & 17.73 & 18.33 & 18.55 & \textbf{18.56} & 0.69 & - &  -   \\
\WikiTalk  & 13.15 & 25.45 & 28.31 & \textbf{28.32} & 27.97 & 28.26 & 0.74 & - &  -   & 12.09 & 23.02 & 25.35 & 25.51 & \textbf{25.69} & \textbf{25.69} & 0.74 & - &  -   \\
\bottomrule
\end{tabular}
\end{adjustbox}
\end{table*}

\spara{Further evaluation of our algorithms.}
In Tables~\ref{tab:marketing} and~\ref{tab:backfire}, we presented the algorithm
results for the \sumidx and the \polidx. Below we will add other two network
indices, namely, \disidx in Table~\ref{tab:marketing-2} and \intidx in
Table~\ref{tab:backfire-2}. We again observe that both indices only change
slightly for marketing campaigns but for polarizing campaigns with backfire the
indices can increase drastically even for baselines such as \MaxInfluence.
Again, in the setting with polarizing campaigns, our algorithms outperform
\FJGreedy. Furthermore, typically our algorithms get within a small factor of
\FJUpp and sometimes even provide larger gains in indices we study. This further
strengthens the conclusion that the information spread provides a significant
gain over the vanilla FJ model.

\spara{Additional results regarding adjusting $\epsilon$.}
Next, we evaluate how the parameter~$\varepsilon$, i.e., how much the innate
opinions are changed when they are adjusted, impacts the results of our
experiments.  In Figure~\ref{fig:epsilon}, we present the relative increase of
the \polidx and see that it increases linearly in~$\varepsilon$. Here, we picked
$k=0.5\%n$~seed vertices.

\begin{figure}[H]
	\centering
    \begin{tabular}{cc}
        \resizebox{0.48\textwidth}{0.34\textwidth}{%
			\pgfplotsset{compat=1.3}
\begin{tikzpicture}

\definecolor{darkgray176}{RGB}{176,176,176}
\definecolor{green01270}{RGB}{0,127,0}

\begin{axis}[
tick pos=both,
axis lines=left,
x grid style={darkgray176},
xmin=0.04, xmax=0.26,
xtick style={color=black},
y grid style={darkgray176},
ymin=-0.0132134415500106, ymax=0.167680041387049,
ytick style={color=black},
yticklabel style = {xshift=-1ex},
legend pos= outer north east,
legend columns=1,
legend style={draw=none},
line width = 0.50 mm, 
tick style={line width=0.50mm},
ylabel={rel. increase polarization},
xlabel={$\epsilon$},
ytick scale label code/.code={$\times 10^{-2}$}
]
\addplot [black]
table {%
0.05 -0.00150354674069208
0.1 -0.00499101050741696
0.15 -0.00193440551660201
0.2 -0.00216421979118444
0.25 -4.51681235308303e-05
};
\addplot [red, dashed]
table {%
0.05 0.00183483433532726
0.1 0.00402366164020674
0.15 0.00712873072578184
0.2 0.0043399611627054
0.25 0.003464125948958
};
\addplot [blue, dotted]
table {%
0.05 0.0170610510956553
0.1 0.034780020333204
0.15 0.0508856653639139
0.2 0.0633712041020991
0.25 0.0792006936741834
};
\addplot [green01270, dash pattern=on 1pt off 3pt on 3pt off 3pt]
table {%
0.05 0.0325130227140441
0.1 0.063892375575521
0.15 0.0973675296817644
0.2 0.124912384461718
0.25 0.159457610344456
};
\addplot [black]
table {%
0.05 -0.000152811096278735
0.1 -0.00038046441164185
0.15 -0.000957380455787917
0.2 0.00112711389692672
0.25 0.000758960116832218
};

\addlegendentry{\MaxSum}
\addlegendentry{\MaxLinInt}
\addlegendentry{\MaxLinDis}
\addlegendentry{\MaxLinPol}
\addlegendentry{\MaxLinDisCon}

\end{axis}

\end{tikzpicture}
		}&
		\resizebox{0.48\textwidth}{0.34\textwidth}{%
			\pgfplotsset{compat=1.3}
\begin{tikzpicture}

\definecolor{darkgray176}{RGB}{176,176,176}
\definecolor{green01270}{RGB}{0,127,0}

\begin{axis}[
tick pos=both,
axis lines=left,
x grid style={darkgray176},
xmin=0.04, xmax=0.26,
xtick style={color=black},
xtick = {0.05, 0.1, 0.15, 0.2, 0.25},
y grid style={darkgray176},
ymin=0.0130612900214665, ymax=0.157841545040412,
ytick style={color=black},
ytick = {0, 0.05, 0.1, 0.15},
yticklabel style = {xshift=-1ex},
legend pos= outer north east,
legend columns=1,
legend style={draw=none},
line width = 0.50 mm, 
tick style={line width=0.50mm},
ylabel={rel. increase polarization},
xlabel={$\epsilon$},
ytick scale label code/.code={$\times 10^{-2}$}
]
\addplot [black]
table {%
0.05 0.0196422107041459
0.1 0.037298196135675
0.15 0.0542923682664487
0.2 0.06261314584278
0.25 0.0786095685897592
};
\addplot [red, dashed]
table {%
0.05 0.025468871427417
0.1 0.0455394163248411
0.15 0.0645467184605504
0.2 0.078986442058507
0.25 0.092072625193864
};
\addplot [blue, dotted]
table {%
0.05 0.0317149201996606
0.1 0.0590459785490715
0.15 0.082347689311136
0.2 0.102048666437045
0.25 0.121172537738778
};
\addplot [green01270, dash pattern=on 1pt off 3pt on 3pt off 3pt]
table {%
0.05 0.0370309595182637
0.1 0.0712260242358546
0.15 0.102590709363027
0.2 0.130449496415652
0.25 0.151260624357733
};
\addplot [black]
table {%
0.05 0.0257338499496445
0.1 0.0471401689373272
0.15 0.0692092276787279
0.2 0.0884235948084136
0.25 0.0994291053086571
};

\addlegendentry{\MaxSum}
\addlegendentry{\MaxLinInt}
\addlegendentry{\MaxLinDis}
\addlegendentry{\MaxLinPol}
\addlegendentry{\MaxLinDisCon}

\end{axis}

\end{tikzpicture}
		}\\
		(a)&(b)
	\end{tabular}
\caption{Analysis of varying $\epsilon$ on dataset \HepPh. 
	Plot~(a) presents the relative increase of polarization index on marketing
	campaigns and Plot~(b) presents the relative increase of polarization index
	on polarized campaigns.  In both cases we used $k=0.5\%n$.}
\label{fig:epsilon}
\end{figure}
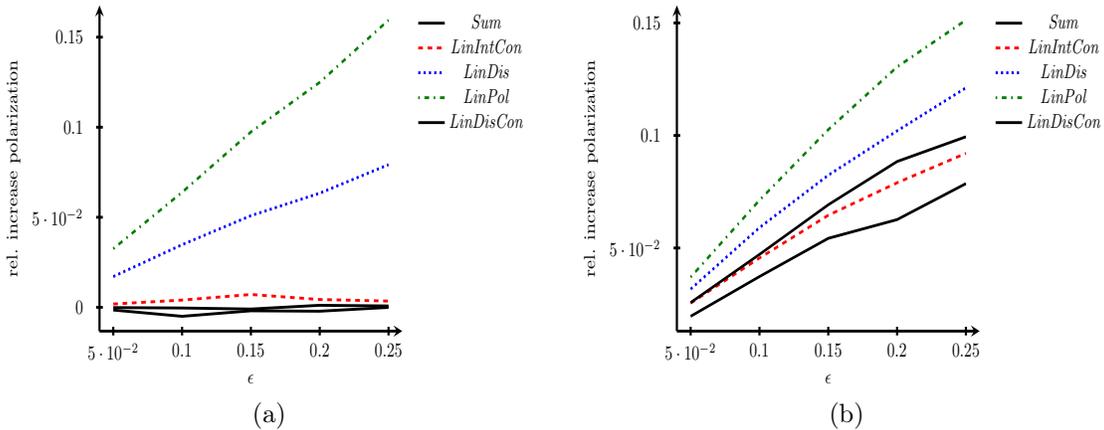

\spara{Initializing the innate opinions with the exponential distribution.}
While in the main text we used the uniform distribution to initialize the
innate opinions, now we initialize the innate opinions using the exponential
distribution. More concretely, we proceed as in Xu et al.~\cite{xu2021fast}:
we sample $n$~numbers $a_1,\dots,a_n\geq 1$ from the ditribution with density
$e^{1-x}$ and we scale it into the interval $[0,1]$ by setting
$\begop_u = \frac{a_u}{\max_u a_u}$.
The results are listed in Table~\ref{tab:exp-marketing-1}, Table~\ref{tab:exp-marketing-2},
Table~\ref{tab:exp-backfire-1}, Table~\ref{tab:exp-backfire-2}.

For marketing campaigns, we observe that the \sumidx can be increased more than
in the setting with uniform opinions. Unlike in the uniform opinions setting,
the \polidx can be increased quite significantly even for marketing campaigns;
however, similar to the uniform opinions setting, the increase of the \polidx is
typically higher for polarizing campaigns than for marketing campaigns.

\begin{table*}[ht!]
    \caption{Results for marketing campaigns with
        $k = \lceil 0.5\%\cdot n\rceil$ seeds,
		where we initialized the innate opinions using the exponential distribution.
        We report the relative increase of each index in percent.
        }
    \label{tab:exp-marketing-1}
    \centering
      \vspace{-4mm}
    \begin{adjustbox}{max width=\textwidth}
    \begin{tabular}{c cccccccc ccccccccc}
    \toprule
    \textbf{Dataset}  & \multicolumn{8}{c}{\textbf{\sumidx}} & \multicolumn{9}{c}{\textbf{\polidx}} \\
    \cmidrule(lr){2-9} 
    \cmidrule(lr){10-18} 
      &  \MaxSum  &  \MaxLinDisCon  &  \MaxLinPol  &  \MaxLinDis  &  \MaxLinInt  &   \MaxInfluence  &  \Random  &  \FJGreedy  &  \MaxSum  &  \MaxLinDisCon  &  \MaxLinPol  &  \MaxLinDis  &  \MaxLinInt  &  \MaxInfluence  &  \Random  & \FJGreedy  & \FJUpp \\
    \midrule
\Netscience  & 6.49 & 6.59 & 2.06 & 0.25 & 0.26 & \textbf{6.65} & 0.48 & 0.25 & 8.91 & 9.17 & \textbf{38.83} & 5.38 & 2.52 & 8.98 & 0.13 & 7.29 & 86.98  \\
\WikiVote  & \textbf{7.76} & 7.75 & 0.88 & 0.8 & 0.99 & 7.74 & 0.49 & 0.2 & 2.94 & 2.81 & \textbf{14.44} & 12.13 & 0.62 & 3.05 & -0.18 & 8.58 & 55.66  \\
\Reed  & 7.05 & \textbf{7.06} & 0.71 & 0.22 & 0.83 & 7.04 & 0.6 & 0.22 & 1.4 & 1.45 & 35.55 & 31.35 & 0.24 & 1.69 & -0.04 & \textbf{37.93} & 523.59  \\
\EmailUniv  & \textbf{6.65} & 6.56 & 0.57 & 0.59 & 2.5 & 6.64 & 0.59 & 0.21 & -0.33 & -0.37 & \textbf{12.31} & 9.35 & 0.53 & 0.35 & 0.01 & 11.44 & 68.31  \\
\Hamster  & 10.27 & 10.27 & 1.69 & 1.53 & 2.61 & \textbf{10.33} & 0.7 & 0.24 & 9.87 & 9.33 & \textbf{30.42} & 22.16 & 5.26 & 9.82 & 0.06 & 20.54 & 189.96  \\
\USFCA  & 6.04 & 5.96 & 0.5 & 0.63 & 0.87 & \textbf{6.05} & 0.53 & 0.2 & 0.3 & 0.66 & \textbf{31.83} & 26.1 & 1.6 & 0.44 & -0.07 & 29.95 & 396.04  \\
\NipsEgo  & \textbf{54.35} & \textbf{54.35} & 23.57 & 2.73 & 30.95 & \textbf{54.35} & 0.45 & 0.28 & -0.77 & -0.8 & \textbf{106.59} & 20.44 & 62.12 & -0.85 & 0.14 & 6.02 & 50.36  \\
\PagesGovernment  & \textbf{8.4} & 8.34 & 2.3 & 1.39 & 1.2 & \textbf{8.4} & 0.63 & 0.23 & 11.86 & 13.21 & \textbf{29.3} & 17.34 & 4.07 & 12.18 & 0.01 & 20.98 & 265.14  \\
\HepPh  & 7.94 & 7.87 & 1.72 & 2.21 & 1.43 & \textbf{7.96} & 0.81 & 0.31 & 11.03 & 11.11 & \textbf{39.53} & 22.75 & 5.05 & 11.84 & 0.11 & 20.14 & 202.7  \\
\Anybeat  & 37.35 & 37.34 & 21.51 & 14.98 & 3.91 & \textbf{37.39} & 0.58 & 0.3 & 28.39 & 28.32 & \textbf{53.95} & 22.19 & 5.5 & 28.37 & -0.0 & 11.02 & 89.58  \\
\CondMat  & 7.92 & 7.86 & 1.77 & 1.7 & 1.06 & \textbf{7.95} & 0.74 & 0.27 & 11.38 & 12.83 & \textbf{35.5} & 17.06 & 3.91 & 12.23 & -0.03 & 15.87 & 149.43  \\
\Gplus  & 59.58 & \textbf{59.59} & 31.24 & 14.03 & 16.9 & 59.58 & 0.94 & 0.32 & 1.08 & 1.07 & \textbf{88.71} & 59.38 & 59.24 & 1.09 & -0.01 & 8.97 & 79.93  \\
\Brightkite  & 19.83 & 19.79 & 4.97 & 5.06 & 2.86 & \textbf{19.84} & 0.93 &  - & 17.2 & 17.13 & \textbf{26.8} & 16.88 & 5.54 & 17.1 & 0.06 &  - &  -  \\
\WikiTalk  & \textbf{31.54} & 31.53 & 24.23 & 0.49 & 0.49 & \textbf{31.54} & 1.2 &  - & 31.41 & 32.19 & \textbf{47.26} & 9.05 & 7.45 & 31.54 & 0.27 &  - &  -  \\
    \bottomrule
    \end{tabular}
    \end{adjustbox}
    \end{table*}

\begin{table*}[!htb]
    \caption{Results for marketing campaigns with
        $k = \lceil 0.5\%\cdot n\rceil$ seeds,
		where we initialized the innate opinions using the exponential distribution.
        We report the relative increase of each index in percent.
	}
    \label{tab:exp-marketing-2}
    \centering
      \vspace{-4mm}
    \begin{adjustbox}{max width=\textwidth}
    \begin{tabular}{c ccccccccc ccccccccc}
    \toprule
    \textbf{Dataset}  & \multicolumn{9}{c}{\textbf{\disidx}} & \multicolumn{9}{c}{\textbf{\intidx}} \\
    \cmidrule(lr){2-10} 
    \cmidrule(lr){11-19} 
     & \MaxSum & \MaxLinDisCon & \MaxLinPol & \MaxLinDis & \MaxLinInt &  \MaxInfluence & \Random & \FJGreedy & \FJUpp & \MaxSum & \MaxLinDisCon & \MaxLinPol & \MaxLinDis & \MaxLinInt &  \MaxInfluence & \Random & \FJGreedy &\FJUpp\\
    \midrule
\Netscience  & 0.58 & 0.61 & \textbf{5.38} & 5.24 & 3.1 & 0.51 & -0.03 & 5.24 & 53.49 & -1.19 & -1.21 & -1.97 & 1.92 & \textbf{4.27} & -1.2 & -0.06 & 7.69 & 43.62  \\
\WikiVote  & 1.23 & 1.18 & 7.49 & \textbf{7.64} & 1.72 & 1.29 & -0.12 & 5.79 & 40.77 & 0.42 & 0.44 & 2.16 & 2.48 & \textbf{4.47} & 0.34 & -0.12 & 5.85 & 45.9  \\
\Reed  & 1.55 & 1.38 & 9.3 & 10.3 & 1.46 & 1.52 & 0.06 & \textbf{11.82} & 124.5 & 1.14 & 1.0 & 0.73 & 1.17 & \textbf{3.44} & 1.02 & 0.09 & 4.7 & 30.48  \\
\EmailUniv  & -0.37 & -0.49 & 4.41 & 5.89 & 2.58 & 0.1 & 0.02 & \textbf{6.15} & 39.9 & 0.07 & -0.15 & 0.78 & 2.08 & \textbf{4.66} & 0.34 & 0.05 & 4.82 & 37.14  \\
\Hamster  & 3.9 & 4.27 & 8.49 & \textbf{9.7} & 4.57 & 3.9 & -0.0 & 8.7 & 69.09 & 1.54 & 1.78 & 1.75 & 2.69 & \textbf{4.43} & 1.54 & -0.06 & 5.89 & 42.2  \\
\USFCA  & 0.01 & 0.2 & 10.18 & \textbf{11.36} & 2.01 & 0.03 & 0.0 & 11.18 & 123.72 & -0.03 & 0.01 & 1.03 & 1.63 & \textbf{3.14} & 0.05 & -0.02 & 4.26 & 29.82  \\
\NipsEgo  & -0.4 & -0.47 & 4.25 & \textbf{7.45} & 4.0 & -0.54 & 0.03 & 5.95 & 34.55 & -0.11 & -0.24 & 2.81 & \textbf{5.62} & 3.07 & -0.39 & 0.03 & 6.7 & 132.74  \\
\PagesGovernment  & 3.58 & 4.26 & \textbf{9.43} & 9.22 & 3.41 & 3.84 & 0.01 & 8.88 & 83.73 & 1.05 & 1.16 & 1.58 & 2.9 & \textbf{4.22} & 1.12 & 0.01 & 5.46 & 38.0  \\
\HepPh  & 5.54 & 6.13 & 10.95 & \textbf{12.25} & 5.33 & 5.93 & 0.02 & 10.26 & 77.54 & 2.91 & 3.09 & 1.9 & 3.84 & \textbf{7.58} & 2.88 & -0.01 & 8.59 & 59.45  \\
\Anybeat  & 15.58 & 15.5 & \textbf{25.8} & 14.71 & 6.36 & 15.5 & 0.0 & 8.4 & 48.07 & 7.15 & 7.1 & 8.45 & 7.1 & \textbf{9.79} & 7.22 & 0.01 & 10.23 & 83.03  \\
\CondMat  & 5.03 & 5.04 & 8.38 & \textbf{9.25} & 4.64 & 5.02 & 0.0 & 7.91 & 61.22 & 2.52 & 2.42 & 1.29 & 3.68 & \textbf{5.96} & 2.48 & 0.01 & 7.15 & 54.05  \\
\Gplus  & 1.29 & 1.29 & \textbf{21.78} & 20.7 & 20.36 & 1.31 & 0.01 & 7.77 & 42.41 & 1.97 & 1.96 & 10.48 & 11.71 & \textbf{15.03} & 2.0 & 0.01 & 10.24 & 116.28  \\
\Brightkite  & 10.49 & 10.6 & 10.44 & \textbf{11.34} & 6.31 & 10.44 & 0.03 &  \_ &  \_ & 5.93 & 5.97 & 2.82 & 5.01 & \textbf{9.91} & 5.91 & -0.0 &  - &  -  \\
\WikiTalk  & 4.82 & 4.92 & \textbf{12.07} & 7.87 & 7.71 & 4.85 & -0.04 &  \_ &  \_ & 2.62 & 2.72 & 4.61 & 5.99 & \textbf{7.01} & 2.7 & -0.13 &  - &  -  \\
    \bottomrule
    \end{tabular}
    \end{adjustbox}
    \end{table*}

\begin{table*}[ht!]
    \caption{Results for polarizing campaigns with
        $k = \lceil 0.5\%\cdot n\rceil$ seeds,
		where we initialized the innate opinions using the exponential distribution.
        We report the relative increase of each index in percent.}
    \label{tab:exp-backfire-1}
    \centering
      \vspace{-4mm}
    \begin{adjustbox}{max width=\textwidth}
    \begin{tabular}{c ccccccccc ccccccccc}
    \toprule
    \textbf{Dataset}  & \multicolumn{8}{c}{\textbf{\sumidx}} & \multicolumn{9}{c}{\textbf{\polidx}} \\
    \cmidrule(lr){2-9} 
    \cmidrule(lr){10-18} 
      &  \MaxSum  &  \MaxLinDisCon  &  \MaxLinPol  &  \MaxLinDis  &  \MaxLinInt  &   \MaxInfluence  &  \Random  &  \FJGreedy  &  \MaxSum  &  \MaxLinDisCon  &  \MaxLinPol  &  \MaxLinDis  &  \MaxLinInt  &  \MaxInfluence  &  \Random  & \FJGreedy  & \FJUpp \\
    \midrule
\Netscience  & \textbf{0.25} & \textbf{0.25} & -4.41 & -0.87 & -0.78 & -6.38 & -0.69 & \textbf{0.25} & 6.15 & 5.23 & \textbf{55.56} & 2.36 & 0.22 & 37.64 & 1.09 & 7.29 & 86.98  \\
\WikiVote  & \textbf{0.22} & 0.2 & -4.29 & -5.89 & -5.6 & -6.61 & -0.47 & 0.2 & 4.14 & 4.54 & \textbf{20.63} & 18.64 & 12.48 & 16.93 & 0.97 & 8.58 & 55.66  \\
\Reed  & 0.16 & \textbf{0.22} & -1.45 & -4.22 & -3.8 & -6.54 & -0.63 & \textbf{0.22} & 13.78 & 31.35 & \textbf{48.24} & 40.63 & 4.75 & 15.32 & 0.98 & 37.93 & 523.59  \\
\EmailUniv  & \textbf{0.21} & 0.2 & -3.39 & -5.22 & -5.19 & -5.98 & -0.51 & \textbf{0.21} & 2.2 & 6.87 & \textbf{13.28} & 11.62 & 8.31 & 9.79 & 0.8 & 11.44 & 68.31  \\
\Hamster  & 0.16 & 0.22 & -3.99 & -4.93 & -5.5 & -9.94 & -0.67 & \textbf{0.24} & 6.33 & 9.34 & \textbf{26.65} & 17.36 & 4.95 & 8.54 & 0.51 & 20.54 & 189.96  \\
\USFCA  & 0.14 & 0.19 & -0.91 & -2.8 & -4.3 & -5.39 & -0.48 & \textbf{0.2} & 6.65 & 13.99 & 28.37 & 23.41 & 9.95 & 8.48 & 0.71 & \textbf{29.95} & 396.04  \\
\NipsEgo  & \textbf{0.28} & \textbf{0.28} & -36.7 & -50.13 & -52.0 & -52.77 & -0.87 & \textbf{0.28} & 4.92 & 5.46 & \textbf{96.9} & 42.55 & 31.68 & 26.4 & 0.4 & 6.02 & 50.36  \\
\PagesGovernment  & 0.21 & 0.21 & -1.75 & -4.31 & -4.71 & -7.91 & -0.57 & \textbf{0.23} & 4.83 & 7.74 & \textbf{39.47} & 19.14 & 5.08 & 9.72 & 0.49 & 20.98 & 265.14  \\
\HepPh  & 0.17 & 0.18 & -1.8 & -2.87 & -2.53 & -7.72 & -0.81 & \textbf{0.31} & 5.94 & 6.19 & \textbf{32.43} & 18.04 & 2.7 & 10.21 & 0.18 & 20.14 & 202.7  \\
\Anybeat  & 0.28 & 0.29 & -30.31 & -30.03 & -29.7 & -36.32 & -1.01 & \textbf{0.3} & 6.46 & 7.93 & \textbf{55.25} & 50.77 & 48.3 & 39.56 & 0.27 & 11.02 & 89.58  \\
\CondMat  & 0.24 & 0.24 & -2.16 & -2.53 & -3.23 & -7.68 & -0.69 & \textbf{0.27} & 6.59 & 7.72 & \textbf{28.5} & 12.83 & 4.5 & 10.99 & 0.36 & 15.87 & 149.43  \\
\Gplus  & 0.3 & 0.31 & -26.97 & -41.32 & -38.25 & -57.37 & -0.79 & \textbf{0.32} & 7.53 & 8.02 & \textbf{120.0} & 74.17 & 70.28 & 16.68 & 0.2 & 8.97 & 79.93  \\
\Brightkite  & \textbf{0.24} & 0.23 & -9.12 & -11.47 & -10.31 & -19.24 & -0.84 &  \_ & 7.8 & 7.82 & \textbf{25.63} & 20.59 & 12.29 & 19.4 & 0.17 &  - &  -  \\
\WikiTalk  & \textbf{0.38} & \textbf{0.38} & -22.0 & -21.52 & -21.37 & -29.75 & -0.66 &  \_ & 7.23 & 8.11 & \textbf{60.33} & 51.67 & 51.64 & 47.0 & 0.76 &  - &  -  \\
    \bottomrule
    \end{tabular}
    \end{adjustbox}
    \end{table*}

\begin{table*}[!htb]
    \caption{Results for polarizing campaigns with
        $k = \lceil 0.5\%\cdot n\rceil$ seeds,
		where we initialized the innate opinions using the exponential distribution.
        We report the relative increase of each index in percent.
        }
    \label{tab:exp-backfire-2}
    \centering
      \vspace{-4mm}
    \begin{adjustbox}{max width=\textwidth}
    \begin{tabular}{c ccccccccc ccccccccc}
    \toprule
    \textbf{Dataset}  & \multicolumn{9}{c}{\textbf{\disidx}} & \multicolumn{9}{c}{\textbf{\intidx}} \\
    \cmidrule(lr){2-10} 
    \cmidrule(lr){11-19} 
     & \MaxSum & \MaxLinDisCon & \MaxLinPol & \MaxLinDis & \MaxLinInt &  \MaxInfluence & \Random & \FJGreedy & \FJUpp & \MaxSum & \MaxLinDisCon & \MaxLinPol & \MaxLinDis & \MaxLinInt &  \MaxInfluence & \Random & \FJGreedy &\FJUpp\\
    \midrule
\Netscience  & 0.58 & 0.61 & \textbf{5.38} & 5.24 & 3.1 & 0.51 & -0.03 & 5.24 & 53.49 & -1.19 & -1.21 & -1.97 & 1.92 & \textbf{4.27} & -1.2 & -0.06 & 7.69 & 43.62  \\
\WikiVote  & 1.23 & 1.18 & 7.49 & \textbf{7.64} & 1.72 & 1.29 & -0.12 & 5.79 & 40.77 & 0.42 & 0.44 & 2.16 & 2.48 & \textbf{4.47} & 0.34 & -0.12 & 5.85 & 45.9  \\
\Reed  & 1.55 & 1.38 & 9.3 & 10.3 & 1.46 & 1.52 & 0.06 & \textbf{11.82} & 124.5 & 1.14 & 1.0 & 0.73 & 1.17 & \textbf{3.44} & 1.02 & 0.09 & 4.7 & 30.48  \\
\EmailUniv  & -0.37 & -0.49 & 4.41 & 5.89 & 2.58 & 0.1 & 0.02 & \textbf{6.15} & 39.9 & 0.07 & -0.15 & 0.78 & 2.08 & \textbf{4.66} & 0.34 & 0.05 & 4.82 & 37.14  \\
\Hamster  & 3.9 & 4.27 & 8.49 & \textbf{9.7} & 4.57 & 3.9 & -0.0 & 8.7 & 69.09 & 1.54 & 1.78 & 1.75 & 2.69 & \textbf{4.43} & 1.54 & -0.06 & 5.89 & 42.2  \\
\USFCA  & 0.01 & 0.2 & 10.18 & \textbf{11.36} & 2.01 & 0.03 & 0.0 & 11.18 & 123.72 & -0.03 & 0.01 & 1.03 & 1.63 & \textbf{3.14} & 0.05 & -0.02 & 4.26 & 29.82  \\
\NipsEgo  & -0.4 & -0.47 & 4.25 & \textbf{7.45} & 4.0 & -0.54 & 0.03 & 5.95 & 34.55 & -0.11 & -0.24 & 2.81 & \textbf{5.62} & 3.07 & -0.39 & 0.03 & 6.7 & 132.74  \\
\PagesGovernment  & 3.58 & 4.26 & \textbf{9.43} & 9.22 & 3.41 & 3.84 & 0.01 & 8.88 & 83.73 & 1.05 & 1.16 & 1.58 & 2.9 & \textbf{4.22} & 1.12 & 0.01 & 5.46 & 38.0  \\
\HepPh  & 5.54 & 6.13 & 10.95 & \textbf{12.25} & 5.33 & 5.93 & 0.02 & 10.26 & 77.54 & 2.91 & 3.09 & 1.9 & 3.84 & \textbf{7.58} & 2.88 & -0.01 & 8.59 & 59.45  \\
\Anybeat  & 15.58 & 15.5 & \textbf{25.8} & 14.71 & 6.36 & 15.5 & 0.0 & 8.4 & 48.07 & 7.15 & 7.1 & 8.45 & 7.1 & \textbf{9.79} & 7.22 & 0.01 & 10.23 & 83.03  \\
\CondMat  & 5.03 & 5.04 & 8.38 & \textbf{9.25} & 4.64 & 5.02 & 0.0 & 7.91 & 61.22 & 2.52 & 2.42 & 1.29 & 3.68 & \textbf{5.96} & 2.48 & 0.01 & 7.15 & 54.05  \\
\Gplus  & 1.29 & 1.29 & \textbf{21.78} & 20.7 & 20.36 & 1.31 & 0.01 & 7.77 & 42.41 & 1.97 & 1.96 & 10.48 & 11.71 & \textbf{15.03} & 2.0 & 0.01 & 10.24 & 116.28  \\
\Brightkite  & 10.49 & 10.6 & 10.44 & \textbf{11.34} & 6.31 & 10.44 & 0.03 &  \_ &  \_ & 5.93 & 5.97 & 2.82 & 5.01 & \textbf{9.91} & 5.91 & -0.0 &  - &  -  \\
\WikiTalk  & 4.82 & 4.92 & \textbf{12.07} & 7.87 & 7.71 & 4.85 & -0.04 &  \_ &  \_ & 2.62 & 2.72 & 4.61 & 5.99 & \textbf{7.01} & 2.7 & -0.13 &  - &  -  \\
    \bottomrule
    \end{tabular}
    \end{adjustbox}
    \end{table*}

\spara{Performance of the sandwich method.}
Next, we present our experiments for the sandwich algorithm from
Sec.~\ref{sec:dis-con} for maximizing the \disconidx with marketing campaigns.
We vary the number of seed nodes $k=1,3,5$. We focus on small datasets since
\MaxDis has large running times, as it also takes into account the quadratic
terms. We denote the lower and upper bounds from Sec.~\ref{sec:dis-con} by
\LowDisCon and \UppDisCon, respectively.

Figures~\ref{fig:sandwich}(a) and~\ref{fig:sandwich}(b) present the results for
\Convote and for \Netscience, respectively.  We observe that \UppDisCon and
\LowDisCon provide similar results to \SandDisCon. This indicates that the
data-dependent approximation ratios that we derived in
Theorem~\ref{theorem:sand-lu} are fairly tight in practice.

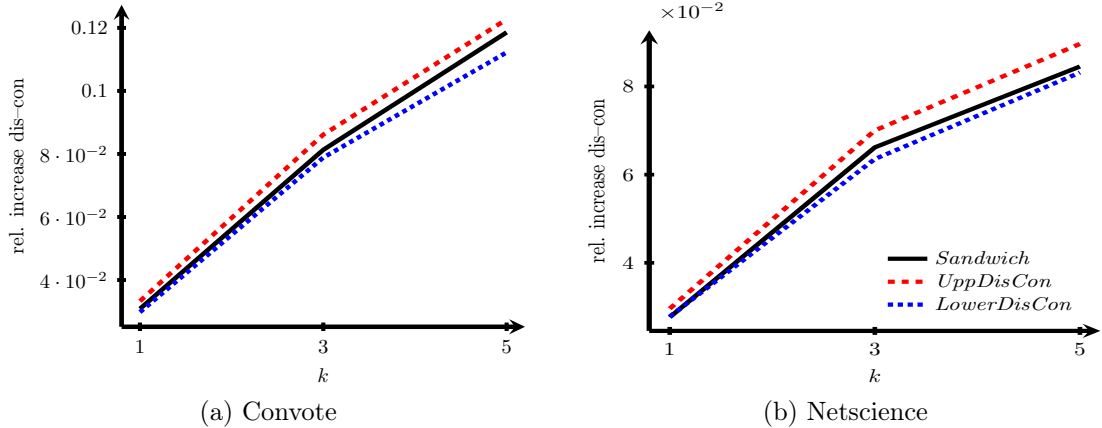
\begin{figure}[t!]
	\centering
    \begin{tabular}{cc}
        \resizebox{0.48\textwidth}{0.34\textwidth}{%
	\pgfplotsset{compat=1.3}
\begin{tikzpicture}

\begin{axis}[
legend cell align={left},
legend style={
  fill opacity=0.8,
  draw opacity=1,
  text opacity=1,
  at={(0.03,0.97)},
  anchor=north west,
  draw=none
},
tick pos=both,
axis lines=left, xtick=\empty,
x grid style={white!69.0196078431373!black},
xmin=0.8, xmax=5.2,
xtick style={color=black},
y grid style={white!69.0196078431373!black},
ymin=0.0251678234807635, ymax=0.127350614006021,
ytick style={color=black},
ylabel={rel. increase dis--con},
xlabel={$k$},
xtick={1, 3, 5},
xticklabels={
  \(\displaystyle {1}\),
  \(\displaystyle {3}\),
  \(\displaystyle {5}\)
},
legend pos= south east,
legend style={draw=none},
line width = 0.80 mm, 
tick style={line width=0.70mm},
]
\addplot [black]
table {%
1 0.0308574186059966
3 0.0813089467594634
5 0.118576472208525
};
\addplot [red, dashed]
table {%
1 0.0332943292510187
3 0.0861334627317464
5 0.122705941709419
};
\addplot [blue, dotted]
table {%
1 0.0298124957773662
3 0.0788504647853582
5 0.112227670833381
};
\end{axis}

\end{tikzpicture}
	}&
	\resizebox{0.48\textwidth}{0.34\textwidth}{%
	\pgfplotsset{compat=1.3}
\begin{tikzpicture}

\begin{axis}[
legend cell align={left},
legend style={
  fill opacity=0.8,
  draw opacity=1,
  text opacity=1,
  at={(0.03,0.97)},
  anchor=north west,
  draw=none
},
tick pos=both,
axis lines=left, xtick=\empty,
x grid style={white!69.0196078431373!black},
xmin=0.8, xmax=5.2,
xtick style={color=black},
y grid style={white!69.0196078431373!black},
ymin=0.0245612837862375, ymax=0.0927483005538264,
ytick style={color=black},
ylabel={rel. increase dis--con},
xlabel={$k$},
xtick={1, 3, 5},
xticklabels={
  \(\displaystyle {1}\),
  \(\displaystyle {3}\),
  \(\displaystyle {5}\)
},
legend pos= south east,
legend style={draw=none},
line width = 0.80 mm, 
tick style={line width=0.70mm},
ytick scale label code/.code={$\times 10^{-2}$},
]
\addplot [black]
table {%
1 0.0276606936393097
3 0.0661845429014457
5 0.0845205978361526
};
\addlegendentry{\SandDisCon}
\addplot [red, dashed]
table {%
1 0.0296071278555435
3 0.0700298981155137
5 0.0896488907007542
};
\addlegendentry{\UppDisCon}
\addplot [blue, dotted]
table {%
1 0.0277975018056531
3 0.0635403354086625
5 0.0831464016516068
};
\addlegendentry{\LowDisCon}
\end{axis}

\end{tikzpicture}
	}\\
	(a) \Convote & (b) \Netscience
\end{tabular}
\caption{Results of the sandwich method
	for maximizing the \disconidx with marketing campaigns.
	We present the relative increases of the \disconidx for $k=1,3,5$.}
\label{fig:sandwich}
\end{figure}

\end{document}